\newtheorem{observation}{Observation}
\newtheorem{rr}{Reduction Rule}
\def \calc {{\cal C}}
\def \calp {{\cal P}}
\def \calf {{\cal F}}
\def \calu {{\cal U}}
\def \calt {{\cal T}}
\def \hs {{{{\textsc{Hitting Set}}}}}
\def \tms {{{{\textsc{Terminal Monitoring Set}}}}} 
\def \tmss {{{{\textsc{TMS}}}}}
\def \atms {{{{\textsc{$\alpha$-Relaxed Terminal Monitoring Set}}}}} 
\def \atmss {{{{\textsc{$\alpha$-RTMS}}}}} 
\def \vc {{{{\textsc{Vertex Cover}}}}} 
\def \mcis {{\textsc{Multi Color Independent Set}}}
\def \hsg {{\textsc{Hitting Subgraphs in a Graph}}}
\def \hsgs {{\textsc{HSG}}}
\def \hpg {{\textsc{Hitting Paths in a Graph}}}
\def \hpgs {{\textsc{HPG}}}
\def \hpfb {{\textsc{Hitting Paths in a Flower with Budgets}}}
\def \hpfbs {{\textsc{HPFB}}}
\spnewtheorem{claimn}{Claim}{\bfseries}{\itshape}{\rmfamily}
\begin{document}

%\title{Monitoring Network Terminals via Shortest Paths}
\title{The Parameterized Complexity of Terminal Monitoring Set}
\author{N. R. Aravind and Roopam Saxena}
\date{}

\institute{Department of Computer Science and Engineering\\ IIT Hyderabad, Hyderabad, India\\
\email{aravind@cse.iith.ac.in,cs18resch11004@iith.ac.in}
}

\maketitle

\begin{abstract}
In {\tms} ({\tmss}), the input is an undirected graph $G=(V,E)$, together with a collection $T$ of terminal pairs and the goal is to find a subset $S$ of minimum size that hits a shortest path between every pair of terminals.
We show that this problem is W[2]-hard with respect to solution size. On the positive side, we show that {\tmss} is fixed parameter tractable with respect to solution size plus distance to cluster, solution size plus neighborhood diversity, and feedback edge number. For the weighted version of the problem, we obtain a FPT algorithm with respect to vertex cover number, and for a relaxed version of the problem, we show that it is W[1]-hard with respect to solution size plus feedback vertex number.
\end{abstract}
\keywords{monitoring set, hitting set, hub location, parameterized complexity, fixed parameter tractability}
\section{Introduction}

\subsection{Problem Definition}
Consider a communication network and a set of pairs of nodes, say 
$\calt=\{\{u_1,v_1\},\\\{u_2,v_2\},\dots \{u_m,v_m\}\}$ such that every pair in $\calt$ is communicating through a shortest path between them. We want to monitor this communicating data while deploying monitoring devices at minimum number of nodes in the network. Consider a scenario where we want to achieve an incremental deployment of a software defined network over a legacy network by deploying costly smart switches at only a few locations initially.
%, and hence making it sufficient to intercept data passing between each communicating pair via a shortest path between them.
\paragraph{}

Motivated by the above scenarios, we formulate the terminal monitoring set problem as follows. 

\begin{tcolorbox}[colback=white]
{
{\tms} ({\tmss}): \newline
\textit{Input:} An instance $I$ = $(G,\calt,k)$, where $G=(V,E)$ is an undirected graph, $\calt=\{\{u_1,v_1\},\ldots,\{u_l,v_l\}\}$ where $u_i,v_i \in V$ and $k\in \mathbb{N}$.\newline
\textit{Output:} YES, if $G$ contains a set $S\subseteq V$ of size at most $k$ such that for every $\{u,v\}\in \calt$, there exists a $w \in S$ such that  $d(u,w)+d(w,v)=d(u,v)$ ; NO otherwise.
}
\end{tcolorbox}
\paragraph{}

We say a set $S\subseteq V(G)$ is a terminal monitoring set (regardless of its size) for $\calt$ if for every $\{u,v\}\in \calt$, there exists a $w \in S$ such that  $d(u,w)+d(w,v)=d(u,v)$.
We remark that a vertex can belong to multiple pairs in the list $\calt$, but we assume without loss of generality that no terminal pair appears twice. Further, we assume distance between every terminal pair in $\calt$ is finite (they are reachable by each other, hence belong to a same connected component).
\paragraph{}

While there are similar problems in the literature, some of which we will discuss in the next section, to the best of our knowledge, the optimization equivalent of {\tms} has not been studied before. 

\subsection{Related Work}

The closest related problem to {\tms} in the literature is the $(k,r)$-center problem, where we are given an undirected graph and it is asked if there exists a vertex set $S\subseteq V(G)$ such that $|S|\leq k$  and for every vertex $v\in V(G)\setminus S$ there exists a vertex $u\in S$ such that distance between $u$ and $v$ is at most $r$. %The problem can be considered as generalization of minimum dominating set, where a vertex $u$ dominates vertex $v$ if the distance between them is at most $r$. 
Optimization of $r$ for a fixed $k$ is studied in \cite{ChechikP15,HochbaumS86,KhullerS00,Krumke95,DBLP:journals/jal/PanigrahyV98}, and optimization of $k$ for fixed $r$ is studied in \cite{Bar-IlanKP93,BrandstadtD98,CoelhoMW15,DBLP:conf/iwpec/LokshtanovMPRS13}. 
Recently Katsikarelis, Lampis, Paschos \cite{KatsikarelisLP19} studied parameterized complexity of $(k,r)$-center with respect to various structural parameters. Benedito, Melo, and Pedrosa \cite{BeneditoMP22} studied a problem related to $(k,r)$-center, under the name of Multiple Allocation $k$-Hub Center, which is also a closely related problem to {\tms}.%, where the input is an edge weighted graph, a sets of clients $C$, a set of hub locations $H$, the vertices in the graphs are $V(G)= C\cup H$, a set of demands $D\subseteq C^2$ and a positive integer $k$. The solution to this problem is a set of hubs $h\subseteq H$ such that every demand $(a,b)$ is satisfied by a path starting at $a$ going through a vertex in $H$ and ending at $b$. The objective is to minimize the longest length of such path. The formulation of our problem is similar but we require  the path to be a shortest path between the two endpoints.
\paragraph{}

Another related problem is hub location where packets must travel from each source to its corresponding destination via a small number of hubs. Surveys of hub location can be found in \cite{AlumurK08} and \cite{FarahaniHAN13}.
%Seen in this framework, the TMS problem aims to find a solution that is optimal or nearly optimal for every packet. 
%We believe that this can be a useful model capturing real-world requirements.
\paragraph{}

The problem {\hs} is as follows, we are given a universe $\cal U$, and a family $\cal F$ of subsets of $\cal U$, and the question is to decide if there exists a set $S\subseteq \cal U$ of size at most $k$ such that $S$ intersects with every set in $\cal F$. Jansen in \cite{Jansen17} investigated a special case of {\hs} and called it {\hpg} where a graph $G$ and a collection $\calp$ of simple paths in $G$ is given, and it is to decide if there exist a vertex set of size at most $k$ in $G$ that intersects with every path in $\calp$. Jansen \cite{Jansen17} showed that {\hpg} is FPT parameterized by feedback edge number of the input graph, complementing this result Jansen also showed that {\hpg} is para NP-hard for the parameter feedback vertex number and treewidth of the input graph. 

\paragraph{}

{\tms} can be considered as a special case of {\hs}, where the universe $\cal U$ must be a vertex set of an input graph $G$, and every set in family $\cal F$ must be a union of vertices of all the shortest paths between a terminal pairs (vertex pairs). And hence, {\tms} can be formulated as a {\hs}, and thus it is natural that results on {\hs} are relevant to this work.
In particular, the results of \cite{Abu-Khzam10} and \cite{FlumG06} that use the sunflower lemma for set systems of bounded size. We use similar ideas in some of our algorithmic results.

%Structural parameters such as distance to cluster, neighborhood diversity and feedback edge number have been considered for a number of problems, see: \cite{BodlaenderJK13}, \cite{HuffnerKMN10}, \cite{KnopKMT17}, \cite{LuoMN21}.

\subsection{Our Results}

\begin{itemize}
    \item Hardness Results
    \begin{enumerate}[a]
        \item {\tms} is NP-hard.
        \item {\tms} is W[2]-hard with respect to solution size.
    \end{enumerate}
   
\end{itemize}

On the positive side, we have the following results.

\begin{itemize}
\item {\tms} admits an FPT algorithm when parameterized by solution size plus distance to cluster of the input graph $G$.
\end{itemize}

\begin{itemize}
\item {\tms} admits an FPT algorithm when parameterized by solution size plus neighborhood diversity of the input graph $G$.
\end{itemize}

\begin{itemize}
\item Weighted-{\tms} admits an FPT algorithm when parameterized by vertex cover number of the input graph $G$.
\end{itemize}

We also found that Jansen's Algorithm \cite{Jansen17} also solves {\tms} correctly, thus we have the following result.
\begin{itemize}
\item {\tms} admits an FPT algorithm when parameterized by the feedback edge number of the input graph $G$.
\end{itemize}

We leave open the parameterized complexity of {\tms} by feedback vertex number; however we obtain a hardness result for the following relaxation of {\tms}.

\begin{tcolorbox}[colback=white]
{
{\atms} ({\atmss}): \newline
\textit{Input:} An instance $I$ = $(G,{\cal T},k)$, where $G=(V,E)$ is an undirected graph, ${\cal T}=\{\{u_1,v_1\},\ldots,\{u_l,v_l\}\}$ where $u_i,v_i \in V$, $k\in \mathbb{N}$ and $\alpha \in {\mathbb {Q}}_{\geq 0}$.\newline
\textit{Output:} YES, if $G$ contains a subset $S$ of size at most $k$ such that for every $\{u,v\}\in \calt$, there exists a $w \in S$ such that $d(u,w)+d(w,v)\leq(1+\alpha)\cdot d(u,v)$ ; NO otherwise.
}
\end{tcolorbox}

\begin{itemize}
\item For every fixed $0< \alpha\leq 0.5$,  $\alpha$-RTMS is W[1]-hard with respect to feedback vertex number of the input graph $G$ plus solution size,  even when the input graph is a planar graph.
\end{itemize}

\section{Preliminaries}

\subsection{Graph Notations and Terminologies}
 All the graphs consider in this paper are simple and finite. We use standard graph notations and terminologies and refer the reader to \cite{DBLP:books/daglib/0030488} for basic graph notations and terminologies. We mention some notations used in this paper. A graph  $G=(V,E)$ has a vertex set $V$ and edge set $E$. We also use $V(G)$ and $E(G)$ to denote the vertex set and edge set of $G$ respectively. 
For an edge set $F\subseteq E(G)$, $V(F)$ denotes the set of all the vertices of $G$ with at least one edge in $F$ incident on it. For a vertex set $S\subseteq V(G)$, $G[S]$ denotes the induced sub graph of $G$ on vertex set $S$, and $G-S$ denotes the graph $G[V(G)\setminus S]$. For an edge set $E'\subseteq E$, $G[E']$ denotes the sub graph of $G$ on edge set $E'$ i.e. $G[E']=(V(E'),E')$. 
For an edge set $A\subseteq E(G)$, $G-A$ denotes the graph with vertex set $V(G)$ and edge set $E(G)\setminus A$.
 A component $C$ of a graph $G$ is a maximally connected subgraph of $G$.
 A graph $G$ is a forest if every component of $G$ is a tree.
 We use $G'\subseteq G$ to denote that $G'$ is a subgraph of $G$. For a $v\in V(G)$, we use $N_G(v)$ to denote the open neighborhood of $v$ in $G$ that is the set of all adjacent vertices of $v$ in $G$. Further, for a set $Z\subseteq V$, $N(Z)= \bigcup_{v\in Z} N(v)$. 
 \paragraph{}

For a weighted graph $G$, $w(e)$ is the weight of an edge $e\in E(G)$. For a graph $G$, $d(u,v)$ is the distance between vertices $u$ and $v$ in $G$. Given a graph $G$, and $u,v\in V(G)$, we define $SP_G(u,v)$ to be the set $\{x\mid x \in V(G)\ \land \ d(u,x)+d(x,v)=d(u,v)\}$. When the context is clear, we simply write $SP(u,v)$. For undirected graphs, $SP(u,v)$  and $SP(v,u)$ are the same; and we avoid writing $SP(\{u,v\})$.

\subsection{Set Terminologies}
We define the {\bf core} of a set family $\calf$ to be $\cap_{S \in \calf} S$ and denote it by $core(\calf)$. 
We say that a collection $\calf$ of sets forms a {\bf sunflower} if there is a set $C$ such that $S \cap T=C$ for every distinct pair $S,T\in \calf$. Notice that $C=core(\calf)$ in this case (see \cite{DBLP:books/sp/CyganFKLMPPS15,FlumG06} for details on sunflower, sunflower lemma, and its application to {\hs}). 

\subsection{Graph Structural Parameters}

A vertex set $S\subseteq V(G)$ is a vertex cover of $G$ if the sub graph $G[V\setminus S]$ has no edge. The minimum size of any vertex cover of $G$ is called vertex cover number of $G$. A vertex set $S\subseteq V(G)$ is a feedback vertex set of $G$ if the sub graph $G[V\setminus S]$ has no cycle. The minimum size of any feedback vertex set of $G$ is called feedback vertex number (FVN) of $G$. An edge set $F\subseteq E(G)$ is a feedback edge set of $G$ if the sub graph $G-F$ has no cycle. The minimum size of any feedback edge set of $G$ is called feedback edge number (FEN) of $G$.
\paragraph{}
A cluster graph is a disjoint union of complete graphs.
For a graph $G=(V,E)$, a vertex set $X\subseteq V$ is a cluster deletion set if $G-X$ is a cluster graph. The smallest size of a cluster deletion set of $G$ is called the distance to cluster of $G$.
A co-cluster graph is a complete multipartite graph. For a graph $G=(V,E)$, a vertex set $X\subseteq V$ is a co-cluster deletion set if $G-X$ is a co-cluster graph. The smallest size of a co-cluster deletion set of $G$ is called the
distance to co-cluster of $G$.
\paragraph{}
For the details on neighborhood diversity we refer to \cite{DBLP:journals/algorithmica/Lampis12}, and recall the definition here.

\begin{definition}[{\cite{DBLP:journals/algorithmica/Lampis12}}]
    In a graph $G=(V,E)$, two vertices $u$ and $v$ have the same type if and only if $N(v)\setminus \{u\}= N(u)\setminus \{v\}$.
    A graph $G=(V,E)$ has a neighborhood diversity at most $t$ if the set $V$ can be partitioned into at most $t$ sets such that all the vertices of each set have the same type.
\end{definition}

\subsection{Parameterized Complexity} 

\subsubsection{Basic Definitions and Terminologies}
For details on parameterized complexity, we refer to \cite{DBLP:books/sp/CyganFKLMPPS15,DBLP:series/txcs/DowneyF13}, and recall some definitions here.

\begin{definition}[\cite{DBLP:books/sp/CyganFKLMPPS15}]
A \textit{parameterized problem} is a language $L \subseteq \Sigma^* \times \mathbb{N} $ where $\Sigma$ is a fixed and finite alphabet. For an instance $I=(x,k) \in \Sigma^* \times \mathbb{N} $, $k$ is called the parameter. 
\end{definition}

\begin{definition}[\cite{DBLP:books/sp/CyganFKLMPPS15}]
    A parameterized problem is called \textit{fixed-parameter tractable} if there exists a computable function $f:\mathbb{N} \to \mathbb{N}$, a constant $c$, and an algorithm $\cal A$ (called a \textit{fixed-parameter algorithm} ) such that: the algorithm $\cal A$ correctly decides whether $I\in L$ in time bounded by $f(k).|I|^c$. The complexity class containing all fixed-parameter tractable problems is called FPT.
\end{definition}

Informally, a W[1]-hard problem is unlikely to be fixed parameter tractable, see \cite{DBLP:books/sp/CyganFKLMPPS15} for details on complexity class W[1].`
\begin{definition}[\cite{DBLP:books/sp/CyganFKLMPPS15}]
Let $P,Q$ be two parameterized problems. A parameterized reduction from $P$ to $Q$ is an algorithm which for an instance $(x,k)$ of $P$ outputs an instance $(x',k')$ of $Q$ such that:
\begin{itemize}
    \item $(x,k)$ is yes instance of $P$ if and only if $(x',k')$ is a yes instance of $Q$,
    \item $k'\leq g(k)$ for some computable function $g$, and    \item the reduction algorithm takes time $f(k)\cdot |x|^{O(1)}$ for some computable function $f$.
\end{itemize}
\end{definition}

\begin{theorem}[\cite{DBLP:books/sp/CyganFKLMPPS15}]\label{def-parameterized reduction}
If there is a parameterized reduction from $P$ to $Q$ and $Q$ is fixed parameter tractable then $P$ is also fixed parameter tractable.
\end{theorem}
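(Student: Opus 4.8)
The plan is to prove this by \emph{composing} the two given algorithms: the parameterized reduction from $P$ to $Q$, followed by the fixed-parameter algorithm for $Q$. Let $\mathcal{R}$ denote the reduction, which on input $(x,k)$ runs in time $f(k)\cdot|x|^{a}$ for some computable $f$ and constant $a$, and outputs an instance $(x',k')$ of $Q$ satisfying $k'\le g(k)$ and $(x,k)\in P \iff (x',k')\in Q$. Let $\mathcal{B}$ denote the fixed-parameter algorithm for $Q$, which decides membership of $(x',k')$ in $Q$ in time $h(k')\cdot|x'|^{c}$ for some computable $h$ and constant $c$. I would define an algorithm $\mathcal{A}$ for $P$ that, on input $(x,k)$, first calls $\mathcal{R}$ to obtain $(x',k')$ and then returns whatever $\mathcal{B}$ outputs on $(x',k')$.

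Correctness is immediate from the first property of the reduction: $\mathcal{A}$ accepts $(x,k)$ exactly when $\mathcal{B}$ accepts $(x',k')$, which happens exactly when $(x',k')\in Q$, which is equivalent to $(x,k)\in P$. So the only thing left to verify is that the total running time of $\mathcal{A}$ has the form required by the definition of FPT.

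For the running-time analysis, the key step is to bound the size $|x'|$ of the reduced instance. Since $\mathcal{R}$ runs in time $f(k)\cdot|x|^{a}$ and any algorithm writes at most a bounded number of symbols per step, its output length satisfies $|x'|\le f(k)\cdot|x|^{a}$ (with the constant factor absorbed into $a$). Substituting this bound together with $k'\le g(k)$ into the running time of $\mathcal{B}$ gives
\[
h(k')\cdot|x'|^{c}\ \le\ h\bigl(g(k)\bigr)\cdot\bigl(f(k)\cdot|x|^{a}\bigr)^{c}\ =\ h\bigl(g(k)\bigr)\cdot f(k)^{c}\cdot|x|^{ac}.
\]
Adding the cost of running $\mathcal{R}$ itself, the total time of $\mathcal{A}$ is at most $f(k)\cdot|x|^{a}+h(g(k))\cdot f(k)^{c}\cdot|x|^{ac}$, which is of the form $F(k)\cdot|x|^{d}$ with the computable function $F(k)=f(k)+h(g(k))\cdot f(k)^{c}$ (a composition and product of computable functions, hence computable) and the constant $d=\max\{a,ac\}$. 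By definition this establishes that $P$ is fixed-parameter tractable.

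The only subtlety — and it is a minor one — is the bound $|x'|\le f(k)\cdot|x|^{a}$, which I would justify by the standard observation that a machine running for $t$ steps cannot emit an output of length more than $O(t)$; this is what lets the polynomial dependence on the reduced instance size collapse back to a polynomial in $|x|$. Everything else is routine substitution and the closure of computable functions under composition, so I anticipate no genuine obstacle.
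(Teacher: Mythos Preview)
The paper does not prove this theorem at all; it is simply quoted from the textbook \cite{DBLP:books/sp/CyganFKLMPPS15} as background material in the preliminaries, with no accompanying proof. Your argument is the standard textbook proof and is essentially correct.

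One small technical point: when you write $h(k')\le h(g(k))$ from $k'\le g(k)$, you are implicitly assuming $h$ is nondecreasing. This is not part of the definition of a fixed-parameter algorithm, so strictly speaking you should replace $h$ by the computable nondecreasing function $\hat h(k)=\max_{i\le k}h(i)$ before making that substitution. This is a routine fix and does not affect the structure of your argument.
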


\subsection{Some Problem Definitions and Existing Results}

\begin{tcolorbox}[colback=white]
{
{\hsg} ({\hsgs}): \newline
\textit{Input:} An instance $I$ = $(G,{\cal V},k)$, where $G=(V,E)$ is an undirected graph, $\cal V$ is a collection of subgraphs of $G$, and $k\in \mathbb{N}$.\\
\textit{Output:} YES, if $G$ contains a $S\subseteq V(G)$ of size at most $k$ that intersects with vertex set of every graph in $\cal V$; NO otherwise.
}
\end{tcolorbox}
\paragraph{}

If {\hsg} has the constraint that every graph in $\cal V$ is a simple path in $G$, then we call the problem {\sc{\hpg}} ({\hpgs}), the problem {\hpg} was studied in \cite{Jansen17}.

\begin{lemma}[folklore, discussed in \cite{Jansen17}]\label{lemma: folklore}
    {\hpg} can be solved in polynomial time if the input graph $G$ is a tree.
\end{lemma}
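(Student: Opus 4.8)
The plan is to solve the tree case by a greedy bottom-up sweep combined with an exchange argument, in direct analogy with the classical interval point-cover problem but adapted to the tree metric. First I would root $G$ at an arbitrary vertex $r$. Since $G$ is a tree, every member of $\calp$ is the \emph{unique} path between its two endpoints, so I may identify each $P\in\calp$ with its endpoint pair $\{a_P,b_P\}$ and compute, in polynomial time, its \emph{top vertex} $\mathrm{top}(P)=\mathrm{lca}(a_P,b_P)$, the vertex of $P$ closest to the root. The vertex set of $P$ is then exactly the union of the two root-ward paths from $a_P$ and from $b_P$ up to $\mathrm{top}(P)$.

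The greedy procedure is: while $\calp\neq\emptyset$, select a path whose top vertex has maximum depth, add that top vertex $\ell$ to the solution $S$, and delete from $\calp$ every path containing $\ell$; finally answer YES iff $|S|\le k$. Each iteration runs in polynomial time and removes at least one path, so the overall running time is polynomial.

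The crux is to show that $S$ has minimum size, for which I would establish the following exchange lemma: if $P$ has a deepest top vertex $\ell$ among all current paths, then some minimum hitting set contains $\ell$. Starting from any minimum hitting set, it must contain some $v\in V(P)$; if $v=\ell$ we are done, and otherwise $v$ is a proper descendant of $\ell$. I then claim that every path $Q$ with $v\in V(Q)$ also satisfies $\ell\in V(Q)$, so replacing $v$ by $\ell$ preserves both validity and cardinality. To see the claim, let $m=\mathrm{top}(Q)$; since $\ell$ is deepest, $\mathrm{depth}(m)\le\mathrm{depth}(\ell)$, while both $m$ and $\ell$ are ancestors of $v$ and hence comparable, forcing $m$ to be an ancestor of $\ell$. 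Then $\ell$ lies on the unique tree-path from $m$ down to $v$, which is a sub-path of $Q$, giving $\ell\in V(Q)$.

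With the exchange lemma in hand, correctness follows by induction on $|\calp|$: committing $\ell$ to the solution shows $\mathrm{OPT}(\calp)=1+\mathrm{OPT}(\calp')$, where $\calp'$ is the residual set of paths not hit by $\ell$, and the residual instance is again a tree instance to which the same argument applies, so the greedy attains the optimum and the comparison $|S|\le k$ decides the instance. The main obstacle — and the only non-routine point — is precisely the comparability step in the exchange lemma; the remaining bookkeeping (LCA precomputation, the polynomial iteration count, and the final test against $k$) is standard and clearly polynomial.
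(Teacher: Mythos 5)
Your proposal is correct: the deepest-LCA greedy with the exchange argument (every path through a proper descendant $v$ of the deepest top $\ell$ must also pass through $\ell$, since its own top is a weakly shallower ancestor of $v$ and hence of $\ell$) is exactly the standard folklore algorithm for hitting paths in a tree. The paper itself gives no proof of this lemma --- it is stated as folklore and attributed to the discussion in Jansen's paper --- so there is nothing to diverge from; your write-up supplies a valid proof of precisely the kind the authors are implicitly relying on.
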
 
\paragraph{}

From \cite{Jansen17} we recall that a graph $G$ is a flower graph if it has a specific vertex $z$ (called its core) such that $G-\{z\}$ is a disjoint union of paths, each such path is called its petal, and no internal vertex of any such path is adjacent to $z$, assume an arbitrary but distinct ordering $\{R_1,R_2,\dots R_l\}$ of these petals. 
\begin{tcolorbox}[colback=white]
{
{\hpfb}({\hpfbs})\cite{Jansen17}: \newline
\textit{Input:} An instance $I$ = $(G,z,P,b)$, where $G$ is a flower graph with core $z$ and petals $R_1,R_2,\dots R_l$, a set $P$ of simple paths in $G$, and $b:[l]\to \mathbb{N}_{\geq 1}$.\newline
\textit{Question:} Is there a set $S\subseteq V(G)\setminus \{z\}$ hitting every path in $P$ such that $|S\cap V(R_i)|=b(i)$ for every $i\in [l]$?.
}
\end{tcolorbox}

\begin{lemma}[\cite{Jansen17}]\label{lemma:flower}
    {\hpfbs} is polynomial time solvable.
\end{lemma}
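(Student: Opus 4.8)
The plan is to exploit the flower structure to decouple the petals, turning \hpfbs{} into independent per-petal selection problems that interact only through the core $z$, and then to reconcile that interaction. First I would classify every path $Q\in P$ by how it meets $z$. Since $G-\{z\}$ is a disjoint union of the petals $R_1,\dots,R_l$ and no internal vertex of a petal is adjacent to $z$, each $Q$ is either (i) \emph{internal}, lying entirely inside one petal $R_i$ and hence corresponding to a contiguous subpath (an interval) of $R_i$ that $S$ must hit, or (ii) a \emph{core} path using $z$, which then consists of $z$ together with a prefix of some petal starting at a $z$-adjacent endpoint and a prefix of some petal starting at a $z$-adjacent endpoint. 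Because $z\notin S$, a core path is hit iff $S$ hits one of its two anchored prefixes, so it becomes a disjunction of two prefix-interval constraints. A core path whose two prefixes lie in the \emph{same} petal (possible only when both endpoints of that petal are adjacent to $z$) is just an intra-petal disjunction; only core paths joining two \emph{different} petals create genuine coupling.

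Next I observe that the only information a petal $R_i$ must expose to the rest of the instance is, for each of its at most two $z$-adjacent endpoints, how deep the first selected vertex sits — equivalently, which prefix lengths from that endpoint are hit. I would therefore solve each petal in isolation with a dynamic program along the path $R_i$ (in the spirit of the tree case, Lemma~\ref{lemma: folklore}, but tracking cardinality): scanning left to right while recording the number of vertices chosen so far and the position of the most recently chosen vertex, the DP decides, for each candidate \emph{interface profile} $(d_L,d_R)$ — the depths from the left and right $z$-endpoints of the first selected vertex, each ranging over $O(|R_i|)$ values — whether one can select exactly $b(i)$ vertices so that all internal interval constraints and all same-petal core disjunctions on $R_i$ are satisfied while realizing $(d_L,d_R)$. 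This runs in polynomial time and outputs, for every petal, a polynomial-size set of feasible interface profiles.

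The remaining task, and the crux, is to pick one feasible profile per petal so that every cross-petal core disjunction holds, where a disjunction linking endpoint $e$ of $R_i$ at threshold $p$ with endpoint $e'$ of $R_j$ at threshold $q$ is satisfied iff $d_e\le p$ or $d_{e'}\le q$. The structural feature I would lean on is monotonicity: choosing the $z$-adjacent endpoint itself yields depth $1$ and discharges every disjunction at that end, and smaller depth can only help. Introducing, for each $z$-end $e$ and each threshold $t$ occurring in a core path at $e$, the indicator ``the prefix of length $t$ from $e$ is hit'' (monotone in $t$), every cross-petal disjunction becomes a monotone two-clause over polynomially many such indicators, and I would resolve these clauses jointly with the per-petal feasible-profile sets of the previous step by an implication/2-SAT-style propagation (equivalently a min-cost-flow formulation) that always consults the per-petal realizability oracle.

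The main obstacle I expect is precisely this reconciliation: the cross disjunctions couple arbitrary pairs of petals while the exact-budget requirement $|S\cap V(R_i)|=b(i)$ couples the two ends of a single petal, so demanding a shallow first vertex at one end may, under a tight budget, be incompatible with demanding one at the other end or with hitting an interior interval. A plain left-to-right DP over petals therefore does not suffice, and the heart of the argument is to show that the monotone clause system together with the polynomially many realizable profiles can always be solved in polynomial time — in particular that the budget-induced coupling between a petal's two ends never hides an essential conflict that the propagation would miss. Establishing this, and bounding the total running time by a polynomial in $|V(G)|+|P|$, is the step I would devote the most care to.
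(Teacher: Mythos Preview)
The paper does not give its own proof of this lemma: it is quoted from Jansen~\cite{Jansen17} and used as a black box, so there is no in-paper argument to compare against.

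On your sketch itself: the decomposition into internal paths and core paths, the per-petal dynamic program, and the encoding of cross-petal constraints as 2-clauses over prefix-hit indicators are all sound. The gap is precisely where you flag it. You propose to combine the 2-clauses with a per-petal ``realizability oracle'' via ``2-SAT-style propagation'', but 2-SAT plus arbitrary side constraints is not polynomial in general, and you never argue why it is here; you end by saying this is ``the step I would devote the most care to'', which is an acknowledgement that the proof is not complete. The missing observation is that the set of feasible profiles $(d_L,d_R)$ for a fixed petal is \emph{upward-closed}: if $b(i)$ vertices suffice to hit all internal intervals of $R_i$ while placing some vertex within depth $d_L$ of one $z$-end and within depth $d_R$ of the other, then the same selection witnesses feasibility for any larger thresholds; and the exact-count requirement does not spoil this, because any hitting set of size below $b(i)$ can be padded with arbitrary extra vertices of $R_i$ up to exactly $b(i)$ (after the trivial check $b(i)\le|V(R_i)|$). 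An upward-closed region in two integer coordinates is bounded by a staircase with at most $|V(R_i)|$ maximal infeasible corners $(a,b)$, each of which yields the genuine 2-clause $\neg x_{e_L,a}\lor\neg x_{e_R,b}$. With this in hand, \emph{every} constraint --- monotonicity implications among the indicators, the cross-petal disjunctions, and the per-petal feasibility staircases --- is a 2-clause over polynomially many Boolean variables, and a single 2-SAT instance decides the whole problem. Without making this monotonicity explicit, your reconciliation step remains an intention rather than an argument.
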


\section{FPT Algorithms}

In this section we prove the following theorems.
\begin{theorem}\label{thm:clusterModulator}
There exists an FPT algorithm for {\tms} when parameterized by solution size plus distance to cluster of the input graph $G$.
\end{theorem}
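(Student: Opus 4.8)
The plan is to turn an \tmss{} instance into an equivalent \hs{} instance whose sets have size bounded by a function of the parameters, and then invoke the sunflower‑lemma based FPT algorithm for bounded‑size \hs{} in the spirit of \cite{Abu-Khzam10,FlumG06}. Let $X$ be a cluster deletion set with $|X|=c$ (either given with the modulator or computed in FPT time, since cluster vertex deletion is FPT), and let $C_1,\dots,C_p$ be the cliques of $G-X$. I would first record three structural facts. (i) \emph{Distances are small}: a shortest path is simple, uses at most one edge inside any clique and visits each vertex of $X$ at most once, so $d(u,v)=O(c)$ for every reachable pair, and hence the predicate $d(u,w)+d(w,v)=d(u,v)$ only ever involves bounded quantities. (ii) \emph{Twin structure}: two vertices of the same clique with the same neighborhood in $X$ are true twins, so each clique splits into at most $2^{c}$ twin classes, and two vertices of one class are interchangeable for lying on a shortest $u$--$v$ path whenever neither is an endpoint of the pair; moreover whether an interior clique vertex lies on a shortest $u$--$v$ path depends only on its $X$‑neighborhood together with its clique's aggregate $X$‑neighborhood, giving at most $2^{O(c)}$ relevant ``transit types''. (iii) Every cross‑clique pair has $SP(u,v)\cap X\neq\emptyset$, since any shortest path between different cliques must cross $X$.

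The core step is to replace each $SP(u,v)$ by a bounded‑size representative set $SP'(u,v)$. For a pair inside one clique, $SP(u,v)=\{u,v\}$ already has size two. For a cross‑clique pair I would keep in $SP'(u,v)$ the at most $c$ vertices of $SP(u,v)\cap X$, the two endpoints $u,v$, one canonical representative of each twin class of $u$'s clique and of $v$'s clique meeting $SP(u,v)$, and one representative per valid transit type; by facts (i) and (ii) this yields $|SP'(u,v)|=2^{O(c)}$. The easy direction is immediate: since $SP'(u,v)\subseteq SP(u,v)$, any hitting set of $\{SP'(u,v)\}$ is already a terminal monitoring set.

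The hard part, and the step I expect to be the main obstacle, is the converse: converting a monitoring set $S$ with $|S|\le k$ into a hitting set of $\{SP'(u,v)\}$ of size $\le k$. The difficulty is that one clique vertex can play incompatible roles at once: as an \emph{interior} hitter its role is governed only by its type (so it can be pushed to a canonical representative), but as a hitter of a pair whose endpoint lies in \emph{its own} clique its role is tied to that specific clique, and a terminal additionally hits its own pairs ``for free'', a role no twin inherits. Naively collapsing to canonical or global representatives therefore risks either destroying a local hit or double‑counting a vertex that served both a global (transit) and a local purpose. To resolve this I would use that $S$ meets at most $k$ cliques, so at most $k$ cliques are ``active'' and only these can host dual‑role vertices: first normalize $S$ by replacing every non‑endpoint clique vertex by the canonical representative of its twin class (safe by (ii)), keep endpoints and terminals as uncollapsible special elements, and additionally branch over the bounded type information of the $\le k$ active cliques so that each active‑clique vertex is assigned a single consistent role. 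Fact (iii) is what makes the remaining exchange go through, because a pair hit locally is also hit at some vertex of $X$, allowing mismatches to be rerouted without spending extra budget.

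Once this bounded‑size system is established the remainder is routine. Branching over the at most $2^{c}$ choices of $S\cap X$ and removing the pairs they already hit reduces to a \hs{} instance with sets of size $d=2^{O(c)}$, which is solvable either by the $d^{k}$ bounded search tree or, following the sunflower kernelization of \cite{Abu-Khzam10,FlumG06}, by first shrinking to $O(d!\,k^{d})$ sets and then solving. Altogether this gives a running time of the form $f(c,k)\cdot n^{O(1)}$, proving the theorem.
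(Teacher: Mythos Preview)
Your strategy is genuinely different from the paper's. The paper does \emph{not} shrink the sets $SP(u,v)$; instead it bounds the \emph{number} of terminal pairs. It assigns each pair a ``type'' recording how its shortest paths enter and leave the modulator $M$ (an element of $M\times\{1,2\}\times M\times\{1,2\}$ for each shortest path), and then shows via Vizing-style counting that too many pairs of one type force a sunflower among the $SP(u,v)$'s, so Reduction Rule~\ref{rr:1} applies. After exhaustive reduction only $f(k,q)$ pairs survive, and one solves a \hs{} instance with few sets rather than small sets.

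Your route has a real gap at exactly the point you flag. Consider a solution vertex $w\in C_i$ that simultaneously hits a pair $(u_1,v_1)$ with neither endpoint in $C_i$ (transit role) and a pair $(u_2,v_2)$ with $u_2\in C_i$ (local role). In your $SP'(u_1,v_1)$ the only witness of $w$'s transit type is a \emph{global} representative $r_N$, which may live in a clique different from $C_i$; in your $SP'(u_2,v_2)$ the only witnesses from $C_i$ are the \emph{local} twin-class representatives of $C_i$. Replacing $w$ by the local twin representative keeps it inside $SP(u_1,v_1)$ but not inside $SP'(u_1,v_1)$; replacing it by $r_N$ loses $SP'(u_2,v_2)$. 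Your two proposed fixes do not close this: ``branching over the $\le k$ active cliques'' is not FPT because you do not know \emph{which} of the unboundedly many cliques are active; and fact~(iii) only says $SP(u,v)\cap X\neq\emptyset$, it does not put any vertex of $X$ into $S$, so ``rerouting without spending extra budget'' is unjustified. There is also a subtler issue with the easy direction: if the global representative $r_N$ happens to lie in the endpoint clique $C_i$ of some pair, then $d(u,r_N)=1$ rather than the transit distance, so $r_N$ need not lie in $SP(u,v)$ even when genuine transit vertices of type $N$ do, and your $SP'(u,v)$ may fail to be a subset of $SP(u,v)$.

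In short, bounding set sizes is a natural instinct here, but the dual-role obstruction is not handled by your sketch; the paper sidesteps it entirely by reducing the number of pairs instead.
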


\begin{theorem}\label{thm:ND}
There exists an FPT algorithm for {\tms} when parameterized by solution size plus neighborhood diversity of the input graph $G$.
\end{theorem}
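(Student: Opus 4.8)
The plan is to exploit the fact that in a graph of neighborhood diversity $t$ the entire distance structure, and hence the ``lies on a shortest path'' relation, is governed by the $t$ type classes rather than by individual vertices. First I would record the elementary but crucial fact that two vertices $u,u'$ of the same type are twins, so that for every third vertex $x$ we have $d(u,x)=d(u',x)$; the only cases needing care are the true-twin case (when $u,u'$ are adjacent) and the coincidences $x\in\{u,u'\}$, all of which are checked directly. Consequently there is a $t\times t$ matrix $D$ with $D[i][j]$ equal to the common distance between a vertex of type $i$ and a distinct vertex of type $j$ (and $D[i][i]$ the common intra-type distance), and for a monitor $w$ of type $p$ that is not an endpoint of a terminal pair $\{u,v\}$ of types $(i,j)$, the condition $d(u,w)+d(w,v)=d(u,v)$ reduces to the purely type-level identity $D[i][p]+D[p][j]=D[i][j]$. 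The remaining, trivial cases are $w=u$ or $w=v$, in which $w$ always monitors the pair.

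Next I would group the terminal pairs of $\calt$ by their type-pair $(i,j)$; there are at most $t^2$ groups, and by the identity above all pairs in a group share the same set $\mathrm{Cov}(i,j)=\{p : D[i][p]+D[p][j]=D[i][j]\}$ of types that can monitor them generically. Two features make this decisive. A short computation shows $i,j\notin\mathrm{Cov}(i,j)$ (membership would force a zero intra-type distance), so a generic monitor of a type $p\in\mathrm{Cov}(i,j)$ is automatically distinct from every endpoint of the group; and since the identity is independent of the specific pair, a single vertex of type $p$ simultaneously monitors every pair of every group whose cover set contains $p$. Thus all \emph{generically coverable} groups (those with $\mathrm{Cov}(i,j)\neq\emptyset$) can be handled by at most $t$ vertices, one per type, whereas groups with $\mathrm{Cov}(i,j)=\emptyset$ can be monitored only by including one of the two endpoints of each of their pairs.

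The algorithm then guesses the set $P\subseteq[t]$ of types that occur in the solution (at most $2^t$ guesses, and $|P|\le k$). Fixing $P$ determines the generic coverage: every group with $\mathrm{Cov}(i,j)\cap P\neq\emptyset$ is monitored. The pairs of the remaining groups must be monitored by an endpoint whose type lies in $P$; collecting these residual pairs as the edges of an auxiliary graph $H_P$ on the terminals, the task becomes to find a vertex cover of $H_P$ using only vertices of types in $P$, which is exactly a type-restricted \vc\ instance and hence solvable in FPT time in the residual budget by standard branching. One subtlety must be absorbed into the budget count: a terminal placed in the cover of $H_P$ already realizes the generic coverage of its own type, so a type of $P$ needs a dedicated standalone monitor only when no chosen endpoint has that type. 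I would handle this by minimizing $|C|+|P\setminus\mathrm{types}(C)|$ over valid covers $C$, for instance by an inner guess of $\mathrm{types}(C)$ (another $2^t$ factor) followed by restricted \vc; the instance $(G,\calt,k)$ is a YES-instance iff some guess yields total cost at most $k$.

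The genuinely delicate part is not the search but this bookkeeping together with the distance lemma. I expect the main obstacle to be making the ``monitor lies on a shortest path iff a type identity holds'' equivalence airtight across all degenerate configurations --- same-type endpoints ($i=j$), clique versus independent-set type classes, singleton types, and terminals appearing in many pairs --- and then arguing that allowing a single chosen vertex to serve simultaneously as a generic monitor for some groups and as an endpoint monitor for others neither breaks correctness nor is missed by the guessing scheme. Once these are pinned down, combining the $2^{O(t)}$ guesses with an $f(k)\cdot n^{O(1)}$ \vc\ subroutine yields the claimed FPT algorithm in the combined parameter $k+t$.
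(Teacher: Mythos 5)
Your argument is correct, but it takes a genuinely different route from the paper's. The paper also partitions the terminal pairs into the at most $t^2$ groups $T_{i,j}$ by type-pair, but then stays entirely inside the \hs{} framework: it observes that for all $\{u,v\}\in T_{i,j}$ the sets $SP(u,v)$ agree outside $V_i\cup V_j$, so each family $\{SP(u,v)\mid \{u,v\}\in T_{i,j}\}$ has effective size at most $2$ over its common core; a sunflower-style reduction (Proposition~\ref{prop:hittingSet}, or alternatively a core-invariant/Vizing argument) then shrinks each group to $O(k^2)$ sets, and the resulting \hs{} instance with $O(t^2k^2)$ sets is solved by the standard $2^m$ dynamic program, giving a $2^{O(t^2k^2)}$-type bound. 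You instead push the twin structure further: monitoring by a non-endpoint becomes the purely type-level identity $D[i][p]+D[p][j]=D[i][j]$, the observation $i,j\notin\mathrm{Cov}(i,j)$ guarantees that a vertex of a covering type is never an endpoint of the group it covers (so one vertex per type suffices, and a chosen endpoint of type $p$ safely doubles as the generic monitor for type $p$), and groups with empty cover set must be hit at their endpoints, yielding a guess over type profiles plus a type-restricted \vc{} instance. The degenerate cases you flag (same-type pairs, clique versus independent-set classes, singleton types) all resolve the way you expect, and the forward direction works because any non-endpoint monitor of a pair in group $(i,j)$ necessarily has a type in $\mathrm{Cov}(i,j)$. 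Your route buys a much better running time, roughly $4^t\cdot 2^k\cdot n^{O(1)}$ versus $2^{O(t^2k^2)}$, and avoids the sunflower machinery; the paper's version buys uniformity, since its distance-to-cluster and vertex-cover results funnel through the same bounded-size \hs{} reduction.
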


\begin{theorem}\label{thm:vc}
There exists an FPT algorithm for {weighted-\tms}  when parameterized by vertex cover number of the input graph $G$.
\end{theorem}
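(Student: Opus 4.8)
The plan is to exploit a smallest vertex cover $X$ of $G$, with $|X|=\tau$ equal to the vertex cover number, and to reduce the whole problem to a search that is governed entirely by the structure around $X$. First I would compute $X$ (by the standard $2^{\tau}\cdot|V|^{O(1)}$ branching, or take it as given) together with all pairwise distances in the edge-weighted graph. The single most useful structural fact, which I would establish at the very start, is that \emph{every} terminal pair is already hit by some vertex of $X$: along any shortest path between two distinct vertices, either an endpoint lies in $X$, or the neighbour of an endpoint along that path lies in $X$, because $I:=V(G)\setminus X$ is independent and hence an endpoint in $I$ cannot be followed by a vertex of $I$. Consequently $SP(u,v)\cap X\neq\emptyset$ for every $\{u,v\}\in\calt$, so $X$ is itself a terminal monitoring set and the optimum has size at most $\tau$. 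This is precisely what lets us drop ``solution size'' from the parameter in the weighted setting: the solution size is automatically bounded by the parameter $\tau$.

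Given this, I would branch over all $2^{\tau}$ subsets $Y\subseteq X$, guessing $Y$ to be the trace $S\cap X$ of an optimal solution. Fixing $Y$ determines the subfamily $\calt_Y\subseteq\calt$ of terminal pairs not yet hit by $Y$; these must be hit using vertices of $I$, so the task reduces to finding a minimum-cardinality $S_I\subseteq I$ hitting every pair of $\calt_Y$ and then testing $|Y|+|S_I|\le k$. Since the optimum is at most $\tau$, we only ever look for $S_I$ with $|S_I|\le\tau$, so the residual subproblem is a hitting problem of bounded solution size whose candidates are independent vertices that reach the rest of the graph only through the $\tau$ ``portals'' of $X$.

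To make the $I$-part tractable I would classify the independent vertices into types and argue that the number of distinct \emph{hitting behaviours} is bounded by a function of $\tau$. For $w\in I$, its distance to every vertex is determined by the vector $(d(w,a))_{a\in X}$, because any shortest path out of $w$ enters the remaining graph through a neighbour in $X$; hence whether $w$ lies on a shortest $u$–$v$ path is decided by this vector together with the fixed pair-data. For pairs with both endpoints in $X$, the certificate $d(w,a)+d(w,b)=d(a,b)$ depends only on the equality pattern of $w$ over the $\binom{\tau}{2}$ portal pairs, giving at most $2^{\binom{\tau}{2}}$ classes; pairs with an endpoint in $I$ I would reduce to such portal conditions by projecting the $I$-endpoint onto the neighbours in $X$ that begin its shortest paths, noting also that a pair may be hit by one of its own endpoints lying in $I$. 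In the unweighted case the type is simply the neighbourhood $N(w)\subseteq X$, giving at most $2^{\tau}$ classes. Within a single type the vertices hit exactly the same pairs, so it suffices to keep one representative per type; I would then either encode the number of representatives of each type as an integer program in $f(\tau)$ variables with one covering constraint per pair-signature and invoke integer programming in fixed dimension, or, since $\mathrm{OPT}\le\tau$, directly enumerate the at most $\binom{f(\tau)}{\le\tau}$ admissible sub-multisets of types.

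The main obstacle is the weighted case. I expect the delicate point to be proving that, once arbitrary edge weights are allowed, independent vertices genuinely fall into only $f(\tau)$ hitting-behaviour classes, and in particular handling the terminal pairs that have an endpoint in $I$: here the self-hitting of an endpoint and the correct projection of an $I$-endpoint onto its portal neighbours must be argued carefully, so that every way in which an independent vertex can certify a pair is captured by a bounded portal signature that is constant on each type. The structural fact $\mathrm{OPT}\le\tau$ keeps the final search inside the parameter, so once the bounded-type claim is in place the running time is $f(\tau)\cdot|V(G)|^{O(1)}$, establishing membership in FPT.
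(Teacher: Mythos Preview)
Your overall strategy---bound the optimum by $\tau$, branch on $Y=S\cap X$, then classify the vertices of $I$ by a bounded ``portal signature''---is natural, but the assertion ``within a single type the vertices hit exactly the same pairs'' is false, and this is not a weighted-case subtlety but a structural one. Take a terminal pair $\{u,v\}\in\calt_Y$ with $u\in I$ and $v\in X$. The vertex $u$ hits this pair trivially. However, for any $a\in N(u)\cap X$ with $w(u,a)+d(a,v)=d(u,v)$ (a ``valid first step''), the condition $u\in SP(a,v)$ would force $d(a,u)+d(u,v)=d(a,v)=d(u,v)-w(u,a)$, i.e.\ $d(a,u)=-w(u,a)<0$. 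Hence $u$'s own portal signature contains \emph{none} of the pairs $(a,v)$ that certify $\{u,v\}$, so another vertex $w$ with the same signature as $u$ need not hit $\{u,v\}$ at all. Consequently, keeping one representative per type, or encoding only type-counts in an ILP, can silently drop exactly those pairs that must be covered by a specific endpoint in $I$; neither of your two proposed finishing moves distinguishes individual $I$-vertices, so neither can enforce these self-hitting constraints.

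The paper avoids this by guessing strictly more than $S\cap X$: it branches over a full binary matrix $M\in\{0,1\}^{X\times X}$ ($2^{\tau^2}$ branches), where $M[a,b]$ records whether the sought solution $S$ intersects $SP(a,b)$. The diagonal of $M$ is your $Y$; the off-diagonal entries fix precisely which portal conditions are realised by $S$. With $M$ fixed, let $S_0=\bigcup_{M[a,b]=0}SP(a,b)$ be the forbidden vertices. Now any terminal pair $\{u,v\}$ that is not already covered via some $SP(a,b)\setminus S_0$ with $M[a,b]=1$ satisfies $SP(u,v)\setminus S_0\subseteq\{u,v\}$: every internal portal pair $(a,b)$ along a shortest $u$--$v$ path has $M[a,b]=0$, so all of $SP(u,v)$ except possibly the endpoints lies in $S_0$. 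Thus every residual pair becomes a size-$\le 2$ constraint on the specific endpoints, and the standard vertex-cover reduction (Observation~\ref{obs:vertexCover}) shrinks these to $O(k^2)\le O(\tau^2)$ sets. Together with the at most $\tau^2$ sets $SP(a,b)\setminus S_0$ for $M[a,b]=1$, one obtains a {\hs} instance with $O(\tau^2)$ sets, solved in time FPT in $\tau$. In short, the missing idea in your plan is to guess the full portal-hitting pattern rather than only $S\cap X$; once you do, the self-hitting pairs collapse to forced singletons or doubletons and the remainder of your outline goes through.
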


\begin{theorem}\label{thm:FEN}
There exists an FPT algorithm for {\tms} when parameterized by the feedback edge number of the input graph $G$.
\end{theorem}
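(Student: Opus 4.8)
The plan is to recast {\tmss} as a hitting problem and then invoke Jansen's feedback-edge-number machinery for {\hpgs}. First I would observe that the monitoring condition is purely a hitting condition: a vertex $w$ satisfies $d(u,w)+d(w,v)=d(u,v)$ exactly when $w\in SP_G(u,v)$, the set of vertices lying on some shortest $u$-$v$ path. Hence a set $S$ is a terminal monitoring set for $\calt$ if and only if $S$ intersects $SP_G(u,v)$ for every $\{u,v\}\in\calt$, so {\tmss} is precisely the instance of {\hsgs} in which the subgraph to be hit for the pair $\{u,v\}$ is $G[SP_G(u,v)]$. The only gap between this and {\hpgs} is that $SP_G(u,v)$ need not be a single path; it is in general the union of all shortest $u$-$v$ paths. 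On a tree this gap disappears, since the shortest path is unique and $SP_G(u,v)$ is exactly that path, which is why {\tmss} on a forest already coincides with {\hpgs} on a forest and is solvable in polynomial time by Lemma~\ref{lemma: folklore}.

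Next I would exploit the bounded feedback edge number to localise the union structure of the $SP$ sets. Fix a minimum feedback edge set $F$ with $|F|=f$, so that $G-F$ is a forest, and let $B$ be the set of endpoints of edges of $F$ together with the vertices of degree at least three; a standard counting argument gives $|B|=O(f)$, and $G$ is a subdivision of a multigraph on the vertex set $B$ whose links are induced paths (segments) of $G-F$. The structural claim I would establish is that a shortest $u$-$v$ path is determined, up to finitely many choices, by how it routes through $B$ and which edges of $F$ it uses; consequently $SP_G(u,v)$ is the union of a bounded number (a function of $f$) of candidate paths, and all of its genuine branching occurs inside the $O(f)$-sized core. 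In particular, away from $B$ each candidate coincides with a unique tree path of $G-F$, so within any single segment the set $SP_G(u,v)$ is itself a single subpath.

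With this localisation in hand, the algorithm mirrors Jansen's for {\hpgs}. I would branch only over the interaction of the solution with the core $B$: guess which subset of the $O(f)$ core vertices lies in $S$, costing $2^{O(f)}$ branches. Fixing such a guess determines, for each terminal pair, whether it is already monitored by a chosen core vertex; every pair not yet monitored imposes a constraint that must now be satisfied inside the tree segments or the flower pieces, where by the localisation $SP_G(u,v)$ restricted away from the core is a single path. The residual instance is therefore an {\hpgs} instance on a forest together with flower pieces, covering all terminal pairs simultaneously, and it is discharged in polynomial time by Lemma~\ref{lemma: folklore} and by the polynomial-time solvability of {\hpfbs} (Lemma~\ref{lemma:flower}), exactly as in \cite{Jansen17}; the {\hpfbs} budgets carry the bookkeeping needed to coordinate the $k$ choices across the different pieces. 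Multiplying the $2^{O(f)}$ branches by the polynomial base-case solvers yields the claimed FPT running time.

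The step I expect to be the main obstacle is handling the genuinely non-path $SP$ sets at the flower base case: when $u$ and $v$ lie on a common cycle formed by a petal together with the core $z$, the set $SP_G(u,v)$ can be the whole cycle rather than a simple path, so the constraint is a disjunction over two arcs rather than a single path to be hit. The saving grace, which I would make precise, is that such \emph{fat} constraints are the easiest to satisfy---any vertex of the petal hits them---so they can be folded into the {\hpfbs}-style dynamic program, or shown to be dominated by the tighter single-path constraints, without destroying its polynomial running time. Verifying that these disjunctive constraints do not interfere with the budget bookkeeping of {\hpfbs} is the part that requires care, but it does not affect the overall FPT bound.
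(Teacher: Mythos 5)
Your overall route is the same as the paper's: both are adaptations of Jansen's feedback-edge-number algorithm for {\hpgs}, built on recasting {\tmss} as hitting the sets $SP(u,v)$ and on the fact that, away from an $O(f)$-sized core, each such set restricts to a single subpath of a segment. There is, however, one concrete gap in how you discharge the residual instance. {\hpfbs} takes the petal budgets $b(i)$ as part of its \emph{input} and asks for a hitting set meeting them exactly; it does not optimize over them. You branch only over the $2^{O(f)}$ subsets of the core that enter the solution and then assert that ``the {\hpfbs} budgets carry the bookkeeping,'' but you never say where those budgets come from. Naively there are up to $(k+1)^{|{\cal D}|}$ budget vectors, which is not FPT in $f$. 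The missing ingredient is the paper's Lemma~\ref{lemma:opt(D)oropt(D)+1} (Jansen's Lemma~5 restated for {\tmss}): some optimal solution uses either ${\textsc{opt}}(D)$ or ${\textsc{opt}}(D)+1$ vertices in each segment $D$, where ${\textsc{opt}}(D)$ is computable in polynomial time via the tree case (Lemma~\ref{lemma: folklore}). This is what reduces the budget guessing to a further $2^{|{\cal D}|}=2^{O(f)}$ branching, and checking that this lemma survives the passage from prescribed paths to $SP$ sets is the main thing the paper actually verifies, using the facts that $SP(x,y)$ induces a connected subgraph whose internal vertices have degree at least two and that it induces a subpath when confined to one segment.

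Two smaller points. First, your bound $|B|=O(f)$ on the core is false without first iteratively deleting degree-one vertices (a tree has feedback edge number zero but can have linearly many vertices of degree three); the paper's Preprocessing~1 handles this and also has to reassign terminal pairs living on the deleted vertices. Second, your ``fat constraints'' worry is resolved more simply than you suggest: any $SP(u,v)$ that contains an entire petal whose guessed budget is positive is automatically hit and is deleted before the {\hpfbs} instance is built, and any $SP(u,v)$ containing a guessed core vertex is likewise deleted, so no disjunctive constraints ever reach the dynamic program. What genuinely remains to be argued is that every surviving set induces a simple path in the flower graph (the analogue of Jansen's Claim~5), which again rests on the two structural observations about $SP$ sets above.
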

\paragraph{}

Given an instance $(G,\calt,k)$, our main idea is to create an equivalent instance $(\calu,\calf, k')$ of {\hs} where the number of sets in $\calf$ is a function of $k$ plus the structural parameter under consideration.
An instance $(\calu,\calf, k')$ of {\hs} with $\calf$ having $m$ sets and $\calu$ having $n$ elements can be solved in $O(2^m\cdot (n+m)^{O(1)})$ time by a standard dynamic programming algorithm (see \cite{DBLP:books/sp/CyganFKLMPPS15}). Using this we can obtain an FPT algorithm for {\tms}.
\paragraph{}

Another idea that we use is the following, which we will call the {\it standard vertex cover reduction}.
This is the reduction applied to obtain a quadratic kernel for {\vc} in \cite{BussG93}.
\begin{observation}\label{obs:vertexCover}
Given an instance $I=(\calu,\calf, k)$ of {\hs} where $\calf=\calf_1 \cup \calf_2$ such that every set in $\calf_2$ is of size at most $2$, then in time polynomial in $|I|$, either we conclude that $I$ is a no instance, or
we find a family $\calf_3$ of size at most $O(k^2)$ such that $(\calu,\calf,k)$ is a YES instance if and only if $(\calu, \calf_1 \cup \calf_3,k)$ is a YES instance.
\end{observation}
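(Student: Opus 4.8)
The plan is to adapt the Buss kernelization for \vc\ (the reduction cited in the statement) to the \hs\ setting, viewing the small sets of $\calf_2$ as the edges and forced vertices of a graph on $\calu$. Since every set of $\calf_2$ has size at most $2$, the classical degree-and-counting argument should carry over with only cosmetic changes. The one point that needs extra care is that the statement insists on keeping the \emph{same} budget $k$ and the \emph{same} universe $\calu$; so rather than deleting forced elements and lowering $k$, I would encode the forced elements back into $\calf_3$ as singleton sets.

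First I would dispose of degenerate sets: if $\calf_2$ contains the empty set it cannot be hit, so $I$ is a no instance; thereafter every set of $\calf_2$ has size $1$ or $2$. Next I would identify the forced elements. For $x\in\calu$ let $\deg(x)$ be the number of sets of $\calf_2$ containing $x$, and let $D=\{x: \{x\}\in\calf_2\}\cup\{x:\deg(x)>k\}$. The key claim is that every size-$\le k$ hitting set $S$ of $\calf_2$ contains $D$: a singleton $\{x\}$ can only be hit by $x$; and if $x\notin S$ has no singleton, then all of its $>k$ sets have size exactly $2$ with pairwise distinct other endpoints (the sets are distinct), so hitting them without using $x$ would already require more than $k$ elements. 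Hence if $|D|>k$ I declare $I$ a no instance.

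Then I would run the counting step on what remains. Let $\calf_2'$ be the sets of $\calf_2$ disjoint from $D$; each has size exactly $2$ and each of its elements has degree at most $k$. A hitting set of size $\le k$ covers at most $k\cdot k=k^2$ such sets, since each chosen element lies in at most $k$ of them; so if $|\calf_2'|>k^2$ I output NO. Otherwise I set $\calf_3=\{\{x\}:x\in D\}\cup\calf_2'$, which has size at most $k+k^2=O(k^2)$, and this is computed in polynomial time.

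It remains to verify the equivalence, and this is where the main obstacle lies, namely the backward direction under the ``same $k$'' constraint. In the forward direction, if $S$ hits $\calf_1\cup\calf_2$ with $|S|\le k$ then $D\subseteq S$ by the claim, so $S$ hits every singleton of $\calf_3$, hits $\calf_2'\subseteq\calf_2$, and hits $\calf_1$. In the backward direction, if $S'$ hits $\calf_1\cup\calf_3$ with $|S'|\le k$, the singletons force $D\subseteq S'$; since $\calf_2\setminus\calf_2'$ is exactly the collection of sets of $\calf_2$ meeting $D$, the inclusion $D\subseteq S'$ makes $S'$ hit all of $\calf_2\setminus\calf_2'$, while $S'$ hits $\calf_2'$ and $\calf_1$ directly, so $S'$ hits $\calf_1\cup\calf_2$. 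The delicate step is precisely to resist deleting $D$ and decrementing the budget, and instead to re-insert $D$ as singletons so that any solution of the reduced instance is compelled to contain $D$ and therefore automatically hits every set that was discarded.
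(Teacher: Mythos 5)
Your proposal is correct and is essentially the argument the paper has in mind: the observation is stated without proof precisely because it is the Buss--Goldsmith quadratic kernelization for \vc{} (forced high-degree/singleton elements, then a $k^2$ counting bound on the surviving size-two sets), transplanted to the \hs{} setting. Your extra care in re-encoding the forced set $D$ as singletons of $\calf_3$, so that the budget $k$ and universe $\calu$ stay unchanged, is a sound and faithful way to match the exact form of the statement.
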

\paragraph{}

We remark that this can be generalized to subfamilies of bounded size and we indeed do this in a slightly more general way later in this paper (Proposition \ref{prop:hittingSet}).

\paragraph{}
Let $(G,\calt,k)$ be the input instance of {\tms}. We may assume that $G$ is connected otherwise we do the following. If a terminal pair exists in $\calt$ such that both of its vertices belong to distinct connected component, then we conclude that input is a no instance, as there is no path between them. Otherwise, we introduce a vertex $z$ and  arbitrarily pick a vertex from every connected component of $G$, we connect $z$ to every picked vertex. By this modification, no new shortest path is introduced between any terminal pair in $\calt$ as every connected component of original $G$ is now connected to rest of the graph by only a single edge, hence no new cycle in $G$ is introduced. After this modification,  distance to cluster of $G$ is increased by at most $1$, vertex cover number  is increased by at most $1$, and feedback edge number remains same. Also observe that if neighborhood diversity of $G$ was $t$, then after this modification the neighborhood diversity of $G$ will be at most $2t+1$, which is sufficient for our purposes. Hence, we may assume that the input graph $G$ is always connected.

\subsection{Parameterized by Solution Size + Distance to Cluster}\label{sec:clustermodulator}
In this section, we give the proof of Theorem \ref{thm:clusterModulator}.
\paragraph{}

Given an input instance $(G,\calt,k)$ of {\tms}, 
We say that a family $H\subseteq \calt$ is \textit{core-invariant} if  family $\{SP(u,v)\mid \{u,v\} \in H\}$ forms a sunflower.

\begin{comment}
{\bf Reduction Rule 0:}
If there exist distinct vertices $u,v_1,\ldots,v_{k+2}$ such that $\{u,v_1\},\ldots,\{u,v_{k+1}\} \in T \cap E$, then include $u$ into the solution set and remove $\{u,v_1\},\ldots,\{u,v_{k+1}\}$ from $T$.
\end{comment}

\begin{rr} \label{rr:1}
    Given a core-invariant family $H$ with at least $k+2$ pairs, remove all but $k+1$ pairs of $H$ from $\calt$.
\end{rr}
\paragraph{}

 The above rule is safe because any set of size at most $k$ that hits $(k+1)$ sets in the sunflower must hit the core, and hence hit $SP(u,v)$ for every pair $\{u,v\}$ in $H$.

\begin{comment}
{\bf Reduction Rule 2:}
Let $(G,T,k)$ be an instance of TMS and let $u,v$ be two twin vertices in $V \setminus V(T)$. Then remove $v$ from $G$.

{\bf Correctness of Rule 2:}
Let $S$ be a solution to $(G \setminus \{v\},T,k)$. If $v \notin S$, then $S$ remains a solution for $(G,T,k)$; otherwise $S \setminus \{v\} \cup \{u\}$ is a solution for $(G,T,k)$ of the same size as $S$.
\qed
\end{comment}

\paragraph{}
For the input graph $G=(V,E)$, let $M\subseteq V$ be a cluster deletion set of size at most $q$. Let these clique components be $C_1,C_2\ldots,C_r$ and let $\calc=\bigcup_{i\in[r]}V(C_i)$.
Further, we define the following.
\begin{itemize}
    \item $T_0=\{\{u,v\} \mid \{u,v\}\in (\calt \cap E)\}$;
    \item $T_1=\{\{u,v\} \mid \{u,v\}\in (\calt \setminus T_0)\ \land \ u\in \calc\ \land \ v \in M\}$; 
    \item $T_2=\{\{u,v\} \mid \{u,v\}\in (\calt \setminus T_0) \ \land \ u,v \in \calc\}$.
    \item $T_3=\{\{u,v\} \mid \{u,v\}\in (\calt \setminus T_0) \ \land \ u,v \in M\}$.
\end{itemize}

Observe that $\{T_0,T_1,T_2,T_3\}$ is a partition of $\calt$.
We first reduce and bound the size of $T_2$.

\subsubsection{Reducing and Bounding Size of $T_2$.}
We fix an arbitrary ordering between both the vertices of every pair $\{u,v\}\in T_2$, and thus $\{u,v\}$ is denoted by $(u,v)$ or $(v,u)$ depending on the ordering. The ordering will not affect $SP(u,v)$ for $\{u,v\}$. 
\paragraph{}
We say that a pair $(u,v)\in T_2$ is of \textit{type} $X$ if $X$ is the smallest subset of $M \times \{1,2\} \times M \times \{1,2\}$ such that the following holds: for every shortest path $P$ from $u$ to $v$, there exists $(x,i,y,j) \in X$ such that $x$ and $y$ are the closest vertices in $V(P) \cap M$ to $u$ and $v$ respectively, and $d(u,x)=i$, $d(v,y)=j$.
\paragraph{}
Note that the number of possible types $X$ is at most $2^{4q^2}$. Every pair $(u,v)\in T_2$ is of some type $X$. This is because $u$ and $v$ belong to distinct cliques, and thus every shortest path from $u$ to $v$ must go through the vertices of $M$, and for every vertex in $C$ its distance from its closest vertex in $M$ is at most $2$. Further, given $(u,v)$ and $X$, we can verify in polynomial time whether $(u,v)$ is of type $X$ by verifying the condition that a tuple $(x,i,y,j) \in M \times \{1,2\} \times M \times \{1,2\}$ belongs to $X$ if and only if $d(u,x)=i$, $d(y,v)=j$, and $d(x,y)+i+j= d(u,v)$.
\paragraph{}

For each $X\subseteq M \times \{1,2\} \times M \times \{1,2\}$, we define an auxiliary graph $H_X=(V_X,E_X)$ with $V_X=\{c_1,c_2,\ldots,c_r\}$ and $\{c_i,c_j\} \in E_X$ if there exists a pair $(u,v)\in T_2$ of type $X$ such that one of its endpoint (either $u$ or $v$) belong to $C_i$ and the other endpoint belong to $C_j$.
These auxiliary graphs $H_X$ can be computed in polynomial time.

\begin{claimn}\label{claim:spsubset}
    For every $(u,v)\in T_2$ and its type $X$, the following holds: $SP(u,v)\supseteq \bigcup_{(x,i,y,j)\in X} SP(x,y)$.
\end{claimn}
\begin{proof}
    By definition, for every $(x,i,y,j)\in X$, there exists a shortest path $P$ from $u$ to $v$ such that $x,y\in V(P)$, and thus every vertex in $SP(x,y)$ must belong to $SP(u,v)$. 
\end{proof}

\begin{claimn}\label{clm:edges}
    If $|E_X|>(2(k+2))^3$, then we can find a core-invariant family with at least $(k+2)$ pairs, and hence apply Reduction Rule \ref{rr:1}.
\end{claimn}

\begin{proof}
    If $|E_X|>(2(k+2))^3$, then $H_X$ must contain a vertex of degree at least $(2(k+2))^2$ or (by Vizing's theorem), a matching $A$ of size at least $2(k+2)$.\\
    {\bf{Case 1}}: $H_X$ contains a matching $A$ of size at least $2(k+2)$. In this case, for every edge in $A$, pick exactly one terminal pair from $T_2$ which corresponds to its construction, let $A^*$ be these picked pairs, every pair in $A^*$ is of type $X$. We claim that $A^*$ is a core-invariant family with at least $k+2$ pairs, to see this consider the following arguments.
    By Claim \ref{claim:spsubset}, we have that for every distinct $(u_1,v_1),(u_2,v_2)\in A^*$, $SP(u_1,v_1)\cap SP(u_2,v_2)\supseteq \bigcup_{(x,i,y,j)\in X} SP(x,y)$; the equality holds since every vertex in $\bigcup_{\{u,v\}\in A^*}\{u,v\}$ belongs to a distinct cluster and if there exist a vertex $w$ in  $SP(u_1,v_1)\cap SP(u_2,v_2)$ which do not belong to $\bigcup_{(x,i,y,j)\in X} SP(x,y)$, then $w$ must belong to the clique of one of $u_1,v_1,u_2,v_2$. W.l.o.g. let $w$ belong to the clique of $u_1$, in this case $w$ can not belong to $SP(u_2,v_2)$ as it do not belong to $\bigcup_{(x,i,y,j)\in X} SP(x,y)$, a contradiction. Hence, for every distinct $(u_1,v_1),(u_2,v_2)\in A^*$, $SP(u_1,v_1)\cap SP(u_2,v_2)= \bigcup_{(x,i,y,j)\in X} SP(x,y)$.
    Hence, using $A^*$ we can apply Reduction Rule \ref{rr:1}.
    
    {\bf{Case 2}}: $H_X$ contains a vertex $c_l$ of degree at least $(2(k+2))^2$. For every edge incident on $c_l$, pick exactly one terminal pair from $T_2$ which corresponds to its construction, let $B$ be the set of these picked pairs. Let $V_B= \bigcup_{\{u,v\}\in B}\{u,v\}$.
    Consider an auxiliary graph $G_B=(V_B,B)$, here the ordering of pairs of $B$ is not considered, thus $G_B$ is un-directed. By construction, one endpoint of every edge in $G_B$ belongs to clique $C_l$ and other endpoint belong to a distinct maximal clique of $G-M$.
    Since $G_B$ has at least $(2(k+2))^2$ edges, there must either be a vertex $u$ in $G_B$ which belong to clique $C_l$ and has degree $2(k+2)$ in $G_B$, or there must be a matching $M_B$ in $G_B$ of size $2(k+2)$. We discuss both the cases separately in the following.
    
    {\bf{Case 2a}}: A vertex $u$ in $G_B$ belonging to $C_l$ has degree $2(k+2)$ in $G_B$. Let $U$ be the set of edges (terminal pairs in $T_2$) incident on $u$ in $G_B$. We recall that the pairs in $T_2$ were assigned arbitrary ordering; let $U_1$ be those terminal pairs of $U$ where the first vertex is $u$ (Ex. $(u,v)$), and let $U_2$ be those terminal pairs in $U$ where the second vertex is $u$ (Ex. $(x,u)$). 
    At least one of $U_1$ or $U_2$ has at least $k+2$ pairs, let it be $U_1$. For every pair in $U_1$ the second vertex belongs to a distinct maximal clique in $G-M$. We claim for every distinct $(u,v_1),(u,v_2)\in U_1$, $SP(u,v_1)\cap SP(u,v_2)= \bigcup_{(x,i,y,j)\in X} (SP(u,x)\cup SP(x,y))$ and thus $U_1$ is core-invariant family. It is straight forward to verify that for every distinct $(u,v_1),(u,v_2)\in U_1$, $SP(u,v_1)\cap SP(u,v_2)\supseteq \bigcup_{(x,i,y,j)\in X} (SP(u,x)\cup SP(x,y))$, we move on to show equality. Let a vertex $w$ in $SP(u,v_1)\cap SP(u,v_2)$ exists which do not belong to $\bigcup_{(x,i,y,j)\in X} (SP(u,x)\cup SP(x,y))$, then $w$ must belong to the maximal clique of one of $v_1,v_2$. W.l.o.g. let $w$ belong to the maximal clique of $v_1$, in this case $w$ can not belong to $SP(u,v_2)$ as it do not belong to $\bigcup_{(x,i,y,j)\in X} (SP(u,x)\cup SP(x,y))$, a contradiction. Similar arguments holds when $U_2$ has at least $k+2$ pairs. Hence, using $U_1$ or $U_2$ we apply Reduction Rule \ref{rr:1}.
    
    {\bf{Case 2b}}: There exists a matching $M_B$ in $G_B$ of size $2(k+2)$, then for every edge in $M_B$, its one endpoint is a distinct vertex of clique $C_l$ and the other endpoint is a vertex of a distinct maximum clique in $G-M$. Let $M_{B,1}$ (resp $M_{B,2}$) be the set of those terminal pairs of $M_B$ where the first vertex (resp. second vertex) belongs to $C_l$. Either $M_{B,1}$ or $M_{B,2}$ has at least $k+2$ pairs, let it be $M_{B,1}$. 
     Let $V_l=\bigcup_{(u,v)\in M_{B,1}}\{u\}$ that is $V_l$ consists of all the vertices of $C_l$ which belong to a pair in $M_{B,1}$. For every $(u,v)\in M_{B,1}$, we define $L_u^*=(\bigcup_{(x,i,y,j)\in X} SP(u,x) \cap V(C_l))\setminus V_l$. we claim that for every distinct $(u_1,v_1),(u_2,v_2)\in M_{B,1}$, $L_{u_1}^* = L_{u_2}^*$, this is because every $L_u^*$ contains only those vertices of $C_l$ which do not belong to any pair in $M_{B,1}$ and are adjacent to an $x$ such that $(x,i,y,j)\in X$ because the distance between $u$ and $x$ is at most $2$. Thus, we now use $L^*$ instead of $L_{u}^*$ as the value $L_{u}^*$ for every $(u,v)\in M_{B,1}$ is same. We now claim that for every distinct $(u_1,v_1),(u_2,v_2)\in M_{B,1}$, $SP(u_1,v_1)\cap SP(u_2,v_2)= L^*\cup (\bigcup_{(x,i,y,j)\in X} SP(x,y))$ and thus $M_{B,1}$ is a core-invariant family. For the proof, let $w$ be a vertex in $SP(u_1,v_1)\cap SP(u_2,v_2)$ which do not belong to $L^*\cup (\bigcup_{(x,i,y,j)\in X} SP(x,y))$, then $w$  must belong to either $C_l$ or to the maximal clique of either $v_1$ or $v_2$. If $w$ belong to the maximal clique of either $v_1$ then it can not belong to $SP(u_2,v_2)$, similarly if $w$ belong to the maximal clique of either $v_2$ then it can not belong to $SP(u_1,v_1)$, a contradiction in both cases. We now consider the case when $w$ belongs to $C_l$, in this case $w$ belongs to both $(\bigcup_{(x,i,y,j)\in X} SP(u_1,x))$ and $(\bigcup_{(x,i,y,j)\in X} SP(u_2,x))$, and since $w$ is not in $L^*$, $w$ must belong to $V_l$.
     Further, since $u_1\neq u_2$ we have that $w\neq u_1$ and $w\neq u_2$. In this case there exists a pair $(w,w')\in M_{B,1}$, and thus $d(w,x)=d(u_1,x)=d(u_2,x)=i$ for every $(x,i,y,j)\in X$, contradicting that $w$ belongs to $\bigcup_{(x,i,y,j)\in X}SP(u_1,x)$. It follows that $M_{B,1}$ is core-invariant with at least $k+2$ pairs, similar arguments holds when $M_{B,2}$ has at least $k+2$ pairs. Hence, using at least one of  $M_{B,1}$ and $M_{B,2}$ we can apply Reduction Rule \ref{rr:1}.
\end{proof}

\begin{claimn}\label{clm:vertices}
    If there exist distinct $p,q\in[r]$ such that there are more than $(2(k+2))^2$ distinct pairs in $T_2$ of a same type $X$ with one endpoint in $C_p$ and another endpoint in $C_q$, then we can find a core-invariant family with at least $k+2$ pairs. Hence, apply Reduction Rule \ref{rr:1}.
\end{claimn}

\begin{proof}
    Let $B$ be the set of all the pairs in $T_2$ of type $X$ with one endpoint in $C_p$ and another endpoint in $C_q$ for distinct $p,q\in[r]$.
    Let $V_B= \bigcup_{\{u,v\}\in B}\{u,v\}$.
    Consider an auxiliary graph $G_B=(V_B,B)$, here the ordering of pairs of $B$ is not considered, thus $G_B$ is undirected. By construction, one endpoint of every edge in $G_B$ belongs to clique $C_p$ and other endpoint belong to $C_q$ hence $G_B$ is a bipartite graph.
    Since $G_B$ has at least $(2(k+2))^2$ edges, there must either be a vertex $u$ in $G_B$ which has degree $2(k+2)$ in $G_B$, or there must be a matching $M_B$ in $G_B$ of size $2(k+2)$. We discuss both the cases separately in the following.

    {\bf{Case 1}}: A vertex $u$ in $G_B$ has degree $2(k+2)$ in $G_B$. Let $u$ belong to $C_p$ and the case when $u$ belong to $C_q$ follows similar arguments. Let $U$ be the set of edges (terminal pairs in $T_2$) incident on $u$ in $G_B$. We recall that the pairs in $T_2$ were assigned arbitrary ordering; let $U_1$ be those terminal pairs of $U$ where the first vertex is $u$ (Ex. $(u,v)$), and let $U_2$ be those terminal pairs in $U$ where the second vertex is $u$ (Ex. $(x,u)$). 
    At least one of $U_1$ or $U_2$ has at least $k+2$ pairs, let it be $U_1$. For every pair in $U_1$ the second vertex belongs to $C_q$. Let $V_q=\bigcup_{(u,v)\in U_1}\{v\}$ that is $V_q$ consists of all the vertices of $C_q$ which belong to a pair in $U_1$. For every $(u,v)\in U_1$, we define $R_v^*=(\bigcup_{(x,i,y,j)\in X} SP(y,v) \cap V(C_q))\setminus V_q$. we claim that for every distinct $(u,v_1),(u,v_2)\in U_1$, $R_{v_1}^* = R_{v_2}^*$, this is because every $R_u^*$ contains only those vertices of $C_q$ which do not belong to any pair in $U_1$ and are adjacent to a $y$ such that $(x,i,y,j)\in X$, and because the distance between $v$ and $y$ is at most $2$. Thus, we now use $R^*$ instead of $R_{v}^*$ as the value $R_{v}^*$ for every $(u,v)\in U_1$ is same. We now claim that for every distinct $(u,v_1),(u,v_2)\in U_1$, $SP(u,v_1)\cap SP(u,v_2)=  (\bigcup_{(x,i,y,j)\in X} (SP(u,x)\cup SP(x,y)))\cup R^*$, and thus $U_1$ is a core-invariant family. For the proof, let $w$ be a vertex in $SP(u,v_1)\cap SP(u,v_2)$ which do not belong to $(\bigcup_{(x,i,y,j)\in X} (SP(u,x)\cup SP(x,y)))\cup R^*$, then $w$  must belong to $C_q$, in this case $w$ belongs to both $(\bigcup_{(x,i,y,j)\in X} SP(y,v_1))$ and $(\bigcup_{(x,i,y,j)\in X} SP(y,v_2))$, and since $w$ is not in $R^*$, $w$ must belong to $V_q$.
     Further, since $v_1\neq v_2$ we have that $w\neq v_1$ and $w\neq v_2$. In this case there exists a pair $(w',w)\in U_1$, and thus $d(y,w)=d(y,v_1)=d(y,v_2)=j$ for every $(x,i,y,j)\in X$, contradicting assumption that $w$ belongs to $\bigcup_{(x,i,y,j)\in X}SP(y,v_1)$. It follows that $U_1$ is core-invariant with at least $k+2$ pairs, similar arguments holds when $U_2$ has at least $k+2$ pairs. Hence, using at least one of  $U_1$ and $U_2$ we can apply Reduction Rule \ref{rr:1}.

    {\bf{Case 2:}} There exists a matching $M_B$ in $G_B$ of size $2(k+2)$. Then for every edge in $M_B$, its one endpoint is a distinct vertex of clique $C_p$ and the other endpoint is a distinct vertex $C_q$. Let $M_{B,1}$ (resp $M_{B,2}$) be the set of those terminal pairs of $M_B$ where the first vertex (resp. second vertex) belongs to $C_l$. Either $M_{B,1}$ or $M_{B,2}$ has at least $k+2$ pairs, let it be $M_{B,1}$. Let $V_p=\bigcup_{(u,v)\in M_{B,1}}\{u\}$, and  $V_q=\bigcup_{(u,v)\in M_{B,1}}\{v\}$. And for every $(u,v)\in M_{B,1}$, we define $L_u^*=(\bigcup_{(x,i,y,j)\in X} SP(u,x) \cap V(C_p))\setminus V_p$, and $R_v^*=(\bigcup_{(x,i,y,j)\in X} SP(y,v) \cap V(C_q))\setminus V_q$. We claim that for every distinct $(u_1,v_1),(u_2,v_2)\in M_{B,1}$, $L_{u_1}^* = L_{u_2}^*$ and $R_{v_1}^* = R_{v_2}^*$, this is because every $L_u^*$ (resp. $R_v^*$) contains only those vertices of $C_p$ (resp. $C_q$ which do not belong to any pair in $M_{B,1}$ and are adjacent to an $x$ (resp. $y$) such that $(x,i,y,j)\in X$, and because the distance between $u$ and $x$, and $y$ and $v$ is at most $2$. Thus, we now use $L^*$ (resp $R^*$) instead of $L_{u}^*$  (resp. $R_v^*$) as the value $L_{u}^*$ (resp. $R_{v}^*$) for every $(u,v)\in M_{B,1}$ is same. 
\paragraph{}
    We now claim that for every distinct $(u_1,v_1),(u_2,v_2)\in M_{B,1}$, $SP(u_1,v_1)\cap SP(u_2,v_2)= L^*\cup (\bigcup_{(x,i,y,j)\in X} SP(x,y)) \cup R^*$ and thus $M_{B,1}$ is a core-invariant family. For the proof, let $w$ be a vertex in $SP(u_1,v_1)\cap SP(u_2,v_2)$ which do not belong to $L^*\cup (\bigcup_{(x,i,y,j)\in X} SP(x,y))\cup R^*$, then $w$  must belong to either $C_p$ or to $C_q$. If $w$ belongs to $C_p$, then $w$ belongs to both $(\bigcup_{(x,i,y,j)\in X} SP(u_1,x))$ and $(\bigcup_{(x,i,y,j)\in X} SP(u_2,x))$, and since $w$ is not in $L^*$, $w$ must belong to $V_p$.
     Further, since $u_1\neq u_2$ we have that $w\neq u_1$ and $w\neq u_2$. In this case there exists a pair $(w,w')\in M_{B,1}$, and thus $d(w,x)=d(u_1,x)=d(u_2,x)=i$ for every $(x,i,y,j)\in X$, contradicting that $w$ belongs to $\bigcup_{(x,i,y,j)\in X}SP(u_1,x)$. Else if $w$ belongs to $C_q$, then $w$ belongs to both $(\bigcup_{(x,i,y,j)\in X} SP(y,v_1))$ and $(\bigcup_{(x,i,y,j)\in X} SP(y,v_2))$, and since $w$ is not in $R^*$, $w$ must belong to $V_q$.  Further, since $v_1\neq v_2$ we have that $w\neq v_1$ and $w\neq v_2$. In this case there exists a pair $(w',w)\in M_{B,1}$, and thus $d(y,w)=d(y,v_1)=d(y,v_2)=i$ for every $(x,i,y,j)\in X$, contradicting that $w$ belongs to $\bigcup_{(x,i,y,j)\in X}SP(u_1,x)$.
     It follows that $M_{B,1}$ is core-invariant with at least $k+2$ pairs, similar arguments holds when $M_{B,2}$ has at least $k+2$ pairs. Hence, using at least one of  $M_{B,1}$ and $M_{B,2}$ we can apply Reduction Rule \ref{rr:1}.
\end{proof}
\paragraph{}

For every type $X$, we apply Reduction Rule \ref{rr:1} exhaustively while the condition in Claim \ref{clm:edges} or the condition in Claim \ref{clm:vertices} holds.
We now have $|E_X| \leq (2(k+2))^3$. Further, for every edge $\{i,j\} \in E_X$, the number of pairs in $T_2$ with one endpoint in $C_p$ and the other end-point in $C_q$ is at most $(2(k+2))^2$ (because Claim 2 is not applicable).
Thus, we obtain a reduced set $T_2$ such that 
$|T_2| \leq (2(k+2))^5 2^{4q^2}$.

\subsubsection{Reducing and Bounding Size of $T_1$.}
We shall now reduce the size of $T_1$ in a similar manner. We fix an ordering between both the vertices of every pair $\{u,v\}\in T_1$ such that its first vertex belong to $\calc$, and thus $\{u,v\}$ denoted by $(u,v)$ when $u$ belongs to $\calc$ and $v$ belongs to $M$. We say that a pair $(u,v)\in T_1$ is of \textit{type} $X$ if $X$ is the smallest subset of $M \times \{1,2\}$ such that the following holds: for every shortest path $P$ from $u$ to $v$, there exists $(x,i) \in X$ such that $x$ is the closest vertex in $V(P) \cap M$ to $u$ and $d(u,x)=i$.
\paragraph{}

As before, for every $X\subseteq M\times \{1,2\}$, we define an auxiliary bipartite graph $H_X=(V_X,E_X)$, where $V_X= \{c_1,c_2,\ldots,c_r\} \cup M$ and $\{c_i,m\} \in E_X$ if and only if there exists a pair $(u,m)\in T_1$ of type $X$ such that $u \in C_i$.

\begin{claimn}\label{clm:T1Degree}
If there is a vertex $m\in M$ with at least $(k+2)$ neighbors in $H_X$, then we can find a core-invariant family with at least $(k+2)$ pairs and apply Reduction Rule \ref{rr:1}.
\end{claimn}

\begin{claimn}\label{clm:T1vertices}
     If there exists $i\in[r]$, $m\in M$, and $X\subseteq M\times\{1,2\}$ such that there are at least $(k+2)$ distinct pairs in $T_1$ of type $X$ with one endpoint in $C_i$ and other endpoint being vertex $m$, then we can find a core-invariant family at least $k+2$ pairs, and apply Reduction Rule \ref{rr:1}.
\end{claimn}

\begin{comment}
    \begin{claimn}\label{clm:T1vertices}
Suppose that $\{u,v_1\},\ldots,\{u,v_{k+2}\}$ are of the same type, with $u \in M$ and $v_1,\ldots,v_{k+2} \in C_i$ for some $i$.
Then we can remove $\{u,v_{k+2}\}$ from $T$.
\end{claimn}
\end{comment}
\paragraph{}

The above claims are similar to that of Claim \ref{clm:edges} and Claim \ref{clm:vertices}, hence we skip their proof. For every type $X\subseteq M\times\{1,2\}$, we apply Reduction Rule \ref{rr:1} exhaustively while the condition in Claim \ref{clm:T1Degree} or the condition in Claim \ref{clm:T1vertices} holds. We now have: $|T_1| \leq q(k+2)^2 2^{2q}$.

\subsubsection{Reducing and Bounding Size of $T_0$ and $T_3$.}
\paragraph{}

For every pair in $T_3$, both the vertices belong to $M$ and since $|M|=q$, there can be at most $O(q^2)$ pairs in $T_3$.
\paragraph{}

For $i\in \{0,1,2,3\}$, we define $\calt_i=\{SP(u,v)\mid\{u,v\} \in T_i\}$. We construct an instance $I=(V(G),\calt_1\cup\calt_2\cup\calt_3\cup\calt_0,k)$ of {\hs} which is an equivalent instance of $(G,\calt,k)$ of {\tms}. Observe that for every pair $\{u,v\}\in T_0$, $SP(u,v)=\{u,v\}$ as $u,v$ are neighbors, thus every set in $\calt_0$ is of size at most $2$. Thus, we apply Observation \ref{obs:vertexCover}, and either conclude that $I$ is a no instance or obtain an equivalent instance $I'=(V(G),\calt_1\cup\calt_2\cup\calt_3\cup\calt_0^*,k)$ where $|\calt_0^*|=O(k^2)$.

\paragraph{}

Thus, the number of sets in $\calt_1\cup\calt_2\cup\calt_3\cup\calt_0^*$ are bounded by $O(q(k+2)^5 2^{4q^2})$.
We then solve the {\hs} instance $I'$ in time FPT in $q+k$. This completes the proof of Theorem \ref{thm:clusterModulator}.

\subsection{Parameterized by Neighborhood Diversity}
In this section we give proof of Theorem \ref{thm:ND}. 
For the proof we use existing kernels for the {\hs} problem with a small modification.

\begin{definition}
For a family $\calf$ and a set $S\in \calf$, we define the \textit{effective size} of $S$ with respect to $\calf$ as $\Upsilon(\calf,S)=|S \setminus core(\calf)|$. Further, we define the \textit{effective size bound} of $\calf$ as $\Upsilon(\calf)= \textsc{max}\{\Upsilon(\calf,S)\mid S\in \calf\}$.
\end{definition}
\paragraph{}

The following proposition is an adaptation of a well-known application of the sunflower lemma to {\hs}, see \cite{Abu-Khzam10,DBLP:books/sp/CyganFKLMPPS15,FlumG06}.

\begin{proposition}\label{prop:hittingSet}
There is an algorithm, that given set families $\calf_1,\ldots,\calf_m$ each with an effective size bound at most $d$, and an integer $k$, finds set families $\calc_1,\ldots,\calc_m$ in time polynomial in $m+d$, and the number of sets such that:
\begin{itemize}
    \item $|\calc_i|=O(k^d\cdot d!)$;
    \item $(\calu,\bigcup_{i\in[m]}\calf_i,k)$ and $(\calu,\bigcup_{i\in[m]}\calc_i,k)$ are equivalent instances of {\hs}.
\end{itemize}
\end{proposition}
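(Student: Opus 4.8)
The plan is to process each family $\calf_i$ independently with a sunflower-based reduction, handling the fact that the sets in $\calf_i$ may be arbitrarily large by first quotienting out the common core. Fix $i$ and let $C_i=core(\calf_i)$. Because $\Upsilon(\calf_i)\le d$, every reduced set $S\setminus C_i$ with $S\in\calf_i$ has size at most $d$; moreover, since distinct $S,T\in\calf_i$ both contain $C_i$, the map $S\mapsto S\setminus C_i$ is injective on $\calf_i$. I would therefore work with the reduced family $\calf_i'=\{S\setminus C_i\mid S\in\calf_i\}$, whose members all have size at most $d$ and which has the same cardinality as $\calf_i$.

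Next I would apply the sunflower lemma to $\calf_i'$. If $|\calf_i'|>d!\,(k+1)^d$, then, since its sets have size at most $d$, $\calf_i'$ contains a sunflower with $k+2$ petals $S_1',\dots,S_{k+2}'$ and some core $Y'$. The central observation is that this structure lifts back to the originals: writing $S_j=S_j'\cup C_i$, one checks that $S_a\cap S_b=Y'\cup C_i$ for all $a\ne b$, so $S_1,\dots,S_{k+2}$ form a sunflower in $\calf_i$ with core $Y'\cup C_i$ and pairwise-disjoint petals $S_j\setminus(Y'\cup C_i)=S_j'\setminus Y'$. Thus, although the original sets are large, their intersection pattern is still a genuine sunflower, and the sunflower lemma was only ever invoked on sets of size at most $d$.

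I would then apply the standard deletion rule: remove $S_{k+2}$ from $\calf_i$. For safety, suppose $H$ with $|H|\le k$ hits every remaining set of the combined instance; in particular it hits $S_1,\dots,S_{k+1}$. If $H$ missed $Y'\cup C_i$, it would have to meet each of the $k+1$ pairwise-disjoint petals $S_1\setminus(Y'\cup C_i),\dots,S_{k+1}\setminus(Y'\cup C_i)$, forcing $|H|\ge k+1$, a contradiction; hence $H$ meets $Y'\cup C_i\subseteq S_{k+2}$ and so hits $S_{k+2}$ as well. This argument refers only to sets inside $\calf_i$ and the global bound $|H|\le k$, so the deletion preserves the answer of $(\calu,\bigcup_{i}\calf_i,k)$ regardless of the other families. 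Applying the rule exhaustively yields $\calc_i\subseteq\calf_i$ with $|\calc_i|\le d!\,(k+1)^d=O(k^d\cdot d!)$; the reverse implication, that any hitting set for $\bigcup_i\calf_i$ also hits $\bigcup_i\calc_i$, is immediate from $\calc_i\subseteq\calf_i$.

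For the running time, the sunflower lemma is constructive, so a sunflower with $k+2$ petals among size-$\le d$ sets is extractable in time polynomial in $d$ and the number of sets; each application deletes one set, so the number of applications is bounded by the total number of sets, giving overall time polynomial in $m+d$ and the number of sets. The step I expect to require the most care is the lifting: one must verify both the injectivity of $S\mapsto S\setminus C_i$ and the identity $S_a\cap S_b=Y'\cup C_i$ so that peeling off $C_i$ neither destroys nor spuriously creates sunflower structure, and one must use exactly $k+2$ petals—rather than $k+1$—to make the deletion rule sound for parameter $k$.
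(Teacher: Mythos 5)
Your proof is correct and is essentially the argument the paper intends: the paper gives no explicit proof of Proposition~\ref{prop:hittingSet}, describing it only as ``an adaptation of a well-known application of the sunflower lemma'' to \hs{}, and your quotienting-by-the-core step is precisely the adaptation needed to apply the standard $d$-\hs{} sunflower kernelization to families whose sets are large but have effective size at most $d$. The lifting argument, the $k+2$-petal deletion rule, and the $d!\,(k+1)^d$ bound all match the standard construction from the cited references, so nothing is missing.
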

\paragraph{}

Let $(G,\calt,k)$ be an input to {\tms} with $G$ having neighborhood diversity $t$. We assume that $G$ is connected. Let $V(G)=V_1 \cup \ldots V_t$ where all the vertices of  each $V_i$ have the same type. Observe that each $V_i$ either forms  a clique or an independent set in $G$.
\paragraph{}

For $1\leq i \leq j \leq t$, let $T_{i,j}=\{\{u,v\} \mid \{u,v\}\in \calt \ \land \ u\in V_i \ \land \ v \in V_j\}$. We define the family $\calf_{i,j}=\{SP(u,v)\mid \{u,v\} \in T_{i,j}\}$, and construct $I= (V(G),\bigcup_{1\leq i\leq j\leq t}\calf_{i,j},k)$
Observe that every set in every $\calf_{i,j}$ contains one vertex from $V_i$ and one vertex from $V_j$; and thus, has an effective size bound $\leq 2$, where $core(\calf_{i,j})\supseteq\left(\cup_{S \in \calf_{i,j}} S\right) \setminus (V_i \cup V_j)$. Thus, using Proposition \ref{prop:hittingSet}, we obtain an equivalent instance of {\hs} with $O(t^2k^2)$ sets, and solve it in time FPT in $t+k$. This finishes the proof of Theorem \ref{thm:ND}.

\subsubsection{Alternative proof of Theorem \ref{thm:ND}}
\paragraph{}
For the proof, we follow terminologies used in Section \ref{sec:clustermodulator}.
For $1\leq i \leq j \leq t$, let $T_{i,j}=\{\{u,v\} \mid \{u,v\}\in \calt \ \land \ u\in V_i \ \land \ v \in V_j\}$, further we define graph $H_{i,j}= (V_i\cup V_j, T_{i,j})$. 

\begin{claimn}\label{clm:ND2}
    For $1\leq i \leq j\leq t$,  If $|T_{i,j}|>(k+2)^2$, then we can find a core-invariant family with at least $(k+2)$ pairs, and hence apply Reduction Rule \ref{rr:1}.
\end{claimn}
\begin{proof}
    By Vizing's theorem, $H_{i,j}$ will either have a vertex of degree at least $k+2$ or a matching of size $k+2$.\\
    \textbf{Case 1:} There is a vertex $u$ with degree at least $k+2$. In this case, consider the following two cases. \\
    \textbf{Case 1a:} if $i\neq j$. W.l.o.g. let $u$ belong to $V_i$, in this case all the neighbors of $u$ in $H_{i,j}$ belong to $V_j$ and they have the same type in $G$, the pairs containing $u$ in $T_{i,j}$ forms a core invariant family.\\
    \textbf{Case 1b:} If $i=j$. In this case if $V_i$ forms a clique, then all the neighbors of $u$ in $H_{i,i}$ are also neighbors of $u$ in $G$, hence the pairs containing $u$ in $T_{i,i}$ forms a core invariant family. If $V_i$ forms an independent set, then $u$ and all the neighbors of $u$ in $H_{i,i}$ forms an independent set and have the same type in $G$, hence the pairs containing $u$ in $T_{i,i}$ forms a core invariant family.\\
     \textbf{Case 2:} There is a matching $A$ of size least $k+2$. In this case,  Consider the following two cases.\\
     \textbf{Case 2a:} If $i\neq j$, then for every pair in $A$, one endpoint is a distinct vertex of $V_i$ and other endpoint is a distinct vertex of $V_j$, since all the vertices of $V_i$ (resp. $V_j$) have the same type, $A$ forms a core invariant family.\\
     \textbf{Case 2b:} If $i=j$. In this case all the pairs in $A$ are mutually disjoint, and all the vertices in $A$ have the same type, $A$ forms a core invariant family.
\end{proof}
\paragraph{}

If condition in Claim \ref{clm:ND2} is no longer applicable, then for every $1\leq i\leq j \leq t$, $|T_{i,j}|\leq O(k^2)$, and hence $|\bigcup_{1\leq i\leq j\leq t} T_{i,j}| =O(t^2k^2)$. Hence, we obtain an equivalent instance of {\hs} with $O(t^2k^2)$ sets, and solve it in time FPT in $t+k$.

\subsection{Weighted Version Parameterized by Vertex Cover Number }
In this section we give proof of Theorem \ref{thm:vc}.
In this section, we consider weighted-{\tms} where the underlying graph $G$ has positive weights on its edges. The distance is then the shortest weighted distance and $SP(u,v)$ is defined accordingly.
\paragraph{}

Let $(G,\calt,k)$ be the input instance of weighted-{\tms}, and let $C=\{v_1,\ldots,v_t\}$ be a given vertex cover of $G$ and $I=V \setminus C$.
Note that we may assume that $k \leq t$; otherwise $C$ itself is a terminal monitoring set of size at most $k$.
\paragraph{}

Let $M$ be a binary matrix indexed by $C$, a terminal monitoring set $S$ of $\calt$ is called $M$-compatible if the following holds:
\begin{equation}\label{eq:weightedvc}
\forall u,v \in C, S \cap SP(u,v) \neq \emptyset \Leftrightarrow M[u,v]=1. 
\end{equation}
\paragraph{}

We note that by the above condition, a vertex $v \in S$ if and only if $M[v,v]=1$.
Observe that for every terminal monitoring set $S$, there exists a binary $M$ matrix indexed by $C$ by such that $S$ is $M$-compatible. The idea is to first guess a matrix $M$, and then verify in desired time if an $M$-compatible terminal set $S$ of size at most $k$ exists or not; we make exhaustive guesses on $M$, and if for any $M$, an $M$-compatible terminal set of size at most $k$ exist, then we conclude that $(G,\calt,k)$ is a yes instance, otherwise we conclude that its a no instance.
\paragraph{}

Given an $M$, to verify if an $M$-compatible terminal set $S$ of size at most $k$ exist, we construct and solve an instance of {\hs} in time FPT in $|C|$.
\paragraph{}

To this end, first we guess $M$.
and let $C_1=\{v\mid v\in C \ \land \  M[v,v]=1\}$ and $C_0=C \setminus C_1$.
Let $S_0=\bigcup_{u,v\in C \land M[u,v]=0}SP(u,v)$.
Note that for the guessed $M$, we are looking for a terminal monitoring set $S$ such that $S \cap S_0=\emptyset$. If there exist $u,v \in C$ such that $M[u,v]=1$ and $SP(u,v)\setminus S_0 =\emptyset$, or there exists a $\{x,y\}\in\calt$ such that $SP(x,y)\setminus S_0 =\emptyset$, then no $M$ compatible solution exists and we make the next guess of $M$.
\paragraph{}
We now move on to the construction of an instance of {\hs}. Let $\calf_1=\{SP(u,v)\setminus S_0 \mid u,v \in C \ \land \ M[u,v]=1\}$. 
%The family $\calf$ will include all the sets of $\calf_1$ so that $SP(u,v)$ is hit for every $u,v \in C$ for which $M(u,v)=1$. \\
Let $T_1=\{\{u,v\} \mid \{u,v\}\in T \land \exists f\in\calf_1 (f \subseteq SP(u,v))\}$.
Then every hitting set for $\calf_1$ also hits $SP(u,v)$ for all terminal pairs $\{u,v\} \in T_1$.
Thus, we now consider $T_0=T \setminus T_1$.
We also note that the set $T_0$ can be computed in polynomial time. We now move on to construct a family $\calf_2$ which will be part of the instance of {\hs}.

\begin{rr}\label{rr:2}
    Let $T_0^*=T_0$, we initialize a family $\calf_2=\emptyset$.
While there exists a pair $\{u,v\} \in T_0^*$
such that $u \in I,v \in C_0$, do the following:
\begin{itemize}
    \item $\calf_2=\calf_2\cup \{u\}$;
    \item Let $T_0^*=T_0^* \setminus \{\{x,y\}\mid \{x,y\}\in T_0 \ \land \ u \in SP(x,y)$\}.
\end{itemize}
\end{rr}

 \paragraph{}

The above reduction is sound, since $\{u,v\}\in T_0$ and $M[w,v]=0$ for every neighbor $w$ of $u$ which belongs to the shortest path between $u$ and $v$, and $SP(u,v)\setminus S_0 \neq \emptyset$, it must hold that $SP(u,v)\setminus S_0 = \{u\}$. And if any hitting set hits $\{u\}$ ($u$ must be in it), then it must hit $SP(x,y)$ in which $u$ belong.
\paragraph{}

Let $T_0^*$ and $\calf_2$ be the resulting set of terminal pairs and set family respectively. We claim that for a pair $\{u,v\} \in T_0^*$ such that $u \in I,v \in C_0$, $SP(u,v)\setminus S_0$ equals $\{u,v\}\setminus S_0$. This is because since $\{u,v\}\in T_0$ we have that $M[x,y]=0$ for every $x,y$ such that of $x$ is neighbor $u$ and  $y$ is neighbor of $v$ and $x,y$ belongs to a same shortest path between $u$ and $v$. Hence, $SP(u,v)\setminus S_0=\{u,v\}\setminus S_0$. 
To this end, we construct $\calf_2=\calf_2\cup (\{u,v\}\setminus S_0)$ and $\calf=\calf_1 \cup \calf_2$.

\begin{claimn}
    $I'= (\bigcup_{f\in\calf}f, \calf, k)$  is a yes instance of {\hs} if and only if there is a $M$-compatible terminal monitoring set for $\calt$ of size at most $k$
\end{claimn}
\paragraph{}

The proof of the above claim easily follows from the construction of $\calf_1$ and $\calf_2$.
Further, by applying Observation \ref{obs:vertexCover} to $(\bigcup_{f\in\calf}f, \calf, k)$, we can obtain an equivalent instance $(\bigcup_{f\in\calf_1\cup\calf_3}f, \calf_1\cup \calf_3, k)$ of {\hs}.
We solve the instance $(\bigcup_{f\in\calf_1\cup\calf_3}f, \calf_1\cup \calf_3, k)$; since $|\calf_1 \cup \calf_3| \leq t^2+O(k^2)=O(t^2)$, we can solve this instance in time FPT in $t$.
The correctness of algorithm for weighted-{\tms} follows from exhaustive guesses on $M$, further there can be at most $2^{t^2}$ distinct guesses of $M$ and thus we can solve weighted-{\tms} in time FPT in $t$.
This finishes the proof of Theorem \ref{thm:vc}.

\subsection{Feedback Edge Number: Proof of Theorem \ref{thm:FEN}}

Jansen \cite{Jansen17}  gave an algorithm (which we refer to as Jansen's algorithm) running in time FPT by feedback edge number of the input graph to solve {\sc{hitting paths in a graph}}. In {\tms}, we need to hit a set of connected subgraphs, and we found that Jansen's algorithm solves {\tms} correctly, and for the proof sketch of Theorem \ref{thm:FEN}, we recall Jansen's algorithm from \cite{Jansen17} and discuss it in this section.
\paragraph{}

We may assume that input graph $G$ is connected. Otherwise, we create a clique on $4$ vertices $Z=\{z_1,z_2,z_3,z_4\}$ and connect this clique to $G$ by adding an arbitrary edge. This will increase feedback edge number of $G$ by a constant, and it will not change the solution, as no new shortest path between any terminal pair introduced. In the rest of the section we assume that feedback edge number of $G$ is $t$. Given an instance $(G,\calt,k)$ of {\tms}, the following preprocessing is performed.

\textbf{Preprocessing 1 (adapted from Observation 3 in \cite{Jansen17}})\label{preprocessing}: While there is a vertex $v \in G$ of degree one.

\begin{itemize}
   
    \item If there is a  terminal pair $\{v\}$ in $\calt$, then put $v$ in solution $S$, decrease $k$ by one, and remove every terminal pair containing $v$ from $\calt$. Otherwise, we replace every terminal pair $\{v,y\}$ with $\{u,y\}$ in $T$ where $u$ is the only neighbor of $v$ in $G$. Remove $v$ from $G$.
\end{itemize}
\paragraph{}

The above preprocessing is safe, if both vertices of a pair are $v$, then $v$ must be in the solution, else $v$ can be replaced by $u$ in any solution that contains $v$.

\begin{figure}

    \centering
    \begin{tikzpicture}[scale=0.9]

   % \node [ellipse,draw, minimum height=1.4cm,minimum width= 4.4cm, label ={180:$C$}] (z) at (1.5,-1) {};

    \node [ellipse, minimum height=1.5cm,minimum width= 4cm, label ={0:\footnotesize{\text{$G$}}}] (z) at (2*3,-2*1) {};
   
   %label={x.\a*360/6 +20:\textcolor{black!40}{\a}}
   \node[shape=circle, draw, fill = black, scale= 0.35, font=\footnotesize] ({21}) at (2*3+0.4,-2*1+0.3){};
    \node[shape=circle, draw, fill = black, scale= 0.35, font=\footnotesize] ({22}) at (2*3-0.4,-2*1+0.3){};
    \node[shape=circle, draw, fill = black, scale= 0.35, font=\footnotesize] ({23}) at (2*3+0.4,-2*1-0.3){};
    \node[shape=circle, draw, fill = black, scale= 0.35, font=\footnotesize] ({24}) at (2*3-0.4,-2*1-0.3){};
     \node[shape=circle, draw, fill = black, scale= 0.35, font=\footnotesize] ({25}) at (2*3-0.75,-2*1-0){};
     \node[shape=circle, draw, fill = black, scale= 0.35, font=\footnotesize] ({26}) at (2*3+0.75,-2*1-0){};
   
   \draw[thin,black!40] ({21}) to ({22});
   \draw[thin,black!40] ({21}) to ({23});
   \draw[thin,black!40] ({21}) to ({24});
   \draw[thin,black!40] ({21}) to ({26});
   \draw[thin,black!40] ({23}) to ({26});
   \draw[thin,black!40] ({22}) to ({23});
   \draw[thin,black!40] ({22}) to ({24});
   \draw[thin,black!40] ({23}) to ({24});
   \draw[thin,black!40] ({22}) to ({25});
   \draw[thin,black!40] ({24}) to ({25});
   
   \foreach \i in {1,2,3,4}{
     
            \node[shape=circle, draw, fill = black, scale= 0.15, font=\footnotesize,label={}] ({21\i}) at (2*3-0.8+\i/3,-2*1+0.8){};
   }
   \foreach \i in {1,2,3}{
     
            \node[shape=circle, draw, fill = black, scale= 0.15, font=\footnotesize,label={}] ({22\i}) at (2*3+1.5,-2*1-0.6+\i/3){};
   }
   
    \foreach \i in {1,2,3}{
     
            \node[shape=circle, draw, fill = black, scale= 0.15, font=\footnotesize,label={}] ({23\i}) at (2*3-1.5,-2*1-0.6+\i/3){};
   }
   \draw[thin,black!40] ({211}) to ({214});
   \draw[thin,black!40] ({221}) to ({223});
   \draw[thin,black!40] ({231}) to ({233});
   
   \draw[thin,black!40,bend right =30] ({211}) to ({22});
   \draw[thin,black!40,bend left =30] ({214}) to ({21});
    \draw[thin,black!40,bend right =30] ({223}) to ({26});
   \draw[thin,black!40,bend left =30] ({221}) to ({26});
    \draw[thin,black!40,bend right =30] ({231}) to ({25});
   \draw[thin,black!40,bend left =30] ({233}) to ({25});

\end{tikzpicture}
\caption{$G$ with minimum degree two, darkened vertices forms $V_{\geq 3}$. }\label{fig:structures}
\end{figure}
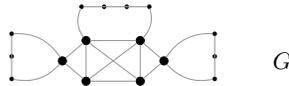

\paragraph{}

After Preprocessing 1, $G$ has minimum degree two, and we will assume that for every $I=(G,\calt,k)$, $G$ has minimum degree at least two. 
 For a graph $G$ with degree at least two, let $V_{\geq3}$ be the set of all the vertices of $G$ with degree at least three (Figure  \ref{fig:structures}), $G-V_{\geq3}$ is a disjoint union of paths as $V_{\geq3}\neq \emptyset$ (we added a clique on $Z$ in $G$) and $G$ is connected. Let $\cal D$ be the set of all the components (paths) in $G-V_{\geq3}$. Observe that every component of $\cal D$ is connected to rest of the graph by $2$ edges, each of which connecting an endpoint of $D$ to a vertex of $V_{\geq 3}$ as every vertex in $D$ has degree two in $G$ (see also \cite{Jansen17}). In the remaining part, for the graph in context, we simply use $V_{\geq3}$ and $\cal D$ for it as defined above. Given an instance $I=(G,\calt,k)$ of {\tms}, the following holds.
\begin{observation}\label{Obs:subgraphproperty1}
     For every $\{x,y\}\in \calt$, $SP(x,y)$ induces a connected graph and every vertex in $SP(x,y)\setminus \{x,y\}$ has degree at least two in $G[SP(x,y)]$.  
\end{observation}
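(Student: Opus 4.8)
The plan is to work directly from the definition $SP(x,y)=\{w \mid d(x,w)+d(w,y)=d(x,y)\}$, i.e. the set of vertices lying on some shortest $x$--$y$ path. The engine of both parts is a single closure property: \textbf{if} $w\in SP(x,y)$ \textbf{and} $z$ lies on a shortest path from $x$ to $w$, \textbf{then} $z\in SP(x,y)$. First I would establish this. If $z$ is on a shortest $x$--$w$ path then $d(x,z)+d(z,w)=d(x,w)$; adding $d(w,y)$ to both sides and using $w\in SP(x,y)$ gives $d(x,z)+d(z,w)+d(w,y)=d(x,y)$. Combining $d(z,y)\le d(z,w)+d(w,y)$ with the triangle inequality $d(x,z)+d(z,y)\ge d(x,y)$ then forces $d(x,z)+d(z,y)=d(x,y)$, so $z\in SP(x,y)$. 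The symmetric statement for shortest $w$--$y$ paths follows by swapping the roles of $x$ and $y$.

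For connectivity, I would fix any $w\in SP(x,y)$ and take a shortest path $P$ from $x$ to $w$. By the closure property every vertex of $P$ lies in $SP(x,y)$, so $P$ is a path inside $G[SP(x,y)]$ joining $w$ to $x$. Since $w$ was arbitrary, every vertex of $SP(x,y)$ is connected to $x$ within the induced subgraph, and hence $G[SP(x,y)]$ is connected.

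For the degree bound, I would take $w\in SP(x,y)\setminus\{x,y\}$, so that $d(x,w)\ge 1$ and $d(w,y)\ge 1$. Let $a$ be the neighbor of $w$ on a shortest $x$--$w$ path, so $d(x,a)=d(x,w)-1$, and let $b$ be the neighbor of $w$ on a shortest $w$--$y$ path, so $d(b,y)=d(w,y)-1$; by the closure property both $a,b\in SP(x,y)$, and both are neighbors of $w$ in $G[SP(x,y)]$. It remains to show $a\ne b$. If $a=b$ then $d(x,a)+d(a,y)=(d(x,w)-1)+(d(w,y)-1)=d(x,y)-2<d(x,y)$, contradicting $a\in SP(x,y)$. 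Hence $a$ and $b$ are distinct neighbors of $w$ in $G[SP(x,y)]$, giving degree at least two.

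The main obstacle is precisely the distinctness $a\ne b$: the two neighbors are produced by two independent shortest-path choices, and the only thing preventing them from coinciding is the membership $a\in SP(x,y)$, which pins the distance sum to $d(x,y)$ and rules out the drop of $2$ that a coincidence would cause. Everything else is a direct application of the triangle inequality.
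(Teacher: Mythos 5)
Your proof is correct and follows essentially the same route as the paper, which disposes of the observation in one line by noting that the union of all shortest $x$--$y$ paths is a spanning structure inside $G[SP(x,y)]$; your closure property and the distinctness argument for $a\neq b$ simply make explicit the details that the paper leaves implicit.
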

\begin{proof}
    It follows from the fact that a union of all the shortest path between $(x,y)$ is a subgraph of graph induced by $SP(x,y)$.
\end{proof}

\begin{observation} \label{obs:subpath}
    For every $\{x,y\} \in {\calt}$ and $D\in \cal D$ it holds that: if $SP(x,y)\subseteq V(D)$, then $SP(x,y)$ induces a sub path of $D$ in $G$.
\end{observation}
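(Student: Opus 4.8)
The plan is to deduce the statement almost immediately from Observation~\ref{obs:subgraphproperty1}, which guarantees that $G[SP(x,y)]$ is connected, together with the fact that each $D \in \mathcal{D}$ is an induced path. The guiding principle is the elementary observation that a connected induced subgraph of a path must be a contiguous segment of that path. So the whole argument reduces to verifying that the ambient structure is genuinely a path and then invoking connectivity.

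First I would pin down that $G[V(D)]$ is exactly the path $D$, with no extra (``chord'') edges contributed by $G$. This holds because $G - V_{\geq 3}$ is an \emph{induced} subgraph of $G$ (we delete only the vertices of $V_{\geq 3}$, keeping all edges among the surviving vertices), and $D$ is a connected component of $G - V_{\geq 3}$, hence itself induced. Since every vertex of $G - V_{\geq 3}$ has degree at most two in $G - V_{\geq 3}$ and $G - V_{\geq 3}$ is a disjoint union of paths, each component $D$ is a path and $G[V(D)] = D$. Label the vertices of $D$ in path order as $d_1, d_2, \ldots, d_m$, so that the only edges of $D$ are the consecutive pairs $\{d_t, d_{t+1}\}$.

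Next, I would use the hypothesis $SP(x,y) \subseteq V(D)$ to conclude that $G[SP(x,y)]$ is an induced subgraph of $G[V(D)] = D$. By Observation~\ref{obs:subgraphproperty1}, $G[SP(x,y)]$ is connected. Writing $SP(x,y) = \{d_t : t \in J\}$ for the index set $J \subseteq [m]$, the induced subgraph $G[SP(x,y)]$ contains exactly the edges $\{d_t, d_{t+1}\}$ with $t, t+1 \in J$. If $J$ were not a set of consecutive integers, there would be indices $a < b < c$ with $a, c \in J$ but $b \notin J$, which would disconnect the $d_a$-side from the $d_c$-side of the induced subgraph, contradicting connectivity. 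Hence $J = \{a, a+1, \ldots, b\}$ is an interval, and $SP(x,y)$ induces the subpath $d_a d_{a+1} \cdots d_b$ of $D$.

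I do not expect any genuine obstacle here; the statement is essentially a structural bookkeeping lemma. The only point requiring a little care is the first step, namely ensuring that $G$ contributes no additional edges inside $V(D)$ beyond those already present in $D$ (which would otherwise destroy the ``subpath'' conclusion). Once that is settled, connectivity from Observation~\ref{obs:subgraphproperty1} does all the work, and the degree-at-least-two property of interior vertices is not even needed for this particular claim.
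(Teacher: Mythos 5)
Your proof is correct. The paper actually states Observation~\ref{obs:subpath} without any proof at all, and your argument---that $D$ is an induced path in $G$ (being a component of the induced subgraph $G-V_{\geq 3}$), that $G[SP(x,y)]$ is connected by Observation~\ref{Obs:subgraphproperty1}, and that a connected induced subgraph of a path is a contiguous subpath---is exactly the natural reasoning the authors leave implicit, including your correct remark that the degree condition on interior vertices is not needed here.
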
 
\paragraph{}

 Given an instance $(G,\calt,k)$ of {\tms}, we define the following (using similar notations to \cite{Jansen17}). For every $D\in \cal D$, let {\sc{opt}}($D$) be the minimum size of a terminal monitoring set for $\{\{x,y\} \mid \{x,y\}\in \calt \land SP(x,y)\subseteq V(D)\}$. From Observation \ref{obs:subpath} and Lemma \ref{lemma: folklore}, we have that {\sc{opt}}$(D)$ can be computed in polynomial time for every $D\in \cal D$.

\begin{lemma}[Lemma 5 in {\cite{Jansen17}}, stating for TMS]\label{lemma:opt(D)oropt(D)+1}
  Given an instance $(G,\calt,k)$ of {\tms}, there exists a minimum size terminal monitoring set $S'$ for $T$ such that: for every $D\in \cal D$, $S'$ contains either {\sc{opt}}($D$) or {\sc{opt}}($D$)$+1$ vertices of $D$.
\end{lemma}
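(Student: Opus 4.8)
The plan is to prove the lemma by a potential-minimizing exchange argument. First I would record the trivial lower bound: for any terminal monitoring set $S$ and any $D \in {\cal D}$, the pairs $\{x,y\}\in\calt$ with $SP(x,y)\subseteq V(D)$ can be hit only by vertices of $V(D)$, so $S\cap V(D)$ is itself a terminal monitoring set for these \emph{internal} pairs and therefore $|S\cap V(D)| \ge$ {\sc opt}$(D)$. Hence the entire content of the statement is the matching upper bound $|S\cap V(D)| \le$ {\sc opt}$(D)+1$. To obtain it, I would take, among all \emph{minimum} terminal monitoring sets, one that minimizes the potential $\Phi(S)=\sum_{D\in{\cal D}}|S\cap V(D)| = |S| - |S\cap V_{\ge 3}|$ (equivalently, one placing as many of its vertices as possible on the branch set $V_{\ge 3}$), and then show that this $S$ already satisfies the upper bound for every $D$.

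The crux is a boundary claim for each path $D$. Writing $\partial D$ for the set of at most two vertices of $V_{\ge 3}$ adjacent to the endpoints of $D$, I would show that every pair $\{x,y\}\in\calt$ with $SP(x,y)\cap V(D)\neq\emptyset$ and $SP(x,y)\not\subseteq V(D)$ satisfies $SP(x,y)\cap\partial D\neq\emptyset$. This uses Observation \ref{Obs:subgraphproperty1}: $G[SP(x,y)]$ is connected, so some vertex of $SP(x,y)$ inside $V(D)$ is joined, within $G[SP(x,y)]$, to some vertex outside $V(D)$; since every internal vertex of the path $D$ has degree two inside $D$, the only edges leaving $V(D)$ are the two attaching the endpoints of $D$ to $\partial D$, so any connecting walk must pass through $\partial D$, placing a vertex of $\partial D$ into $SP(x,y)$. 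In particular, adding the vertices of $\partial D$ to the solution simultaneously hits every pair whose shortest-path set meets $V(D)$ without being contained in it.

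With this claim the exchange is routine. Suppose some $D$ has $|S\cap V(D)|\ge$ {\sc opt}$(D)+2$; set $A=S\cap V(D)$ and let $O_D\subseteq V(D)$ be an optimum hitting set for the internal pairs, of size {\sc opt}$(D)$, which exists (indeed is polynomial-time computable by Observation \ref{obs:subpath} and Lemma \ref{lemma: folklore}). Replace $S$ by $S'=(S\setminus A)\cup O_D\cup\partial D$. I would check that $S'$ is again a terminal monitoring set: pairs hit in $S$ by some vertex outside $A$ remain hit; internal pairs are hit by $O_D$; and every other pair formerly hit only by $A$ has its shortest-path set meeting $V(D)$ but not contained in $V(D)$, hence is hit by $\partial D$ via the boundary claim. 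As $\partial D\subseteq V_{\ge 3}$ is disjoint from $V(D)$ and $|\partial D|\le 2 \le |A|-$ {\sc opt}$(D)$, we get $|S'|\le|S|$, so $S'$ is also a minimum solution; yet $\Phi(S')=\Phi(S)-|A|+$ {\sc opt}$(D)\le\Phi(S)-2<\Phi(S)$, contradicting the choice of $S$. Therefore no such $D$ exists, and $S$ is the desired solution.

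I expect the boundary claim to be the main obstacle: it requires using the exact two-edge cut separating each $D$ from the rest of $G$ together with connectivity of the shortest-path sets, and one must be careful that the replacement removes vertices only from $V(D)$ while adding vertices only to $V_{\ge 3}$, so that it strictly decreases $\Phi$ and cannot create new excess at any other component of ${\cal D}$. It is also worth noting that the slack ``$+1$'' is intrinsic to this argument, since a component carrying exactly one surplus vertex cannot in general be emptied without spending up to two boundary vertices to delete a single internal one, which would increase the total size.
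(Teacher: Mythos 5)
Your proof is correct and follows essentially the same route as the paper, which does not spell the argument out but defers to the exchange argument of Jansen's Lemma~5 supplemented by Observations~\ref{Obs:subgraphproperty1} and~\ref{obs:subpath}: your boundary claim (any $SP(x,y)$ meeting $V(D)$ but not contained in it must contain one of the at most two attachment vertices in $V_{\ge 3}$, via connectedness of $G[SP(x,y)]$ and the two-edge cut around $D$) is exactly the point where those observations are needed, and the potential-minimizing swap of $S\cap V(D)$ for $O_D\cup\partial D$ is the same exchange. In effect you have written out the details the paper omits.
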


\paragraph{}

Proof for Lemma \ref{lemma:opt(D)oropt(D)+1} uses same arguments as given in the proof of \cite[Lemma 5]{Jansen17} and only requires supplementing them with Observation \ref{Obs:subgraphproperty1} and Observation \ref{obs:subpath}, hence we skip it.

\paragraph{}

Jansen's algorithm \cite{Jansen17} while solving an instance of {\hpg} makes successive guesses (to branch), and if it decides that a guess may lead to a solution within the budget, it constructs an instance of {\hpfbs}, and solves it in polynomial time. We found that Jansen's algorithm correctly solves {\tms} as well, crucially by correctly constructing instances of {\hpfbs} for input $I=(G,\calt,k)$. For completeness, we recall steps of Jansen's algorithm from \cite{Jansen17} which are divided into branching and construction of {\hpfbs}, we demonstrate them for the input instance $I=(G,\calt,k)$ of {\tms}. 

\subsubsection{\bf{Branching} ({Section 3.2 in \cite{Jansen17}}):} First make a guess on the vertices of $V_{\geq3}$ which will be in the solution, and a guess on the number of vertices in the solution from every $D\in \cal D$ using the functions $f_v:V_{\geq3}\to \{0,1\}$  and $f_d:{\cal D}\to \{0,1\}$. If $k\geq\sum_{v\in V_{\geq3}}f_v(v)$+ $\sum_{D\in {\cal D}}({\textsc{opt}}(D)+f_d(D))$, then construct an instance of {\hpfbs} . 

\subsubsection{\bf{Construction of an instance of {\hpfbs}} ({Section 3.2 in \cite{Jansen17}}):}
Given $f_v,f_d$. Create $G_1$ and ${\cal D}_1$ as copies of $G$ and $\cal D$ respectively, and $f_d$ remains same for ${\cal D}_1$. Let ${\cal U}_1= \{SP(x,y)\mid \{x,y\}\in \calt\}$, and do the following.
\begin{itemize}
    \item For every $U\in {\cal U}_1$ : if $U$ contains a vertex $v\in V_{\geq3}$ such that $f_v(v)=1$ or $U$ contains all the vertices of a $D\in {\cal D}$ such that $({\textsc{opt}}(D)+f_d(D))>0$, then remove $U$ form ${\cal U}_1$.
    \item For every $D\in {\cal D}_1$ such that $({\textsc{opt}}(D)+f_d(D))=0$ :  remove $D$ from ${\cal D}_1$, remove $V(D)$ from $G_1$, and remove $V(D)$ from every set $U\in {\cal U}_1 $.
    \item  For every $v\in V_{\geq3}$ such that $f_v(v)=1$ : remove $v$ from $G_1$.
    \item  Contract remaining vertices of $V_{\geq3}$ in $G_1$ into a single vertex $z$ in $G_1$, and replace $U\cap V_{\geq3}$ (if non empty) with $z$ in every $U\in {\cal U}_1$. 
    \item Assign a distinct number in $[|{\cal D}_1|]$ to every remaining path in ${\cal D}_1$, set $b(i)=({\textsc{opt}}(D)+f_d(D_i))$, where $D_i\in {\cal D}_1$. Output  $(G_1,z,{\cal U}_1,b)$.
\end{itemize}

\paragraph{}

For the input instance $I=(G,\calt,k)$ of {\tms}, for every guess $f_v,f_d$, and  correspondingly constructed $(G_1,z,{\cal U}_1,b)$, the following claims hold.

\begin{claimn}[Claim 5 in {\cite{Jansen17}},  stating for $(G,\calt,k)$]\label{claim:itisflower}
    $G_1$ is a flower graph with core $z$ and every vertex set in ${\cal U}_1$ induces a simple path in $G_1$.
\end{claimn}

\begin{claimn}[{Claim 6 in \cite{Jansen17}}, stating for $(G,\calt,k)$]\label{claim:flowercorrect}
     The following two statements are equivalent.
    \begin{itemize}
        \item There exists a terminal monitoring set $S$ for $\calt$ such that : $|S\cap V(D)|={\textsc{opt}}(D)+f_d(D)$ for every $D\in \cal D$, and $v\in S \Leftrightarrow f_v(v)=1$ for every $v\in V_{\geq3}$.
    
        \item  There exists a solution for instance $(G_1,z,{\cal U}_1,b)$ of HPFB.
    \end{itemize}
\end{claimn}
\paragraph{}

Proof of Claim \ref{claim:itisflower} and proof of Claim \ref{claim:flowercorrect} for instance $(G,\calt,k)$ follow from the same arguments given in proof of {\cite[Claim 5]{Jansen17}} and proof of {\cite[Claim 6]{Jansen17}} respectively by supplementing them with Observation \ref{Obs:subgraphproperty1} and Observation \ref{obs:subpath}, hence we skip them.
\paragraph{}

If any constructed instance of {\hpfbs} has a solution, then Jansen's algorithm returns YES, otherwise NO. Further, correctness on $(G,\calt,k)$ follows from Claim \ref{claim:flowercorrect}, Claim \ref{claim:itisflower}, and Lemma \ref{lemma:opt(D)oropt(D)+1}. Size of $V_{\geq3}$ and $\cal D$ can be bounded by $2t$ and $3t$ respectively if  $V_{\geq3}\neq \emptyset$ \cite{Jansen17}. Further, Jansen's algorithm makes at most $2^{5t}$ guesses, and its running time is bounded by $2^{5t}\cdot (|V(G)|+|{\cal U}|)^{O(1)}$ \cite{Jansen17}.
\section{Hardness Results}

 In this section we prove the following theorems.
\begin{theorem}\label{thm:hardness-one}
(a) {\tms} is NP-hard.

(b) {\tms} is W[2]-hard with respect to solution size.
\end{theorem}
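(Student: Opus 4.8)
The plan is to prove both parts with a single parameterized reduction from \hs{} parameterized by solution size, which is well known to be W[2]-hard and NP-hard (it subsumes \textsc{Dominating Set}, taking the sets to be closed neighbourhoods). Since the paper already observes that \tms{} is a \emph{special case} of \hs, the whole content of the theorem is the converse direction: I would encode an arbitrary \hs{} instance as a \tms{} instance whose shortest-path sets $SP(\cdot,\cdot)$ realize exactly the prescribed set system, so that a terminal monitoring set corresponds to a hitting set of the same size.

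Given an instance $(\calu,\calf,k)$ of \hs{} with $\calu=[n]$ and $\calf=\{F_1,\dots,F_m\}$ (and WLOG each $F_j\neq\emptyset$), I would build $G$ as follows. Introduce one \emph{element vertex} $x_i$ for each $i\in[n]$, and for each set $F_j$ introduce two \emph{endpoint vertices} $a_j,b_j$. For every $i\in F_j$ add the two edges $a_jx_i$ and $x_ib_j$, so that $a_j$ and $b_j$ are joined by $|F_j|$ internally disjoint paths of length two, one through each $x_i$ with $i\in F_j$. Set $\calt=\{\{a_j,b_j\}\mid j\in[m]\}$ and keep the parameter $k'=k$; the construction is clearly polynomial-time and parameter-preserving.

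The crucial step is to pin down the shortest-path sets exactly. Since $a_j$ and $b_j$ share a common neighbour, $d(a_j,b_j)=2$, and a vertex $w$ lies in $SP(a_j,b_j)$ iff $d(a_j,w)+d(w,b_j)=2$, i.e.\ iff $w\in\{a_j,b_j\}$ or $w$ is a common neighbour of $a_j$ and $b_j$. As the neighbourhoods of both $a_j$ and $b_j$ equal $\{x_i\mid i\in F_j\}$, this gives $SP(a_j,b_j)=\{a_j,b_j\}\cup\{x_i\mid i\in F_j\}$. I would also record the \textbf{locality} property: for $q\neq j$, neither $a_q$ nor $b_q$ is adjacent to both $a_j$ and $b_j$ (endpoint vertices are adjacent only to element vertices), so $a_q,b_q\notin SP(a_j,b_j)$; that is, an endpoint vertex lies on the shortest-path set of its own pair only.

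With this in hand the equivalence is routine. Forward: a hitting set $H\subseteq[n]$ of size $\le k$ yields $S=\{x_i\mid i\in H\}$, which meets every $SP(a_j,b_j)$ because $H$ meets every $F_j$. Backward: given a terminal monitoring set $S$ of size $\le k$, I would replace each endpoint vertex $a_j$ (resp.\ $b_j$) occurring in $S$ by an arbitrary $x_i$ with $i\in F_j$; by the locality property this replacement never destroys the hitting of any other pair, so the resulting set $S'$ of element vertices is still a terminal monitoring set of size $\le k$, and $\{i\mid x_i\in S'\}$ is a hitting set of size $\le k$. Invoking Theorem~\ref{def-parameterized reduction} then transfers W[2]-hardness to \tms{} (part (b)), and the same polynomial reduction gives NP-hardness (part (a)). The only delicate point is the locality property, which is precisely what makes the backward replacement safe, so I would verify it explicitly rather than leave it implicit; everything else is bookkeeping.
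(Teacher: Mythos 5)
Your reduction is correct and is essentially the same as the paper's: the paper reduces from \textsc{Red-Blue Dominating Set} (the incidence-graph formulation of \hs) by adding, for each red vertex $r_i$, a non-adjacent twin $r_i'$ with the same blue neighbourhood and taking $\{r_i,r_i'\}$ as a terminal pair, which is exactly your pair of endpoints $a_j,b_j$ sharing the neighbourhood $\{x_i \mid i \in F_j\}$, with the same distance-two analysis of $SP(a_j,b_j)$ and the same replacement argument in the backward direction.
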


\begin{theorem}\label{thm:hardness-fvs}
For every fixed $0< \alpha\leq 0.5$,  $\alpha$-RTMS is W[1]-hard with respect to feedback vertex number of the input graph plus solution size, even when the input graph is a planar graph.
\end{theorem}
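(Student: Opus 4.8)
The plan is to give a parameterized reduction from \textsc{Grid Tiling}, which is W[1]-hard parameterized by the grid dimension $k$ and is the standard source for planar W[1]-hardness. Recall that an instance consists of a $k\times k$ array of sets $S_{i,j}\subseteq[n]\times[n]$, and one must choose $s_{i,j}\in S_{i,j}$ for every cell so that horizontally adjacent choices agree in their first coordinate and vertically adjacent choices agree in their second coordinate. Fixing $\alpha\in(0,0.5]$, I would scale all constructed distances by a factor $\Theta(1/\alpha)$ so that the additive slack $\alpha\cdot d(u,v)$ available to a monitor equals a prescribed integer $\delta\ge 1$; this is what lets the same construction work uniformly for every fixed $\alpha$ in the range, while keeping all distances integral.

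For the graph I would create one hub $h_{i,j}$ per cell, laid out on a $k\times k$ grid so that planarity is preserved, and attach to each hub a pendant ``element gadget'' (a small tree or path) containing one distinguished vertex $e_{i,j}^{(a,b)}$ for every $(a,b)\in S_{i,j}$. The intended solution places exactly one monitor per cell, at the vertex $e_{i,j}^{(a,b)}$ encoding the chosen $s_{i,j}=(a,b)$, and I would set the budget $k'=k^2$. A ``selection'' terminal pair per cell, whose near-geodesic region is exactly that cell's element set, forces at least one monitor into each cell, so that the budget forces exactly one. Since every element gadget is a pendant hanging off its hub and the only cycle-creating vertices are the $k^2$ hubs together with $O(k)$ backbone vertices used to wire the grid, the feedback vertex set has size $O(k^2)$, which is bounded in the parameter, and the whole layout is planar.

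The heart of the reduction is consistency enforcement, and this is the step I expect to be the main obstacle. For each horizontally adjacent pair of cells $(i,j),(i,j+1)$ I would introduce terminal pairs whose only near-geodesic monitoring positions, under the budget, are the element vertices whose first coordinates match; the distances would be engineered, using the slack $\delta$, so that a monitor placed at $e_{i,j}^{(a,b)}$ lies within the $(1+\alpha)$ threshold of such a pair exactly when the monitor of cell $(i,j+1)$ is at some $e_{i,j+1}^{(a',b')}$ with $a'=a$. Vertical pairs enforce second-coordinate agreement analogously. Making the disjunctive ``some monitor is near the geodesic'' semantics of \atmss\ express the conjunctive equality constraints of \textsc{Grid Tiling}, without introducing extra cycles that would inflate the feedback vertex number and without destroying planarity, is the delicate part; it is here that the bound $\alpha\le 0.5$ is used, namely to keep the available slack small enough that a monitor sitting at a mismatching element falls strictly outside the $(1+\alpha)$ threshold.

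Finally I would prove equivalence in both directions. A valid tiling yields the monitor placement described above, which meets the budget and monitors every selection and every consistency pair; conversely, any solution of size $k'$ must, by the selection pairs, use exactly one monitor per cell, and the consistency pairs then force the induced choices to agree across adjacent cells, i.e.\ to form a valid tiling. Since $k'=k^2$ and the feedback vertex number is $O(k^2)$ are both bounded by a function of the source parameter $k$, the output graph is planar, and the construction runs in polynomial time, this is a valid parameterized reduction and establishes Theorem~\ref{thm:hardness-fvs}.
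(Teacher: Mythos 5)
There is a genuine gap: the one step you yourself flag as ``the main obstacle'' --- engineering distances so that a monitor at a mismatching element falls \emph{strictly outside} the $(1+\alpha)$ threshold while every other admissible placement falls inside --- is exactly the technical content of this theorem, and your proposal never constructs it. The difficulty is not merely omitted detail. Under the disjunctive semantics of {\atmss}, a consistency terminal pair is violated only if \emph{no} monitor anywhere in the graph is within threshold, so you must simultaneously (i) make exactly the ``bad'' local placements miss the threshold by at least $1$, (ii) make all other local placements meet it, and (iii) guarantee that monitors belonging to far-away cells can never accidentally rescue the pair. The paper achieves (iii) by routing all inter-gadget traffic through a single bridge vertex $b$, so that every inter-gadget distance is exactly $2L$ and the threshold $(1+\alpha)\cdot 2L$ is uniform (Observation~\ref{obs:shortest_distance_between_two_hi}, Lemma~\ref{lemma:Hinotssatisfy}); your grid-shaped backbone makes inter-cell distances position-dependent, which makes (iii) substantially harder. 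It achieves (i) and (ii) by making each selection gadget a \emph{cycle} of length $2n+2$ ($U_i\cup U_i'\cup\{z_i,z_i'\}$), so that the antipodal vertex $u_{i,i'}$ is at distance exactly $n+1$ from $u'_{i,i'}$ while every other $u_{i,i''}$ is at distance at most $n$, with $L=\lceil n/(2\alpha)\rceil$ tuned so the slack $2\alpha L$ lies in $[n,n+1)$ (Lemmas~\ref{lemma:notsatisfy} and~\ref{lemma:satisfy}). A pendant tree or path gadget, as you propose, cannot single out exactly one ``bad'' position among $n$ candidates in this way, since distances along a tree from a fixed terminal are monotone along each branch; the two-ways-around-a-cycle structure is what makes the discrimination possible. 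Without a concrete replacement for this gadget, the reduction does not go through.

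Two further remarks on the comparison. First, the paper reduces from {\mciss} rather than \textsc{Grid Tiling}; independence is an anti-monotone constraint (``not both endpoints of an edge are chosen''), which matches the ``violated only by the unique bad pair of placements'' mechanism directly, whereas the equality constraints of \textsc{Grid Tiling} would force you to introduce one violating terminal pair for every mismatching pair of values, and then re-prove (i)--(iii) for each. Second, the paper's parameter is sharper: its feedback vertex number is $k+1$ (delete $b$ and one vertex per gadget cycle) and the budget is $k$, versus your $O(k^2)$; both suffice for W[1]-hardness, but this is a further indication that the star-through-$b$ topology, not a grid, is the right skeleton here.
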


\subsection{Proof of Theorem \ref{thm:hardness-one}}

We reduce {\sc Red-Blue Dominating Set (RBDS)} to {\tms}. In  {\sc RBDS } we are given a bipartite graph $G=(V_B\cup V_R,E)$, and it is asked if there is a vertex set $D\subseteq V_B$ of size at most $k$ such that every vertex in $V_R$ is adjacent to at least one vertex in $D$. It is known that {\sc RBDS } is W[2]-hard parameterized by solution size $k$ (see \cite{DBLP:books/sp/CyganFKLMPPS15}).
\paragraph{}

Let $I=(G=(V_B\cup V_R,E),k)$ be the input instance of {\sc RBDS }, let $V_R= \{r_1,r_2,\dots r_n\}$, and let $V_B= \{b_1,b_2,\dots b_m\}$. We construct an instance $I'$ of {\tms} as follows. We construct the graph $G'$ as follows. Create a vertex set $V'_R= \{r'_1, r'_2,\dots r'_n\}$. For every $i\in[n]$, we connect $r'_i$ to all the neighbors of $r_i$ in $V_B$, and call the set of all these introduced edges as $E'$. Essentially we are creating a twin vertex for every $r\in V_R$.  We construct terminal set $T=\{\{r_i,r'_i\}\mid i\in [n]\}$. The instance $I'=(G'=(V_B\cup V_R\cup V'_R, E\cup E'), T, k)$.

\begin{lemma}\label{lemma:equivalenceRBDM}
    $I$ is a yes instance of {\sc RBDS } if and only if $I'$ is a yes instance of {\tms}.
\end{lemma}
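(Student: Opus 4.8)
The plan is to prove the biconditional by establishing a tight correspondence between the domination structure in $G$ and the shortest-path-hitting structure in $G'$. The crucial observation to make first is to compute, for each $i \in [n]$, the set $SP_{G'}(r_i, r'_i)$. Since $r_i$ and $r'_i$ are twins (both adjacent to exactly the neighbors of $r_i$ in $V_B$, and to nothing else), they are at distance $2$ in $G'$, and every shortest $r_i$--$r'_i$ path has the form $r_i, b_j, r'_i$ where $b_j$ is a common neighbor, i.e. $b_j \in N_G(r_i)$. Hence $SP_{G'}(r_i, r'_i) = \{r_i, r'_i\} \cup N_G(r_i)$. This identity is the heart of the reduction and I would state and justify it as the first step.

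For the forward direction, I would take a red-blue dominating set $D \subseteq V_B$ with $|D| \le k$ and argue that $D$ (viewed as a vertex subset of $G'$) is a terminal monitoring set for $T$. Indeed, for each terminal pair $\{r_i, r'_i\}$, the vertex $r_i$ is dominated, so there exists $b_j \in D \cap N_G(r_i)$; by the identity above $b_j \in SP_{G'}(r_i, r'_i)$, so $D$ hits a shortest path for that pair. Since $|D| \le k$, this gives a YES instance of {\tms}.

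For the reverse direction, I would start from a terminal monitoring set $S \subseteq V(G')$ with $|S| \le k$ and convert it into a red-blue dominating set of size at most $k$. The care needed here is that $S$ may contain vertices of $V_R \cup V'_R$ rather than only $V_B$. For each pair, $S$ must contain a vertex of $SP_{G'}(r_i,r'_i) = \{r_i, r'_i\} \cup N_G(r_i)$; the plan is to replace any chosen $r_i$ or $r'_i$ by an arbitrary element of $N_G(r_i)$ (nonempty since distances are finite/$r_i$ has a neighbor), producing a set $D \subseteq V_B$ with $|D| \le |S| \le k$ such that for every $i$ we have $D \cap N_G(r_i) \neq \emptyset$, i.e. $D$ dominates every $r_i$. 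This replacement step, and verifying it does not increase the size, is the main technical point, though it is routine once the $SP$ identity is in hand.

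The step I expect to be the only real obstacle is pinning down $SP_{G'}(r_i, r'_i)$ precisely and confirming no \emph{longer} vertices sneak in: one must verify that $d_{G'}(r_i, r'_i) = 2$ (they are non-adjacent but share a neighbor) and that no vertex outside $\{r_i, r'_i\} \cup N_G(r_i)$ lies on any shortest path, which follows because any such path has length exactly $2$ and its internal vertex must be a common neighbor of $r_i$ and $r'_i$, hence in $N_G(r_i)$. Once this is secured, both directions follow directly, and the hardness transfers: the construction is polynomial and the parameter $k$ is preserved exactly, so W[2]-hardness of {\sc RBDS} parameterized by solution size yields W[2]-hardness of {\tms} parameterized by solution size, as well as NP-hardness.
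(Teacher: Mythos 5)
Your proposal is correct and follows essentially the same route as the paper: both directions hinge on the observation that $SP_{G'}(r_i,r'_i)=\{r_i,r'_i\}\cup N_G(r_i)$ (the paper phrases this via the twin property and distance $2$), and the reverse direction uses the same replacement of any chosen $r_i$ or $r'_i$ by a neighbor in $V_B$, justified as in the paper by noting such vertices lie on no other pair's shortest paths. Your version is if anything slightly more explicit about pinning down the $SP$ set.
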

    
\begin{proof}
    For the forward direction, let $D\subseteq V_B$ be a solution to $I$, in this case we can verify that $D$ is also a solution to $I'$, since every terminal pair in $T$ are twins and are not adjacent, they have the same neighborhood in $V_B$, and the distance between them is $2$, and if a vertex in $D$ is adjacent to an $r_i\in V_B$, then it is adjacent to its twin $r'_i\in R'_B$ as well.
    \paragraph{}

    For the other direction, let $S$ be a solution to $I'$, in this case if $S$ contain a vertex from $r_i\in V_R\cup V'_R$, then we can observe that $r_i$ is only hitting the terminal pair $(r_i,r'-i)$, because it doesn't belong to a shortest path between any other terminal pair. So we can safely replace $r_i$ with any of its neighbor in $V_B$. This way while $S$ contains a vertex from $V_R\cup V'_R$, we replace that vertex with any of its neighbor in $V_B$. Now $S$ contains only vertices of $V_B$ and since its a set that hit all terminal pairs in $T$, it is easy to see that it dominates all the vertices in $V_R$.
    \end{proof}

\subsection{Proof of Theorem \ref{thm:hardness-fvs}}

\begin{figure}[ht]

   \usetikzlibrary { decorations.pathmorphing, decorations.pathreplacing, decorations.shapes, } 

    \centering
    \begin{tikzpicture}[scale=0.6]

 \foreach \j in {1,2,4,5}{
       
            \node[shape=circle, draw, fill=black, scale= 0.5, font=\small,label={-10}:{\small{$u'_{i,\j}$}}] ({1\j}) at (-2-\j,4-\j){};
   }

\foreach \j in {1,2,4,5}{
       
            \node[shape=circle, draw, fill=black, scale= 0.5, font=\small,label={190}:{\small{$u_{i,\j}$}}] ({2\j}) at (7-\j,-2+\j){};
   }

 \foreach \j in {3}{
       
            \node[shape=circle, draw, fill=black, scale= 0.5, font=\small,label={-10}:{\small{$u'_{i,\j}$}}] ({1\j}) at (-2-\j,4-\j){};
   }

\foreach \j in {3}{
       
            \node[shape=circle, draw, fill=black, scale= 0.5, font=\small,label={210}:{\small{$u_{i,\j}$}}] ({2\j}) at (7-\j,-2+\j){};
   }

 \node[shape=circle, draw, fill=black, scale= 0.5, font=\small,label=90:{{$b$}}] (b) at (-0.5,5){};

\node[shape=circle, draw, fill=black, scale= 0.5, font=\small,label=90:{{$z_i$}}] (zi) at (-0.5,3){};
\node[shape=circle, draw, fill=black, scale= 0.5, font=\small,label=90:{{$z'_i$}}] (z'i) at (-0.5,-1){};

 \draw[thin,black]({11}) to ({15});
\draw[thin,black]({21}) to ({25});

[decoration=snake,
   line around/.style={decoration={pre length=1,post length=1}}]

\draw[ thin,black]({11}) to (25);
\draw[thin,black]({15}) to (21);

\foreach \i in {1,2}{     
 \draw[decorate,decoration=zigzag, thin,black,bend right =120-20*\i,decorate]({2\i}) to (b);
}
\foreach \i in {3}{      
\draw[decorate,decoration=zigzag,thin,black,bend right =120-20*\i] ({2\i}) to (b);

}
\foreach \i in {4,5}{      
\draw[decorate,decoration=zigzag,thin,black,bend right =120-20*\i] ({2\i}) to (b);

}
\foreach \i in {1,2}{      
\draw[decorate,decoration=zigzag, thin,black!80,bend left =\i*30-30] ({1\i}) to (b);
}
\foreach \i in {3}{      
\draw[decorate,decoration=zigzag,  thin,black,bend left =\i*25-30] ({1\i}) to (b);

}
\foreach \i in {4,5}{      
\draw[decorate,decoration=zigzag, thin,black,bend left =\i*25-30] ({1\i}) to (b);
}

            \node[shape=circle, draw, fill=black, scale= 0.5, font=\small,label={$p_i$}] (p) at (0,1){};
            
\draw[decorate,decoration=saw, thick,red] (p) to (23);

 \node[shape=circle, scale= 0.5, font=\small,label={$P(u'_{i,5},b)$}] (lb) at (-8.4,1.5){};
\node[shape=circle, scale= 0.5, font=\small,label={}] (lb) at (-4.5,1.1){};
\node[shape=circle, scale= 0.5, font=\small,label={}] (lb) at (-4.5,2.1){};
\node[shape=circle, scale= 0.5, font=\small,label={$L-1$}] (lb) at (-6.5,3.3){};
\node[shape=circle, scale= 0.5, font=\small,label={}] (lb) at (-4.5,-1.1){};
\node[shape=circle, scale= 0.5, font=\small,label={}] (lb) at (-4.5,-2.1){};
\node[shape=circle, scale= 0.5, font=\small,label={}] (lb) at (-4.5,-2.9){};

 \node[shape=circle, scale= 0.5, font=\small,label={$P(u_{i,1},b)$}] (lb) at (8,1.5){};
\node[shape=circle, scale= 0.5, font=\small,label={}] (lb) at (4.5,1.1){};
\node[shape=circle, scale= 0.5, font=\small,label={}] (lb) at (4.5,2.1){};
\node[shape=circle, scale= 0.5, font=\small,label={$L-1$}] (lb) at (5.5,3.4){};
\node[shape=circle, scale= 0.5, font=\small,label={}] (lb) at (4.5,-1.1){};
\node[shape=circle, scale= 0.5, font=\small,label={}] (lb) at (4.5,-2.1){};
\node[shape=circle, scale= 0.5, font=\small,label={}] (lb) at (4.5,-2.9){};

\node[shape=circle, scale= 0.5, font=\small,label={\textcolor{red}{$L_p-1$}}] (lb) at (1,0.88){};

\end{tikzpicture}
\caption{An example of $H_i$ connected to bridge vertex $b$, with each path $P(u_{i,j},b)$ and $P(u'_{i,j},b)$ is of length $L$ and contains $L-1$ intermediate vertices. }\label{fig:hardstructures}
\end{figure}
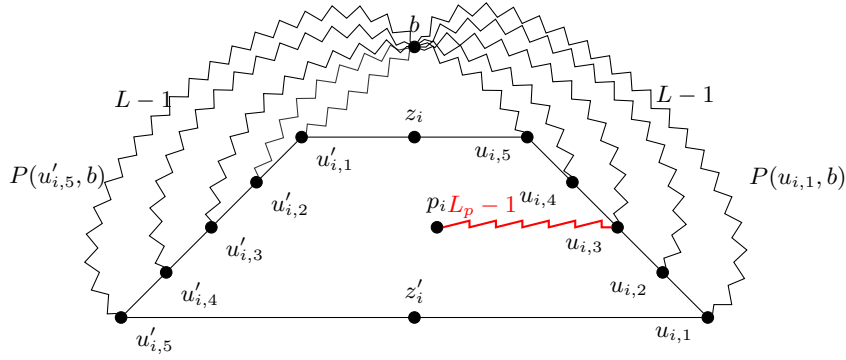

 {\mcis} is a well know problem in parameterized complexity and it is known to be  W[1]-hard when parameterized by solution size $k$ \cite{DBLP:books/sp/CyganFKLMPPS15}. We reduce {\mcis} to {\atms}. Let $I=(G,k)$ such that $V(G)$ is partitioned into $\{V_1,V_2,\dots, V_k\}$, be an instance of {\mcis}, and we need to decide if there exists an independent set $S$ containing exactly one vertex from every $V_i$, where $i\in[k]$. We may assume that for every $i\in [k]$, $|V_i|=n$, and $n$ is odd. We begin our construction by first defining the following values.
\paragraph{}

For $0\lneq\alpha \leq 0.5$, we set 
   $L=  \lceil \frac{1}{2\alpha} \cdot n \rceil$;
   $L_p= \lceil \frac{n-1}{\alpha} \rceil$.
\paragraph{}

The construction of $G'$ is as follows (Figure  \ref{fig:hardstructures}). We create a vertex $b$ and call it the bridge vertex. For every vertex set $V_i$ in $G$ we construct a gadget $H_i$ as follows.
\begin{itemize}
    \item Create a set $U'_i= \{u'_{i,j}\mid v_{i,j}\in V_i\}$ and make a path on these vertices as $(u'_{i,1},u'_{i,2},\dots u'_{i,n})$. Create another set  $U_i= \{u_{i,j}\mid v_{i,j}\in V_i\}$ and create a path $(u_{i,1},u_{i,2},\dots u_{i,n})$ as well. Connect $u'_{i,1}$ to $u_{i,n}$ by creating a vertex $z_i$ and connect $z_i$ to $u'_{i,1}$ and $u_{i,n}$. Similarly, connect $u'_{i,n}$ to $u_{i,1}$ by creating a vertex $z'_i$.
    \item create a vertex $p_i$, create a path with $L_p-1$ vertices, and connect one endpoint of this path to $p_i$ and another endpoint with vertex $u_{i,(n+1)/2}$, We denote this path from $p_i$ to $u_{i,(n+1)/2}$ by $P(p_i,u_{i,(n+1)/2})$ which include both $p_i$ and $u_{i,(n+1)/2}$. 
\end{itemize}

 For every $i\in[k]$, we connect the vertices of $H_i$ to $b$ as follows.
\begin{itemize}
    \item For every $i\in[k]$ and $j\in [n]$, connect $u_{i,j}$ to $b$ by creating a path with $L-1$ intermediate vertices, denote this path $P(u_{i,j},b)$. Thus, $d(u_{i,j},b)=L$.
     Similarly, connect $u'_{i,j}$  with vertex $b$ by  creating a path with $L-1$ intermediate vertices, and denote this path by $P(u'_{i,j},b)$. Thus, $d(u'_{i,j},b)=L$.
\end{itemize}

We now construct the terminal set $\cal T$ as follows.
\begin{itemize}
    \item  From every $H_i$ we put $\{p_i, u_{i,(n+1)/2}\}$ in $\cal T$. It is a vertex selection constraints.  
    \item Let $E_{i,j}$ be the set of edges with one endpoint in $V_i$ and another in $V_j$ in $G$, for every $i,j\in[k]$, such that $i<j$, construct $T_{i,j}=\{\{u'_{i,i'},u'_{j,j'}\}\mid i',j'\in[n] \land v_{i,i'}v_{j,j'}\in E_{i,j}\}$ and put all the pairs of $T_{i,j}$ in $\calt$.  Addition of these terminal pairs acts as edge verification constraints.
  %  \item ${\cal T }= \calt \cup \bigcup_{1\leq i<j\leq k} T_{i,j}$.
\end{itemize}
\paragraph{}

The instance $I'=(G',{\cal T},k)$. We can see that if we remove vertices $b$ and $u_{i,1}$ for every $i\in[k]$, then $G'$ becomes acyclic, thus the feedback vertex number (FVN) of $G'$ is $k+1$, thus the parameter FVN + solution size remain $O(k)$. Further, it is easy to verify that constructed graph is a planar graph as no crossing edge is constructed (see Figure \ref{fig:hardstructures}). We now move on to show the correctness of our reduction. We start with stating some observation which will help establishing equivalence. 

\begin{observation}\label{obs:shortest_distance_between_two_hi}
    For every distinct $i,j\in [k]$, for every $x\in U'_i\cup U_i$ and every $y \in U'_j\cup U_j$, the distance $d(x,y)= 2L$.
\end{observation}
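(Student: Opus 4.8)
The plan is to prove the two inequalities $d(x,y)\le 2L$ and $d(x,y)\ge 2L$ separately, the upper bound being immediate and the lower bound carrying the real content. For the upper bound I would simply note that, since $x\in U'_i\cup U_i$, the vertex $x$ is joined to $b$ by a dedicated path of length exactly $L$ (namely $P(u_{i,j},b)$ or $P(u'_{i,j},b)$), and likewise $y$ is joined to $b$ by a path of length $L$; concatenating these gives a walk from $x$ to $y$ of length $2L$, so $d(x,y)\le 2L$.

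For the lower bound I would first isolate the global structure: the bridge vertex $b$ is a cut vertex, and deleting it disconnects the gadgets $H_1,\dots,H_k$ from one another, because the only edges leaving a gadget $H_i$ lie on the connecting paths $P(u_{i,j},b)$ and $P(u'_{i,j},b)$, every one of which has $b$ as its other endpoint. Hence for $x\in H_i$ and $y\in H_j$ with $i\ne j$, every $x$--$y$ path passes through $b$, so $d(x,y)=d(x,b)+d(b,y)$. It then suffices to establish $d(x,b)=L$ for every attachment vertex $x\in U'_i\cup U_i$; the bound for $y$ is symmetric, and together they yield $d(x,y)=2L$.

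To compute $d(x,b)$ I would exploit that every internal vertex of a connecting path has degree two, so no shortest path can use it as a detour: a shortest path that enters a connecting path must traverse it in full, between $b$ and its unique attachment endpoint. A convenient way to record this is to suppress each connecting path into a single edge of length $L$ joining $b$ to its attachment vertex; this operation preserves all distances among the vertices of $U'_i\cup U_i\cup\{b\}$. In the resulting graph, $b$ is joined by length-$L$ edges to all $2n$ attachment vertices of $H_i$, while the remainder of $H_i$ (the cycle on $U_i,U'_i,z_i,z'_i$ together with the pendant toward $p_i$) carries unit edges. For an attachment vertex $x$, any route to $b$ either uses its own length-$L$ edge, or first moves to a different attachment vertex $x'$ (at cost at least $1$, since $x\ne x'$) and only then takes a length-$L$ edge, costing at least $L+1$. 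Therefore $d(x,b)=L$.

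The step I expect to be the main obstacle is this last one, namely certifying that the gadget interior offers no shortcut to $b$. This is exactly where the uniform attachment of the length parameter $L$ to every connecting path, together with the degree-two structure of their interiors, is needed: these are what guarantee that any deviation from the dedicated path $P(u_{i,j},b)$ (or $P(u'_{i,j},b)$) strictly increases the length, so that $d(x,b)$ cannot drop below $L$.
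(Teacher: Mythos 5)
Your proof is correct. The paper states this observation without proof, and your argument is exactly the intended justification: $b$ is a cut vertex separating the gadgets, so $d(x,y)=d(x,b)+d(b,y)$, and each attachment vertex is at distance exactly $L$ from $b$ because the only edges into $b$ are the ends of the length-$L$ connecting paths, so any route that deviates from the dedicated path must first reach some other attachment vertex and costs at least $L+1$.
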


\begin{definition}\label{def:Hirepresentative}
    For an $i\in[k]$, we say a vertex $w\in V(G')$ is $H_i$-representative if the following holds.
    \begin{itemize}
        \item $w\in V(P(p_i,u_{i,(n+1)/2}))$ (OR) $d(u_{i,(n+1)/2},w)\leq \frac{n-1}{2}$.
    \end{itemize}
\end{definition}

\begin{comment}
We can observe (refer Figure \ref{fig:hardstructures}
) that for every $i\in[k]$, only some of the vertices from constructed paths $P(u_{i,j},b)$ where $j\in[n]$ are at a distance at most $n-1\over 2$ from $u_{i,{{n+1}\over 2}}$.  In the following corollary of above lemma we summarize which all vertices are are at a distance at most $n-1\over 2$ from $u_{i,{{n+1}\over 2}}$.
\end{comment}

\begin{comment}
\begin{observation}\label{obs:where-Hi-rep-can_be}
   For an $i\in[k]$, if a vertex $w$ is $H_i$-representative, then one of the following holds.
   \begin{itemize}
       \item  $w$ belongs to $V(P(p_i,u_{i,{n+1\over 2}}))$.
       \item There exists a $j$ such that $1\leq j \leq {n-1 \over 2}$ and $w$ belongs to the first $j$ vertices of constructed path $P(u_{i,j},b)$ starting from $u_{i,j}$.
       \item There exists a $j$ such that $ {n+1 \over 2}\leq j \leq n$ and $w$ belongs to the first $n-j+1$ vertices of constructed path $P(u_{i,j},b)$ starting from $u_{i,j}$.
   \end{itemize}
\end{observation}
\begin{proof}
    We refer Figure \ref{fig:hardstructures}, it is easy to observe that a vertex $w$ with $d(u_{i,{n+1\over 2}},w)\leq {n-1\over 2}$ belongs to vertex sets as  mentioned above.
\end{proof}
\end{comment}

\begin{observation}\label{obs:Hi-rep-can't_be-of-two}
   No vertex in $V(G')$ can be $H_i$-representative for two or more distinct $i\in[k]$.
\end{observation}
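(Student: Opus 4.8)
The plan is to argue by contradiction: suppose some $w \in V(G')$ is simultaneously $H_i$-representative and $H_j$-representative for distinct $i,j \in [k]$, and derive a contradiction purely from distance estimates. The two facts I would lean on are, first, Observation \ref{obs:shortest_distance_between_two_hi}, which says that every $x \in U'_i \cup U_i$ and every $y \in U'_j \cup U_j$ satisfy $d(x,y) = 2L$; and second, that the hypothesis $0 < \alpha \le 0.5$ forces $\tfrac{1}{2\alpha} \ge 1$, so $L = \lceil \tfrac{1}{2\alpha} n \rceil \ge n$, whence $2L \ge 2n$ dwarfs the radius $\tfrac{n-1}{2}$ occurring in the second clause of Definition \ref{def:Hirepresentative}. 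In particular $d(u_{i,(n+1)/2}, u_{j,(n+1)/2}) = 2L \ge 2n > n-1$.

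Next I would split into cases according to which clause of Definition \ref{def:Hirepresentative} witnesses that $w$ represents $H_i$ and $H_j$. If $w$ is a representative of \emph{both} gadgets through the first clause, then $w$ lies on both pendant paths $P(p_i,u_{i,(n+1)/2})$ and $P(p_j,u_{j,(n+1)/2})$; but these paths are built from freshly created vertices of two distinct gadgets and share no vertex, so this case is immediately impossible. If $w$ represents both gadgets through the \emph{second} clause, then $d(u_{i,(n+1)/2},w) \le \tfrac{n-1}{2}$ and $d(u_{j,(n+1)/2},w) \le \tfrac{n-1}{2}$, and the triangle inequality gives $d(u_{i,(n+1)/2},u_{j,(n+1)/2}) \le n-1$, contradicting the value $2L > n-1$ computed above.

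The remaining, and I expect most delicate, case is mixed: $w$ lies on the pendant $P(p_i,u_{i,(n+1)/2})$ while $d(u_{j,(n+1)/2},w) \le \tfrac{n-1}{2}$. Here the key observation is that the pendant path is attached to the rest of $G'$ only at $u_{i,(n+1)/2}$ (the other end $p_i$ is a degree-one dead end), so any path from such a $w$ to a vertex outside $H_i$'s pendant must pass through $u_{i,(n+1)/2}$; consequently $d(w,u_{j,(n+1)/2}) = d(w,u_{i,(n+1)/2}) + d(u_{i,(n+1)/2},u_{j,(n+1)/2}) \ge 2L \ge 2n$, again contradicting $d(u_{j,(n+1)/2},w) \le \tfrac{n-1}{2}$. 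The main obstacle is therefore justifying this routing claim cleanly, i.e. confirming from the construction that every $u_{i,(n+1)/2}$-to-$b$ path has length $L$ and that the ball of radius $\tfrac{n-1}{2}$ around $u_{i,(n+1)/2}$ never reaches $b$ (since $d(u_{i,(n+1)/2},b)=L \ge n > \tfrac{n-1}{2}$); once that is in place, each representative region sits strictly inside its own gadget and the three cases above close the argument.
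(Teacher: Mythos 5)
Your proposal is correct and rests on exactly the same fact the paper uses: $d(u_{i,(n+1)/2},u_{j,(n+1)/2})=2L\ge 2n$, which is too large for any vertex to lie within the radius-$\frac{n-1}{2}$ ball (or on the pendant path, which attaches to the graph only at $u_{i,(n+1)/2}$) of two distinct gadgets. The paper states this in one sentence; your three-case analysis is just a fuller write-up of the same argument.
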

\begin{proof}
    This is because distance between $u_{i,(n+1)/2}$ and $u_{j,(n+1)/2}$ for every distinct $i,j\in[k]$ is $2L$ and $L$ is at least $n$.
  
\end{proof}

\begin{lemma}\label{lemma:Himustbethere}
    If $S$ is a terminal monitoring set of size at most $k$ for the instance $(G',{\cal{T}},k)$, then for every $i\in[k]$, there is a vertex $w\in S$ such that $w$ is an $H_i$-representative. 
\end{lemma}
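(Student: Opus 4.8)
The plan is to prove the lemma using only the vertex-selection terminal pairs $\{p_i,u_{i,(n+1)/2}\}\in\calt$; the edge-verification pairs are irrelevant here. Fix $i\in[k]$. Because $\{p_i,u_{i,(n+1)/2}\}\in\calt$ and $S$ is a solution of size at most $k$ (so it satisfies the defining $\alpha$-relaxed inequality for this pair), there is some $w\in S$ with
\[
d(p_i,w)+d(w,u_{i,(n+1)/2})\le(1+\alpha)\,d(p_i,u_{i,(n+1)/2})=(1+\alpha)L_p ,
\]
where $d(p_i,u_{i,(n+1)/2})=L_p$ since $p_i$ reaches $u_{i,(n+1)/2}$ only along the path $P(p_i,u_{i,(n+1)/2})$ of length $L_p$. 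I would then show that \emph{any} such witness $w$ is $H_i$-representative, which immediately gives the lemma.

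The structural fact I would establish first is that $u_{i,(n+1)/2}$ is a cut vertex: the path $P(p_i,u_{i,(n+1)/2})$ together with $p_i$ is attached to the rest of $G'$ only at $u_{i,(n+1)/2}$ (its interior vertices have degree two and $p_i$ has degree one). Hence, for any $w\notin V(P(p_i,u_{i,(n+1)/2}))$, every $p_i$--$w$ path must pass through $u_{i,(n+1)/2}$, giving the exact decomposition $d(p_i,w)=L_p+d(u_{i,(n+1)/2},w)$.

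Next I would split into two cases. If $w\in V(P(p_i,u_{i,(n+1)/2}))$, then $w$ is $H_i$-representative by the first clause of Definition \ref{def:Hirepresentative}, and we are done. Otherwise, substituting the decomposition into the relaxed inequality yields
\[
L_p+2\,d(u_{i,(n+1)/2},w)\le(1+\alpha)L_p,\qquad\text{so}\qquad d(u_{i,(n+1)/2},w)\le\tfrac{\alpha}{2}L_p .
\]
The decisive estimate is then arithmetic: since $L_p=\lceil(n-1)/\alpha\rceil\le(n-1)/\alpha+1$ and $\alpha\le 0.5$, we obtain $\tfrac{\alpha}{2}L_p\le\tfrac{n-1}{2}+\tfrac{\alpha}{2}\le\tfrac{n-1}{2}+\tfrac14$. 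Because $n$ is odd, $\tfrac{n-1}{2}$ is an integer, and $d(u_{i,(n+1)/2},w)$ is a non-negative integer, so this bound forces $d(u_{i,(n+1)/2},w)\le\tfrac{n-1}{2}$, i.e.\ $w$ is $H_i$-representative by the second clause of Definition \ref{def:Hirepresentative}.

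I expect the main obstacle to be making the integrality step airtight, since this is precisely why the construction requires $n$ odd and sets $L_p=\lceil(n-1)/\alpha\rceil$: the combined slack $\tfrac{\alpha}{2}$ coming from the ceiling and the $(1+\alpha)$ factor must never push the integer distance strictly above $\tfrac{n-1}{2}$, and the bound $\alpha\le0.5$ is exactly what guarantees this. The cut-vertex decomposition and the case split are routine. Finally, although the statement only asserts existence, one can additionally invoke Observation \ref{obs:Hi-rep-can't_be-of-two} to conclude that these $k$ representatives are pairwise distinct, so $S$ contains exactly one $H_i$-representative for each $i\in[k]$.
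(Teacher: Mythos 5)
Your proof is correct and follows essentially the same route as the paper: it isolates the vertex-selection pair $\{p_i,u_{i,(n+1)/2}\}$, uses that any witness off the pendant path $P(p_i,u_{i,(n+1)/2})$ must be reached through $u_{i,(n+1)/2}$, and then derives $d(u_{i,(n+1)/2},w)\le \frac{n-1}{2}+\frac14$ from $\alpha L_p\ge 2\,d(u_{i,(n+1)/2},w)$ before closing with the same integrality argument. The only differences are cosmetic (you bound the ceiling by $x+1$ where the paper writes $x+\epsilon$ with $\epsilon<1$, and you add the optional distinctness remark via Observation \ref{obs:Hi-rep-can't_be-of-two}).
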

\begin{proof}
    Recall that from every $H_i$ we put the pair $\{p_i,u_{i,{n+1\over 2}}\}$ in $\cal{T}$. 
    To satisfy this constraint there must be a vertex $w\in S$ such that $d(u_{i,{n+1\over 2}},w) + d(w,p_i)\leq (1+\alpha)\cdot d(p_i,u_{i,{n+1\over 2}})$. If $w$ belongs to $P(p_i,u_{i,{n+1\over 2}})$, then the constraint is trivially satisfied. Otherwise, the path from $p_i$ to $w$ must contain $u_{i,{n+1\over 2}}$, and then
    \begin{align*}
        d(u_{i,{n+1\over 2}},w) + d(w,p_i)&=  d(u_{i,{n+1\over 2}},w)+ d(u_{i,{n+1\over 2}},w)+  d(u_{i,{n+1\over 2}},p_i).\\
        &= d(u_{i,{n+1\over 2}},w)+ d(u_{i,{n+1\over 2}},w)+  L_p.\\
        \text{to satisfy the constraint}\\
        (1+\alpha)\cdot L_p & \geq d(u_{i,{n+1\over 2}},w)+ d(u_{i,{n+1\over 2}},w)+  L_p.\\
        (1+\alpha)\cdot L_p -L_p & \geq 2\cdot d(u_{i,{n+1\over 2}},w).\\
        \alpha\cdot L_p & \geq 2\cdot d(u_{i,{n+1\over 2}},w).\\
         \alpha\cdot \lceil{{n-1}\over {\alpha} }\rceil & \geq 2\cdot d(u_{i,{n+1\over 2}},w).\\
         \alpha\cdot (\frac{n-1}{\alpha}+\epsilon) &\geq 2\cdot d(u_{i,{n+1\over 2}},w).\\
         {{n-1}}+ \alpha \cdot \epsilon &\geq 2\cdot d(u_{i,{n+1\over 2}},w).\\
         \text{since $\epsilon$ is rounding off for ceil } & \text{ function and must be strictly less than one, and}\\
         \text{ $\alpha$ is at most $0.5$, we have}\\
         {{n-1}\over 2}+ 0.25 &\geq d(u_{i,{n+1\over 2}},w).
    \end{align*}
    Since distances are in integers, $d(u_{i,{n+1\over 2}},w)$ must be at most ${{n-1}\over 2}$.
    \end{proof}

\begin{definition}\label{def:nice}
A solution set $S$ of $I'$ is {\textit nice } if and only if the following holds.
\begin{itemize}
    \item $|S\cap U_i|=1$ for every $i\in[k]$; (AND) $S= \bigcup_{i\in[k]}S\cap U_i  $.
\end{itemize}

\end{definition}

\begin{lemma}\label{lemma:Hinotssatisfy}
   For every distinct $i,j,l\in[k]$, such that $j<l$, the following holds: if a vertex $w$ is an $H_i$-representative, then  for every $\{x,y\}\in T_{j,l}$, $d(x,w)+d(w,y)\gneq (1+\alpha)\cdot d(x,y)$, that is $w$ satisfies constraint impose by no terminal pair in $T_{j,l}$.
\end{lemma}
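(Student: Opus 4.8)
The plan is to exploit that distinct gadgets are joined only through the bridge vertex $b$, so any walk from a vertex of $H_j$ or $H_l$ to an $H_i$-representative $w$ must traverse $b$; this forces the two relevant distances to decompose additively through $b$, after which the strict inequality drops out of the chosen magnitudes of $L$ and $L_p$.

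First I would bound $d(w,b)$ from below. By Definition~\ref{def:Hirepresentative} there are two cases. If $w\in V(P(p_i,u_{i,(n+1)/2}))$, then since $p_i$ is a dead end and every internal vertex of this path has degree two, any path from $w$ to $b$ must pass through $u_{i,(n+1)/2}$, whence $d(w,b)\ge d(u_{i,(n+1)/2},b)=L$. If instead $d(u_{i,(n+1)/2},w)\le \tfrac{n-1}{2}$, the triangle inequality gives $d(w,b)\ge d(u_{i,(n+1)/2},b)-d(u_{i,(n+1)/2},w)\ge L-\tfrac{n-1}{2}$. In either case $d(w,b)\ge L-\tfrac{n-1}{2}$. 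I would also record that $w\neq b$ and that $w$ lies strictly on the $H_i$-side of $b$, which holds because $L\ge n>\tfrac{n-1}{2}$ rules out both $w=b$ and $w$ sitting in any other gadget.

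Next I would use that $b$ is a cut vertex separating the gadgets. For $\{x,y\}\in T_{j,l}$ with $j,l\neq i$, the endpoints $x$ and $y$ are $u'$-type vertices of $H_j$ and $H_l$, so every shortest path from $x$ to $w$ and from $y$ to $w$ runs through $b$; hence $d(x,w)=d(x,b)+d(b,w)$ and $d(y,w)=d(y,b)+d(b,w)$. Since the construction gives $d(x,b)=d(y,b)=L$, combining with the previous step yields
\[
d(x,w)+d(w,y)=d(x,b)+d(y,b)+2\,d(b,w)\ge 2L+2\Bigl(L-\tfrac{n-1}{2}\Bigr)=4L-n+1.
\]

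Finally I would close the argument arithmetically. Observation~\ref{obs:shortest_distance_between_two_hi} gives $d(x,y)=2L$, and from $\alpha\le 0.5$ we have $\tfrac{1}{2\alpha}\ge 1$, so $L=\lceil \tfrac{n}{2\alpha}\rceil\ge n$. Using $L\ge n$ and then $\alpha\le 0.5$ I would chain
\[
4L-n+1\ge 3L+1=(1+0.5)\cdot 2L+1\ge (1+\alpha)\cdot 2L+1>(1+\alpha)\cdot d(x,y),
\]
which is exactly the claimed strict inequality. I expect the only delicate point to be the clean justification that $w$ sits on the $H_i$-side of the cut vertex $b$ so that both distances genuinely decompose through $b$; once that is pinned down, the remainder is the arithmetic engineered by the definitions of $L$ and $L_p$.
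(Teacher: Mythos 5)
Your proposal is correct and follows essentially the same route as the paper's proof: lower-bound $d(w,b)$ by $L-\tfrac{n-1}{2}$ from the definition of $H_i$-representative, decompose both distances through the bridge vertex $b$, and conclude with the chain $4L-n+1\ge 3L+1 \ge (1+\alpha)\cdot 2L+1 > (1+\alpha)\cdot d(x,y)$ using $L\ge n$ and $\alpha\le 0.5$. You are somewhat more explicit than the paper about why $d(w,b)\ge L-\tfrac{n-1}{2}$ holds in each of the two cases and why both distances genuinely factor through the cut vertex $b$, but the argument is the same.
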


\begin{proof}
    Since $i,j,l$ are distinct, and by definition $H_i$ representative vertex $w$ is at most $n-1\over 2$ distance apart from $u_{i,{n+1\over 2}}$ or belong to $P(p_i,u_{i,{n+1\over 2}})$, in both cases $w$ is at least $L-{n-1\over 2}$ distance apart from  vertex $b$ (as $u_{i,{n+1\over 2}}$ is at least $L$ distance apart from $b$), in this case to reach $w$ from every vertex of $H_j$ and $H_l$ one has to go via $b$. And thus, for any pair $\{x,y\}\in T_{j,l}$, the distance $d(w,x)+d(w,y)$ is 

    \begin{align*}
        d(w,x)+d(w,y) &= d(w,b)+d(b,x) +d(w,b)+d(b,y).\\
                     &\geq L-{n-1\over 2} + L +L-{n-1\over 2} + L.\\
                     &\geq 4L - 2\cdot {n-1\over 2}.\\
                     & \geq 4L -n +1.\\
                     &\geq 3L+1.\text{ (As $L$ is at least $n$)}.\\
                     &\geq (1+0.5)\cdot 2L +1.\\
                     &\geq (1+\alpha)\cdot 2L+1.\\
                     &\gneq (1+\alpha)\cdot d(x,y).\\           
    \end{align*}
\end{proof}

Now consider the following observation.

\begin{observation}\label{obs:where-Hi-rep-can_be}
   For an $i\in[k]$, if a vertex $w$ is $H_i$-representative, then one of the following holds.
   \begin{itemize}
       \item  $w$ belongs to $V(P(p_i,u_{i,{n+1\over 2}}))$.
       \item There exists a $j$ such that $1\leq j \leq {n-1 \over 2}$ and $w$ belongs to the first $j$ vertices of constructed path $P(u_{i,j},b)$ starting from $u_{i,j}$.
       \item There exists a $j$ such that $ {n+1 \over 2}\leq j \leq n$ and $w$ belongs to the first $n-j+1$ vertices of constructed path $P(u_{i,j},b)$ starting from $u_{i,j}$.
   \end{itemize}
\end{observation}
\begin{proof}
    We refer to Figure \ref{fig:hardstructures} to verify that a vertex $w$ with $d(u_{i,{n+1\over 2}},w)\leq {n-1\over 2}$ belongs to vertex sets as  mentioned above.
\end{proof}

\begin{lemma}\label{lemma:nice}
    If there is a solution $S$ for instance  $(G',{\cal T}, k)$, then there exists a nice solution $S^*$ for instance  $(G',{\cal T}, k)$.
\end{lemma}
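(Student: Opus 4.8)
The plan is to start from an arbitrary solution $S$ of size at most $k$ and reshape it, one gadget at a time, into a nice one. First I would invoke Lemma~\ref{lemma:Himustbethere} together with Observation~\ref{obs:Hi-rep-can't_be-of-two}: the former guarantees that $S$ contains an $H_i$-representative for every $i\in[k]$, and the latter says that no single vertex can be representative for two distinct colours. Since $|S|\le k$, these combine to force $S=\{w_1,\ldots,w_k\}$, where $w_i$ is an $H_i$-representative and $S$ contains nothing else. I would then record which terminals each $w_i$ can possibly hit: by Lemma~\ref{lemma:Hinotssatisfy}, $w_i$ hits no edge pair in $T_{j,l}$ with $i\notin\{j,l\}$, and by the distance computation behind Lemma~\ref{lemma:Himustbethere} (applied to the selection pairs $\{p_{i'},u_{i',(n+1)/2}\}$), $w_i$ can hit the selection pair of $H_{i'}$ only for $i'=i$. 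Hence $w_i$ is relevant only to the selection pair of $H_i$ and to edge pairs involving colour $i$.

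Next I would define the replacement through an \emph{anchor}. By Observation~\ref{obs:where-Hi-rep-can_be}, every $H_i$-representative $w_i$ lies on a pendant path of $G'$ that meets the gadget cycle at a single vertex of $U_i$: at $u_{i,(n+1)/2}$ when $w_i\in V(P(p_i,u_{i,(n+1)/2}))$, and at $u_{i,j}$ when $w_i$ sits on $P(u_{i,j},b)$. Call this vertex $a_i\in U_i$ the anchor of $w_i$ and set $S^{*}=\{a_1,\ldots,a_k\}$. Immediately $|S^{*}\cap U_i|=1$ for all $i$ and $S^{*}=\bigcup_{i\in[k]}(S^{*}\cap U_i)$, so $S^{*}$ is nice in the sense of Definition~\ref{def:nice}; it remains to verify that $S^{*}$ still hits every terminal.

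The core of the argument is a domination inequality: for every edge pair $\{u'_{i,i'},u'_{j,j'}\}$ hit by $w_i$,
\[
d(u'_{i,i'},a_i)+d(a_i,u'_{j,j'})\ \le\ d(u'_{i,i'},w_i)+d(w_i,u'_{j,j'}).
\]
I would prove this from the pendant-path structure. The same-gadget endpoint $u'_{i,i'}$ reaches $w_i$ through the anchor, so $d(u'_{i,i'},w_i)=d(u'_{i,i'},a_i)+d(a_i,w_i)$; this uses that $L$ (of order $n/\alpha$) dwarfs intra-gadget cycle distances, so no shortcut through $b$ helps. For the foreign endpoint $u'_{j,j'}$ there are two cases. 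If $w_i$ lies on the $p_i$-path, its only exit is $a_i=u_{i,(n+1)/2}$, so $u'_{j,j'}$ also reaches $w_i$ through $a_i$ and the right-hand side exceeds the left by $2\,d(a_i,w_i)\ge 0$. If $w_i$ lies on $P(u_{i,j},b)$, then $u'_{j,j'}$ reaches $w_i$ via $b$, and since $d(a_i,w_i)+d(w_i,b)=L$ together with $d(a_i,u'_{j,j'})=2L$ (Observation~\ref{obs:shortest_distance_between_two_hi}), both sides equal $d(u'_{i,i'},a_i)+2L$. In either case $a_i$ hits every edge pair $w_i$ hit. Separately, a direct calculation as in Lemma~\ref{lemma:Himustbethere} shows that $a_i\in U_i$, being within distance $(n-1)/2$ of $u_{i,(n+1)/2}$, hits the selection pair $\{p_i,u_{i,(n+1)/2}\}$.

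Finally I would assemble the pieces. Each selection pair of $H_i$ is hit by $a_i$. Each edge pair $\{u'_{i,i'},u'_{j,j'}\}$ was hit in $S$, and by Lemma~\ref{lemma:Hinotssatisfy} the only candidates in $S$ are $w_i$ and $w_j$; whichever hits it, the corresponding anchor dominates and hence hits it in $S^{*}$. Thus $S^{*}$ is a terminal monitoring set, has size $k$, and is nice, proving the lemma. I expect the main obstacle to be the distance bookkeeping in the domination inequality — specifically, justifying that shortest paths into $H_i$ from a foreign gadget never take a shortcut through $b$ but enter as described, which hinges on the largeness of $L$ relative to $n$; Observation~\ref{obs:shortest_distance_between_two_hi} should make the foreign-endpoint distances routine.
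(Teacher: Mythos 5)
Your proposal is correct and follows essentially the same route as the paper: invoke Lemma~\ref{lemma:Himustbethere} and the uniqueness of representatives to identify one representative $w_i$ per gadget, replace each $w_i$ by the vertex of $U_i$ where its pendant path meets the gadget (the paper's $u_{i,(n+1)/2}$ or $u_{i,q}$, your ``anchor''), and verify via the same two-case distance bookkeeping that the replacement still hits every edge pair the representative hit. Your additional explicit checks --- that $|S|\le k$ forces $S$ to consist exactly of the $k$ representatives, and that each anchor still hits its selection pair --- are details the paper leaves implicit, and they are correct.
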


\begin{proof}
    
    Recalling Lemma \ref{lemma:Himustbethere} , For every $i\in[k]$,  $S$ must contain a vertex $w$ which is an $H_i$-representative. Further, recalling Observation \ref{obs:where-Hi-rep-can_be}, $w$ must belong to path $P(p_i,u_{i,{n+1\over2}})$, or to $P(u_{i,i'},b)$ for an $i'\in[n]$.
    Initially we make $S^*$ as a copy of $S$. For every $i\in [k]$ we do the following.
    \begin{itemize}
        \item If $H_i$-representative vertex $w$ belongs to path $P(p_i,u_{i,{n+1\over2}})$, then we replace $w$ with $u_{i,{n+1\over2}}$ in $S^*$.
        \item Else, if $H_i$-representative vertex $w$ belongs to path $P(u_{i,q},b)$ for a $q\in[n]$, then we replace $w$ with $u_{i,q}$ in $S^*$.
    \end{itemize}
    \paragraph{}

    The above replacements are safe, we argue this for a fixed $i\in[k]$, and the similar arguments hold for every $i\in[k]$. Recalling Lemma \ref{lemma:Hinotssatisfy}, $H_i$-representative $w$ can only satisfy constraint imposed by terminal pairs of $T_{i,j}$ where $i<j<k$ or $T_{j,i}$ where $1<j<i$. We will argue for a fixed $T_{i,j}$ where $i<j<k$ and similar arguments hold for all $T_{i,j}$ where $i<j<k$ and $T_{j,i}$ where $1<j<i$. By construction, every pair in $T_{i,j}$ has one vertex from $U'_i$ and other vertex from $U'_j$. Let $\{u'_{i,i'},u'_{j,j'}\}$ be a pair whose constraint is satisfied by $w$, that is $d(u'_{i,i'},w)+d(w,u'_{j,j'}) \leq (1+\alpha)\cdot d(u'_{j,j'},u'_{i,i'})$, and let $u_{i,q}$ be the replacement of $w$ in the above. Let $u_{i,q}$ be at distance $l$ from $w$, consider the following cases.\\
    \textbf{Case 1:} $w$ belongs to $P(p_i,u_{i,{n+1\over2}})$. In this case, $d(u'_{i,i'},u_{i,q})= d(u'_{i,i'},w)-l$ and $d(u_{i,q},b)=d(w,b)-l$. Further, the shortest paths from $w$ and $u_{i,q}$ to $u'_{j,j'}$ go via $b$, so we have that 
    \begin{align*}
        d(u'_{i,i'},w)+d(w,u'_{j,j'}) &= d(u'_{i,i'},w)+d(w,b)+d(b,u'_{j,j'}).\\
                     &= d(u'_{i,i'},u'_{i,q})+l+d(u_{i,q},b)+l+d(b,u'_{j,j'}).\\
                     &= d(u'_{i,i'},u'_{i,q})+d(u_{i,q},u'_{j,j'})+2l.\\      
    \end{align*}
    \textbf{Case 2:} $w$ belongs to $P(u_{i,q},b)$. In this case, $d(u'_{i,i'},u_{i,q})= d(u'_{i,i'},w)-l$ and $d(u_{i,q},b)=d(w,b)+l$. Further, the shortest paths from $w$ and $u_{i,q}$ to $u'_{j,j'}$ go via $b$, so we have that 
    \begin{align*}
        d(u'_{i,i'},w)+d(w,u'_{j,j'}) &= d(u'_{i,i'},w)+d(w,b)+d(b,u'_{j,j'}).\\
                     &= d(u'_{i,i'},u'_{i,q})+l+d(u_{i,q},b)-l+d(b,u'_{j,j'}).\\
                     &= d(u'_{i,i'},u'_{i,q})+d(u_{i,q},u'_{j,j'}).\\      
    \end{align*}

    In both the above cases $u_{i,q}$ also satisfies the constraint imposed by pair $\{u'_{i,i'},u'_{j,j'}\}$. Further, since solution $S$ contains an $H_i$-representative for every $i\in[k]$, and by Definition \ref{def:Hirepresentative}, it is easy to observe that no  vertex can be $H_i$-representative of two or more distinct $i\in[k]$, we have that the above construction of $S^*$ satisfies properties of Definition \ref{def:nice}. 
   \end{proof}

\begin{lemma}\label{lemma:notsatisfy}
    For every distinct $i,j\in [k]$, and for every $i',j'\in[n]$, the following holds:
    \begin{itemize}
        \item $d(u'_{i,i'}, u_{i,i'}) + d(u_{i,i'}, u'_{j,j'})\gneq (1+\alpha)\cdot d(u'_{i,i'}, u'_{j,j'}) $.
        \item $d(u'_{j,j'}, u_{j,j'}) + d(u_{j,j'}, u'_{i,i'}) \gneq (1+\alpha)\cdot d(u'_{i,i'}, u'_{j,j'}) $.
    \end{itemize}
\end{lemma}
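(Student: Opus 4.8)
The plan is to reduce each of the two stated inequalities to a single arithmetic comparison and then verify that comparison from the definitions of $L$ and $L_p$. The first thing I would do is compute the intra-gadget distance $d(u'_{i,i'},u_{i,i'})$ exactly. Both $u'_{i,i'}$ and $u_{i,i'}$ lie in $H_i$, and the construction places $U'_i$ and $U_i$ on a common cycle $u'_{i,1}-u'_{i,2}-\cdots-u'_{i,n}-z'_i-u_{i,1}-u_{i,2}-\cdots-u_{i,n}-z_i-u'_{i,1}$ of length $2n+2$. Walking from $u'_{i,i'}$ to $u_{i,i'}$ along either arc of this cycle costs $(n-i')+2+(i'-1)=n+1$ edges, so the two arcs are balanced and give a candidate distance of $n+1$. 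The only alternative route leaves $H_i$ through $b$, and since every attachment path to $b$ has length $L$, any such route costs at least $2L\ge 2n\ge n+1$. Hence $d(u'_{i,i'},u_{i,i'})=n+1$, and the same argument inside $H_j$ gives $d(u'_{j,j'},u_{j,j'})=n+1$.

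Next I would read off the two inter-gadget distances from Observation \ref{obs:shortest_distance_between_two_hi}: since $i\neq j$, any pair of vertices drawn one from $U'_i\cup U_i$ and one from $U'_j\cup U_j$ is at distance exactly $2L$; in particular $d(u_{i,i'},u'_{j,j'})=2L$ and $d(u'_{i,i'},u'_{j,j'})=2L$. Substituting these values, the first inequality becomes $(n+1)+2L \gneq (1+\alpha)\cdot 2L$, which after cancelling $2L$ is exactly the claim $n+1 \gneq 2\alpha L$. The remaining task is therefore purely arithmetic.

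To close the arithmetic step I would use $L=\lceil \frac{n}{2\alpha}\rceil < \frac{n}{2\alpha}+1$, so that $2\alpha L < n + 2\alpha$; since $\alpha\le 0.5$ we have $2\alpha\le 1$ and hence $2\alpha L < n+1$, which is the strict inequality required. The second stated inequality is handled identically after swapping the roles of $(i,i')$ and $(j,j')$, using $d(u'_{j,j'},u_{j,j'})=n+1$ in place of $d(u'_{i,i'},u_{i,i'})$.

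The only genuinely delicate point is the first step, the exact evaluation of $d(u'_{i,i'},u_{i,i'})$: I must ensure that no path through $b$, nor through the pendant path $P(p_i,u_{i,(n+1)/2})$, is shorter than the cyclic distance $n+1$, and that the two cyclic arcs are truly balanced for every $i'$. Both follow from the facts that the $b$-attachment paths are internally disjoint and of length $L\ge n$, that $p_i$ is a dead end off $u_{i,(n+1)/2}$, and that the connectors $z_i,z'_i$ sit symmetrically opposite the index-$1$ endpoints. Everything else is a direct substitution together with the ceiling estimate above.
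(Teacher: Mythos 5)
Your proposal is correct and follows essentially the same route as the paper: evaluate $d(u'_{i,i'},u_{i,i'})=n+1$ via the cycle through $z_i$ or $z'_i$ (noting the detour through $b$ costs $2L\ge n+1$), take both inter-gadget distances to be $2L$ from Observation~\ref{obs:shortest_distance_between_two_hi}, and reduce to the ceiling estimate $2\alpha L< n+2\alpha\le n+1$. The paper phrases the arithmetic by writing $n=2\alpha(L-\epsilon)$ with $0\le\epsilon<1$, but this is the same bound.
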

\begin{proof}
We prove the first statement, the second follows similar arguments. We have $L\geq n$ and $1\leq L$; refer Figure \ref{fig:hardstructures} to see that the distance  $d(u'_{i,i'}, u_{i,i'})$ is $1+n$ which is at most $2L$. The shortest path between $u'_{i,i'}$ and $u_{i,i'}$ will be achieved by following the path on $U'$ vertices and then path on $U$  vertices after going through either $z_i$ or $z'_i$ in the gadget $H_i$, while the path through vertex $b$ will be of length $2L$. Recalling Observation \ref{obs:shortest_distance_between_two_hi}, $d(u_{i,i'}, u'_{j,j'})= 2L$. Thus,

\begin{align*}
    d(u'_{i,i'}, u_{i,i'}) + d(u_{i,i'}, u'_{j,j'}) & = 2L +1+ n. \\
    & = 2L+ 1+  2\cdot \alpha \cdot (L -\epsilon) .\\
    & = 2L+ 1+   2\cdot \alpha \cdot L - 2\cdot \alpha \cdot \epsilon. \\
     & = ((1+\alpha)\cdot 2L )+1 -2\cdot \alpha \cdot \epsilon. 
     \\
     &\gneq (1+\alpha)\cdot 2L.
\end{align*}
In the above equations $\epsilon$ is the small addition of ceil function for computing $L$, which is strictly less than $1$.

\end{proof}

\begin{comment}
    
\begin{proof}
We prove the first statement, the second follows similar arguments. We have $L\gneq n$, as $L$ is smallest when $\alpha = 0.5$ yet assumption that $n$ is odd and we re taking the ceiling of the value. Further, $L_v\lneq L$, since $\alpha\leq 0.5$ and $n>1$. Refer Figure \ref{fig:hardstructures}, the distance  $d(u'_{i,i'}, u_{i,i'})$ is $L_v+n$ which is strictly less than $2L$. The shortest path between $u'_{i,i'}$ and $u_{i,i'}$ will be achieved by the following the path on $U'$ vertices and then path on $U$  vertices after going through $L_v$ number of connecting vertices in the gadget $H_i$, while the path through vertex $b$ will be of length $2L$. Recalling Observation \ref{obs:shortestdistance}, $d(u_{i,i'}, u'_{j,j'})= 2L$. Thus,

\begin{align*}
    d(u'_{i,i'}, u_{i,i'}) + d(u_{i,i'}, u'_{j,j'}) & = 2L +L_v+ n. \\
    & = 2L+  \lfloor  2\cdot \alpha \cdot L +1 -n \rfloor +n.\\
    & = 2L+    2\cdot \alpha \cdot L +1 -n -\epsilon +n. \\
    & = 2L+    2\cdot \alpha \cdot L +1 -\epsilon. \\
     & = ((1+\alpha)\cdot 2L )+1 -\epsilon. 
     \\
     &\gneq (1+\alpha)\cdot 2L.
\end{align*}
In the above equations $\epsilon$ is the small deduction for floor function, which is strictly less than $1$.
\qed
\end{proof}

\end{comment}

\begin{lemma}\label{lemma:satisfy}
    For every distinct $i,j\in [k]$, and for every distinct $i',i''\in[n]$ and distinct $j',j''\in[n]$, the following holds:
    \begin{itemize}
        \item $d(u'_{i,i'}, u_{i,i''}) + d(u_{i,i''}, u'_{j,j'}) \leq (1+\alpha)\cdot d(u'_{i,i'}, u'_{j,j'})$.
        \item  $d(u'_{j,j'}, u_{j,j''}) + d(u_{j,j''}, u'_{i,i'})\leq (1+\alpha)\cdot d(u'_{i,i'}, u'_{j,j'})$.
    \end{itemize}
\end{lemma}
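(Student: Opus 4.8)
The plan is to reduce both inequalities to a single arithmetic comparison by evaluating the three distances that appear. First I would observe that since $i \neq j$, Observation \ref{obs:shortest_distance_between_two_hi} immediately gives $d(u'_{i,i'}, u'_{j,j'}) = 2L$ and $d(u_{i,i''}, u'_{j,j'}) = 2L$, because each of $u'_{i,i'}, u_{i,i''}$ lies in $U'_i \cup U_i$ while $u'_{j,j'}$ lies in $U'_j$. Hence the right-hand side of the first inequality equals $(1+\alpha)\cdot 2L$, and the second summand on the left-hand side is exactly $2L$. So the whole first inequality collapses to showing $d(u'_{i,i'}, u_{i,i''}) \le 2\alpha L$.

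The main work, and the only nontrivial step, is bounding the intra-gadget distance $d(u'_{i,i'}, u_{i,i''})$. I would note that the edges added inside $H_i$ (the two paths on $U'_i$ and on $U_i$, together with $z_i$ joining $u'_{i,1}$ to $u_{i,n}$ and $z'_i$ joining $u'_{i,n}$ to $u_{i,1}$) form a single cycle of length $2n+2$ on which both $u'_{i,i'}$ and $u_{i,i''}$ sit. Traversing the cycle in its two directions yields walks of lengths $n + i'' - i' + 1$ and $n + i' - i'' + 1$, so $d(u'_{i,i'}, u_{i,i''}) \le n + 1 - |i' - i''|$. Because $i' \ne i''$ we have $|i'-i''| \ge 1$, and therefore $d(u'_{i,i'}, u_{i,i''}) \le n$. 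Only an upper bound is needed here, so any further shortcut through $b$ or through $P(p_i, u_{i,(n+1)/2})$ can only help.

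It then remains to combine the pieces. Since $L = \lceil n/(2\alpha)\rceil \ge n/(2\alpha)$, we have $2\alpha L \ge n$, and consequently $d(u'_{i,i'}, u_{i,i''}) + d(u_{i,i''}, u'_{j,j'}) \le n + 2L \le 2\alpha L + 2L = (1+\alpha)\cdot 2L = (1+\alpha)\cdot d(u'_{i,i'}, u'_{j,j'})$, which is the first inequality. The second inequality is identical after exchanging the roles of $i$ and $j$ (and of the corresponding index pairs), using the hypothesis $j' \ne j''$ in place of $i' \ne i''$.

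I expect the cyclic-distance computation to be the delicate point: one must fix the exact cyclic ordering $u'_{i,1},\dots,u'_{i,n}, z'_i, u_{i,1},\dots,u_{i,n}, z_i$ and verify that the shorter arc drops strictly below $n+1$ \emph{precisely} when the two indices differ. This is exactly the gap separating this lemma from Lemma \ref{lemma:notsatisfy}, where $i' = i''$ forces the distance up to $n+1$ and the inequality flips; everything else is routine bookkeeping with the value of $L$.
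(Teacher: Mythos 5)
Your proof is correct and follows essentially the same route as the paper's: bound the intra-gadget distance $d(u'_{i,i'},u_{i,i''})$ by $n$ via the cycle through $z_i$ or $z'_i$, use Observation~\ref{obs:shortest_distance_between_two_hi} for the two inter-gadget distances equal to $2L$, and conclude from $2\alpha L\ge n$. Your explicit computation of the two arc lengths $n+1\pm(i''-i')$, isolating why $i'\ne i''$ is what pushes the distance down to $n$ (versus $n+1$ in Lemma~\ref{lemma:notsatisfy}), is in fact slightly more careful than the paper's one-line justification.
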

\begin{proof}
    We prove the first statement, the second follows similar arguments. We have $L\geq n$ and $1\leq L$. Refer Figure \ref{fig:hardstructures}, the distance  $d(u'_{i,i'}, u_{i,i''})$ is $\leq n$ which is strictly less than $2L$. The shortest path will be achieved by following the path on $U'$ vertices and then path on $U$  vertices after going through either $z_i$ or $z'_i$ in the gadget $H_i$, while the path through vertex $b$ will be of length $2L$. Recalling Observation \ref{obs:shortest_distance_between_two_hi}, $d(u_{i,i'}, u'_{j,j'})= 2L$. Thus,

\begin{align*}
    d(u'_{i,i'}, u_{i,i''}) + d(u_{i,i''}, u'_{j,j'}) &\leq 2L +n. \\
    &\leq 2L+   2\cdot \alpha \cdot 2\cdot \alpha \cdot (L -\epsilon).\\
    & \leq 2L+    2\cdot \alpha \cdot L -2\cdot \alpha \cdot \epsilon. \\
     &\leq ((1+\alpha)\cdot 2L ) -2\cdot \alpha \cdot \epsilon. 
     \\
    &\leq (1+\alpha)\cdot 2L.
\end{align*}
In the above equations $\epsilon$ is the small addition for ceil function for calculation of $L$, which is strictly non negative. 
\end{proof}

\begin{comment}

\begin{proof}
    We prove the first statement, the second follows similar arguments. We have $L\gneq n$, because $L$ is smallest when $\alpha = 0.5$, and with the assumption that $n$ is odd and we are taking the ceiling of the value. Further, $L_v\lneq L$, since $\alpha\leq 0.5$ and $n>1$. Refer Figure \ref{fig:hardstructures}, the distance  $d(u'_{i,i'}, u_{i,i''})$ is $\leq L_v+n-1$ which is strictly less than $2L$. The shortest path will be achieved by following the path on $U'$ vertices and then path on $U$  vertices after going through $L_v$ number of connecting vertices in the gadget $H_i$, while the path through vertex $b$ will be of length $2L$. Recalling Observation \ref{obs:shortestdistance}, $d(u_{i,i'}, u'_{j,j'})= 2L$. Thus,

\begin{align*}
    d(u'_{i,i'}, u_{i,i''}) + d(u_{i,i''}, u'_{j,j'}) &\leq 2L +L_v+n-1. \\
    &\leq 2L+  \lfloor  2\cdot \alpha \cdot L +1 -n \rfloor +n-1.\\
    &\leq 2L+    2\cdot \alpha \cdot L +1 -n -\epsilon +n-1. \\
    & \leq 2L+    2\cdot \alpha \cdot L +1 -\epsilon -1. \\
     &\leq ((1+\alpha)\cdot 2L ) -\epsilon. 
     \\
    &\leq (1+\alpha)\cdot 2L.
\end{align*}
In the above equations $\epsilon$ is the small deduction for floor function, which is strictly non negative.
\qed 
\end{proof}

\end{comment}

\begin{lemma}\label{lemma:MIequivalence}
    $I$ is a yes instance of {\mcis} if and only if $I'$ is a yes instance of {\atms}.
\end{lemma}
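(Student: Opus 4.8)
The plan is to show that \emph{nice} solutions of $I'$ (Definition~\ref{def:nice}) are in exact correspondence with choices of one vertex per color class of $I$, and that under this correspondence the edge-verification terminal pairs become precisely the independence constraints. Combined with Lemma~\ref{lemma:nice}, which guarantees that $I'$ has a solution of size at most $k$ if and only if it has a nice one, this yields the claimed equivalence.

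For the forward direction, suppose $\{v_{1,s_1},\ldots,v_{k,s_k}\}$ with $v_{i,s_i}\in V_i$ is an independent set in $G$. I would take $S=\{u_{i,s_i}\mid i\in[k]\}$, a set of size $k$, and verify it is a terminal monitoring set for $\calt$. For each vertex-selection pair $\{p_i,u_{i,(n+1)/2}\}$, the vertex $u_{i,s_i}$ lies within distance $(n-1)/2$ of $u_{i,(n+1)/2}$ along $U_i$, so $d(p_i,u_{i,s_i})+d(u_{i,s_i},u_{i,(n+1)/2})=L_p+2\,|s_i-(n+1)/2|\leq L_p+(n-1)\leq (1+\alpha)L_p$, using $\alpha\lceil (n-1)/\alpha\rceil\geq n-1$; hence this pair is satisfied. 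For an edge-verification pair $\{u'_{i,i'},u'_{j,j'}\}\in T_{i,j}$, which encodes an edge $v_{i,i'}v_{j,j'}$, independence of the selection forbids $s_i=i'$ and $s_j=j'$ holding simultaneously, so without loss of generality $s_i\neq i'$, and Lemma~\ref{lemma:satisfy} then shows that $u_{i,s_i}$ satisfies this pair.

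For the backward direction, given any solution I would first invoke Lemma~\ref{lemma:nice} to pass to a nice solution $S^*=\{u_{i,s_i}\mid i\in[k]\}$, and read off the selection $\{v_{i,s_i}\mid i\in[k]\}$, one vertex per color class. To prove independence, suppose toward a contradiction that some edge $v_{i,s_i}v_{j,s_j}$ with $i<j$ is present; then $\{u'_{i,s_i},u'_{j,s_j}\}\in T_{i,j}$ is a terminal pair, and some $w\in S^*$ must satisfy it. Each $u_{l,s_l}$ is an $H_l$-representative, since its distance to $u_{l,(n+1)/2}$ is at most $(n-1)/2$, so by Lemma~\ref{lemma:Hinotssatisfy} the only members of $S^*$ that can possibly satisfy a pair of $T_{i,j}$ are $u_{i,s_i}$ and $u_{j,s_j}$. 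But Lemma~\ref{lemma:notsatisfy}, applied with $i'=s_i$ and $j'=s_j$, shows that neither $u_{i,s_i}$ nor $u_{j,s_j}$ satisfies $\{u'_{i,s_i},u'_{j,s_j}\}$, a contradiction; hence the selection is independent.

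The routine part is the two distance verifications, which are already packaged in Lemmas~\ref{lemma:notsatisfy} and~\ref{lemma:satisfy}. The crux that drives the reduction is the dichotomy these lemmas express: a chosen vertex $u_{i,s_i}$ satisfies the pair encoding edge $v_{i,i'}v_{j,j'}$ exactly when $s_i\neq i'$, so an edge pair is left unsatisfied precisely when both of its endpoints are selected. I expect the main obstacle in assembling the argument to be justifying that no vertex outside $\{u_{i,s_i},u_{j,s_j}\}$ can rescue a pair of $T_{i,j}$; this is exactly where the representative restriction of Lemma~\ref{lemma:Hinotssatisfy}, together with niceness, is essential, as it rules out distant vertices and representatives of the other colors from accidentally satisfying the constraint.
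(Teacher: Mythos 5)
Your proof is correct and follows essentially the same route as the paper: forward, map the independent set to $\{u_{i,s_i}\}$ and invoke Lemma~\ref{lemma:satisfy}; backward, pass to a nice solution via Lemma~\ref{lemma:nice}, rule out other colors' representatives, and apply Lemma~\ref{lemma:notsatisfy}. If anything, your explicit distance check for the vertex-selection pairs $\{p_i,u_{i,(n+1)/2}\}$ in the forward direction is a detail the paper's own write-up leaves implicit.
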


    \begin{proof}
    For the forward direction let $S$ be a solution for instance $I$. We construct $S'= \{u_{i,j}\mid i\in[k]\land j\in [n]\land v_{i,j}\in S\}$. Since $S$ contains exactly one vertex from every vertex set $V_i$, $S'$ contains exactly one vertex from every vertex set $U_i$ in $G'$. We argue that $S'$ is a solution for instance $I'$. Assume to the contrary that there exists a pair $(u'_{i,i'},u'_{j,j''})\in \cal T$  such that no vertex in $S'$ satisfies its constraint. In this case let $u_{i,x}$ and $u_{j,y}$ be the vertices in $S'$ from $U_i$ and $U_{j}$ respectively. By construction $u_{i,x}$ and $u_{j,y}$ are selected in $S'$ because of  $v_{i,x}$ and $v_{j,y}$ in $S$ which are not adjacent. Since every pair in $\cal T$ correspond to an edge with endpoints in different vertex sets, either $i'$ and $x$ are distinct, or $j'$ and $y$ must be distinct, and in this case by Lemma \ref{lemma:satisfy}, constraint imposed by pair $(u'_{i,i'},u'_{j,j'})$ must be satisfied, contradicting the assumption. Thus, $S'$ is a terminal monitoring set of size $k$ for the instance $I'$.
\paragraph{}

    For the other direction, let $S'$ be a be a terminal monitoring set of size $k$ for instance $I'$, by Lemma \ref{lemma:nice}, we may assume that $S'$ is a nice solution. Thus, by definition of nice solution, $S'$ contains exactly one vertex from  $U_i$ for every $i\in[k]$, and no other vertex. We now construct $S= \{v_{i,j}\mid u_{i,j}\in S'\}$. We argue that $S$ is a solution to instance $I$. Assume to the contrary that $S$ contain two vertices $v_{i,i'}$ and $v_{j,j'}$ which are adjacent, and let $u_{i,i'}$ and $u_{j,j'}$ be their corresponding vertices in $S'$, and let $i<j$. Consider the terminal pairs $T_{i,j}$, the constraints imposed by every pair in $T_{i,j}$ must be satisfied by either $u_{i,i'}$ or $u_{j,j'}$
    because vertices in $S'\setminus \{u_{i,i'},u_{j,j'}\}$ belongs to $U_q$ where $q\neq i$ and $q\neq j$, and by Observation \ref{obs:shortest_distance_between_two_hi} every vertex in $S'\setminus \{u_{i,i'},u_{j,j'}\}$ must be $2L$ distance apart from every the vertex of every terminal pair in $T_{i,j}$, hence vertices in $S'\setminus \{u_{i,i'},u_{j,j'}\}$ can satisfy constraint imposed no terminal in $T_{i,j}$. Since $S'$ is a solution, all the constraints imposed by $T_{i,j}$ are satisfied by $u_{i,i'}$ and $u_{j,j'}$, this implies that $(u'_{i,i'},u'_{j,j'})$ is not a terminal pair in $T_{i,j}$, otherwise as per Lemma \ref{lemma:notsatisfy} its constraint would not be satisfied by either of $u_{i,i'}$ or $u_{j,j'}$ as well. Thus, $(v_{i,i'},v_{j,j'})$ is not an edge in $G$, contradicting the assumption.
    This finishes the proof.
\end{proof}
\paragraph{}

The above lemma finishes the proof of Theorem \ref{thm:hardness-fvs}.

\section{Conclusion}

We introduced {\tms} and initiated an investigation of its parameterized complexity.
We obtained both hardness and tractability results. We also studied {\atms} which is a generalization of {\tms} and proved it to be W[1]-hard parameterized by feedback vertex number plus solution size when input graph is a planar graph. We leave open the parameterized complexity of {\tms} with the following parameters: pathwidth, treewidth, feedback vertex number, vertex deletion number to disjoint union of paths.

\bibliographystyle{splncs04}
\bibliography{TMS-references}
%\appendix
%\input{4appendix}

\end{document}